\newcommand{\figlayout}{\raggedright\small\sffamily}
\newcommand{\figscale}{0.8}
\newcommand{\dd}{\delta}
\newcommand{\N}{\mathbb{N}}
\newcommand{\A}{\mathcal{A}}
\newcommand{\CC}{\mathcal{C}}
\newcommand{\CB}{\mathcal{CB}}
\newcommand{\M}{\mathcal{M}}
\newcommand{\TT}{\mathcal{T}}
\newcommand{\SX}{\mathcal{S}}
\newcommand{\LCL}{\mathsf{LCL}}
\newcommand{\LOCAL}{\mathsf{LOCAL}}
\newcommand{\CONGEST}{\mathsf{CONGEST}}
\DeclareMathOperator{\poly}{poly}
\newcommand{\LLL}{\mathsf{LLL}}
\newcommand{\PathPi}{\Pi^{\mathrm{path}}}
\newcommand{\PfPi}{\Pi_{\mathrm{pf}}}
\DeclareMathOperator{\certificateBuilder}{certificateBuilder}
\DeclareMathOperator{\findUnrestrictedCertificate}{findUnrestrictedCertificate}
\DeclareMathOperator{\removePathInflexibleConfigurations}{removePathInflexibleConfigurations}
\DeclareMathOperator{\findLogCertificate}{findLogCertificate}
\DeclareMathOperator{\constantCertificate}{constantCertificate}
\newcommand{\aab}{(a : b_1,\dots,a,\dots,b_\dd)}
\declaretheorem[style=plain,numberwithin=section]{theorem}
\declaretheorem[style=plain,sibling=theorem]{lemma}
\declaretheorem[style=plain,sibling=theorem]{corollary}
\declaretheorem[style=definition,sibling=theorem]{definition}
\DeclareMathOperator{\RCP}{RCP}
\DeclareMathOperator{\leaves}{leaves}
\DeclareMathOperator{\longpaths}{long-path-nodes}
\DeclareMathOperator{\flexibility}{flexibility}
\title{Locally Checkable Problems in Rooted Trees}
\author{Alkida Balliu}
\email{alkida.balliu@gssi.it}
\affiliation{%
\institution{Gran Sasso Science Institute}
\city{L’Aquila}
\country{Italy}
}
\author{Sebastian Brandt}
\email{brandt@cispa.de}
\affiliation{%
  \institution{CISPA Helmholtz Center for Information Security}
  \city{Saarbr\"ucken}
  \country{Germany}
}
\author{Yi-Jun Chang}
\email{cyijun@nus.edu.sg}
\affiliation{%
  \institution{National University of Singapore}
  \city{}
  \country{Singapore}
}
\author{Dennis Olivetti}
\email{dennis.olivetti@gssi.it}
\affiliation{%
\institution{Gran Sasso Science Institute}
\city{L’Aquila}
\country{Italy}
}
\author{Jan Studen\'y}
\email{jan.studeny@aalto.fi}
\affiliation{%
  \institution{Aalto University}
  \city{Espoo}
  \country{Finland}
}
\author{Jukka Suomela}
\email{jukka.suomela@aalto.fi}
\affiliation{%
  \institution{Aalto University}
  \city{Espoo}
  \country{Finland}
}
\author{Aleksandr Tereshchenko}
\email{aleksandr.tereshchenko@aalto.fi}
\affiliation{%
  \institution{Aalto University}
  \city{Espoo}
  \country{Finland}
}
\begin{document}

\begin{abstract}
Consider any locally checkable labeling problem $\Pi$ in \emph{rooted regular trees}: there is a finite set of labels $\Sigma$, and for each label $x \in \Sigma$ we specify what are permitted label combinations of the children for an internal node of label $x$ (the leaf nodes are unconstrained). This formalism is expressive enough to capture many classic problems studied in distributed computing, including vertex coloring, edge coloring, and maximal independent set.

We show that the distributed computational complexity of any such problem $\Pi$ falls in one of the following classes: it is $O(1)$, $\Theta(\log^* n)$, $\Theta(\log n)$, or $n^{\Theta(1)}$ rounds in trees with $n$ nodes (and all of these classes are nonempty). We show that the complexity of any given problem is the same in all four standard models of distributed graph algorithms: deterministic $\LOCAL$, randomized $\LOCAL$, deterministic $\CONGEST$, and randomized $\CONGEST$ model. In particular, we show that randomness does not help in this setting, and the complexity class $\Theta(\log \log n)$ does not exist (while it does exist in the broader setting of general trees).

We also show how to systematically determine the complexity class of any such problem~$\Pi$, i.e., whether $\Pi$ takes $O(1)$, $\Theta(\log^* n)$, $\Theta(\log n)$, or $n^{\Theta(1)}$ rounds. While the algorithm may take exponential time in the size of the description of $\Pi$, it is nevertheless practical: we provide a freely available implementation of the classifier algorithm, and it is fast enough to classify many problems of interest.
\end{abstract}

\maketitle

\section{Introduction}

We aim at \emph{systematizing} and \emph{automating} the study of computational complexity in the field of distributed graph algorithms. Many key problems of interest in the field are \emph{locally checkable}. While it is known that questions related to the distributed computational complexity of locally checkable problems are \emph{undecidable} in general graphs~\cite{NaorStockmeyer95, Brandt2017}, there is no known obstacle that would prevent one from completely automating the study of locally checkable problems in \emph{trees}. Achieving this is one of the major open problems in the field: currently only parts of the complexity landscape are known to be decidable~\cite{CP19timeHierarchy}, and the general decidability results are primarily of theoretical interest; \emph{practical} automatic techniques are only known for specific families of problems~\cite{Brandt2017,binary_lcls, lcls_on_paths_and_cycles}.

In this work we show that the study of locally checkable graph problems can be completely automated in \emph{regular rooted trees}. We not only give a full classification of the distributed complexity of any such problem (in all the usual models of distributed computing: deterministic and randomized $\LOCAL$ and $\CONGEST$), but we also present an algorithm that can automatically determine the complexity class of any given problem (with one caveat: our algorithm determines if the complexity is $n^{\Theta(1)}$, but not the precise exponent in this case). Even though the algorithm takes in the worst case exponential time in the size of the problem description, it is nevertheless practical: we have implemented it for the case of binary trees, and it is in practice very fast, classifying e.g.\ the sample problems that we present here in a matter of milliseconds~\cite{AnonymousRepo}.

\subsection{Setting}

In this work we study locally checkable problems defined in regular, unlabeled, not necessarily balanced, rooted trees of bounded degree. For our purposes, such a problem $\Pi$ is specified as a triple $(\delta, \Sigma, C)$, where $\delta \in \N$ is the number of children for the internal nodes, $\Sigma$ is a finite set of \emph{labels}, and $C$ is the set of permitted \emph{configurations}. Each configuration looks like $x : y_1 y_2 \dotsb y_\delta$, indicating that if the label of an internal node is $x$, then one of the possible labelings for its $\delta$ children is $y_1, y_2, \dotsc, y_\delta$, in some order (that is, the order of the children does not matter). The leaf nodes are unconstrained.

The reason why we choose this specific setting is the following. As soon as we consider \emph{inputs}, it is known that decidability questions become much harder \cite{balliu19lcl-decidability,Chang20}, and since even the case with no inputs is still not understood, we try to understand this setting first. Moreover, it is possible to use non-regular trees to encode trees with inputs, and for this reason we constrain only nodes with exactly $\delta$ children.

\subsection{Example: 3-coloring}\label{ssec:3col-example}

Consider the problem of $3$-coloring binary trees, i.e., trees in which internal nodes have $\delta = 2$ children. The possible labels of the nodes are $\Sigma = \{1,2,3\}$. The color of a node has to be different from the colors of any of its children; hence we can write down the set of configurations e.g.\ as follows:
\begin{equation}\label{eq:3col-example}
\begin{split}
    C = \bigl\{
    &1 : 22,\,
    1 : 23,\,
    1 : 33,\\[-1mm]
    &2 : 11,\,
    2 : 13,\,
    2 : 33,\\[-1mm]
    &3 : 11,\,
    3 : 12,\,
    3 : 22
    \bigr\}.
\end{split}
\end{equation}
We emphasize that the ordering of the children is irrelevant here; hence $1:23$ and $1:32$ are the same configuration. It is easy to verify that this is a straightforward correct encoding of the $3$-coloring problem in binary trees.

It is well-known that this problem can be solved in the $\LOCAL$ model of distributed computing in $O(\log^* n)$ rounds in rooted trees \cite[Section 3.4]{barenboim13distributed}, using the technique by \citet{ColeVishkin86}, and this is also known to be tight, both for deterministic and randomized algorithms \cite{Naor1991,Linial92}.

One can also in a similar way define the problem of $2$-coloring binary trees; it is easy to check that this is a global problem, with complexity $\Theta(n)$ rounds:
\begin{equation}\label{eq:2col-example}
    C = \bigl\{
    1 : 22,\,
    2 : 11
    \bigr\}.
\end{equation}

\subsection{Example: maximal independent set}\label{ssec:mis-example}

Let us now look at a bit more interesting problem: maximal independent sets (MIS). Let us again stick to binary trees, i.e., $\delta = 2$ children. The first natural idea for encoding MIS as a locally checkable problem would be to try to use only two labels, $0$ and $1$, with $1$ indicating that a node is in the independent set, but this is not sufficient to express both the notion of independence and the notion of maximality. However, three labels will be sufficient to correctly capture the problem. We set $\Sigma = \{1,a,b\}$, with $1$ indicating that a node is in the independent set, and choose the following configurations:
\begin{equation}\label{eq:mis-example}
    C = \bigl\{
    1 : aa,\,
    1 : ab,\,
    1 : bb,\,
    a : bb,\,
    b : b1,\,
    b : 11
    \bigr\}.
\end{equation}
Now it takes a bit more effort to convince oneself that this indeed correctly captures the idea of maximal independent sets. The key observations are these: a node with label $1$ cannot be adjacent to another node with label $1$, a node with label $a$ has to have $1$ above it, and a node with label $b$ has to have $1$ below it, so nodes with label $1$ clearly form a maximal independent set. Conversely, given any maximal independent set $X$ we can find a corresponding label assignment if we first assign labels $1$ to nodes in $X$, then assign labels $b$ to the parents of the nodes in $X$, and finally label the remaining nodes with label $a$. The only minor technicality is that this labeling corresponds to an MIS only for internal nodes of the tree, but as is often the case, once the internal parts are solved correctly, one can locally fix the labels near the root and the leaves.

Maximal independent set is a well-known symmetry-breaking problem, and e.g.\ in the case of a directed path ($\delta = 1$) it is known to be as hard as e.g.\ $3$-coloring. Hence one might expect that MIS on rooted regular binary trees also has got the complexity of $\Theta(\log^* n)$ rounds in the $\LOCAL$ model. \emph{This is not the case---maximal independent set in rooted binary trees can be solved in constant time!} Indeed, this is a good example of a non-trivial constant-time-solvable problem. It can be solved in exactly $4$ rounds, using the following idea (again, omitting some minor details related to what happens e.g.\ near the root).

\begin{figure}
\figlayout

(a) Label the nodes with $4$-bit strings based on how to reach them:
\vspace{-4mm}
\begin{center}
\includegraphics[page=14,scale=\figscale]{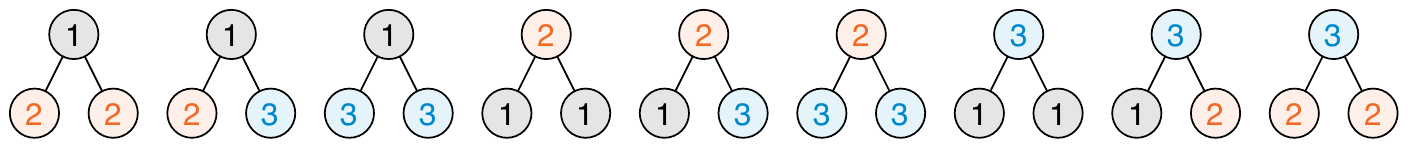}
\end{center}
\bigskip
\bigskip
\bigskip

(b) Choose the corresponding element from the 16-element string \eqref{eq:mis-solution}:
\vspace{-4mm}
\begin{center}
\includegraphics[page=15,scale=\figscale]{figs.pdf}
\end{center}

\caption{Finding a maximal independent set in $O(1)$ rounds (Section~\ref{ssec:mis-example}).}\label{fig:mis-alg}
\Description{}
\end{figure}

First, we need to pick some consistent way of referring to your ``left'' child and the ``right'' child (for this reason, we can assume that a port numbering is available, that is, a node can send a message to a specific child, by indexing it with a number from $1$ to $\delta$, or we can assume that nodes have unique identifiers, and then we can order the children by their unique identifiers).
Label all nodes first with an empty string. Then we repeat the following step for $4$ times: add $0$ to your string and send it to your left child, and add $1$ to your string and send it to your right child. Your new label is the label that you received from your parent. This way all nodes get labeled with a $4$-bit string (see \autoref{fig:mis-alg}a). A key property is this: if my string is $xyzw$, the string of my parent is $0xyz$ or $1xyz$. Finally, interpret the binary string as a number between $0$ and $15$, and output the corresponding element of the following string (using $0$-based indexing; see \autoref{fig:mis-alg}b):
\begin{equation}\label{eq:mis-solution}
b\,1\,a\,b\,b\,b\,1\,b\,b\,1\,1\,b\,b\,b\,1\,b.
\end{equation}
One can verify the correctness of the algorithm by checking all $2^3$ possible cases: for example, if a node is labeled with $x010$, it will output either symbol $2$ of \eqref{eq:mis-solution}, which is $a$, or symbol $10$, which is $1$. Its two children will have labels $0100$ and $0101$, so they will output symbols $4$ and $5$ of \eqref{eq:mis-solution}, which are $b$ and $b$. This results in a configuration $a:bb$ or $1:bb$, both of which are valid in \eqref{eq:mis-example}.

The key point of the example is this: even though the algorithm is somewhat involved, we can use the computer program accompanying in this work to \emph{automatically} discover this algorithm and to determine that this problem is indeed constant-time solvable! Also, this problem demonstrates that there are $O(1)$-round-solvable locally checkable problems in rooted regular trees that require strictly more than zero rounds, while e.g.\ in the previously-studied family of binary labeling problems \cite{binary_lcls} all $O(1)$-round-solvable problems are known to be zero-round solvable.

\subsection{Example: branch 2-coloring}\label{ssec:branch-example}

As the final example, let us consider the following problem, with $\delta = 2$ and $\Sigma = \{1,2\}$:
\begin{equation}\label{eq:branch-example}
    C = \bigl\{
    1 : 12,\,
    2 : 11
    \bigr\}.
\end{equation}
This problem is, in essence, $2$-coloring with a choice: starting with a node of label $1$ and going downwards, there is always a monochromatic path labeled with $1,1,1,1,\dotsc$, and a properly colored path labeled with $1,2,1,2,\dotsc$. It turns out that the choice makes enough of a difference: the complexity of this problem is $\Theta(\log n)$ rounds. We encourage the reader to come up with an algorithm and a matching lower bound---with our techniques we get a tight result immediately.


\begin{table}
\caption{An overview of the landscape and decidability of the round complexity of $\LCL$ problems in the $\LOCAL$ model. The case studied in the present work (unlabeled, rooted, regular trees) is highlighted with shading, and the darker shade indicates the key new results. The decidability is given assuming P $\ne$ PSPACE $\ne$ EXPTIME. We have listed a few key references for each column, focusing on decidability aspects; the overall picture of the complexity landscape is the result of a long sequence of papers, including \cite{ColeVishkin86, Linial92, Naor1991, CKP19exponential, BFHKLRSU16, BBOS18almostGlobal, BHKLOS18lclComplexity, BBOS20paddedLCL}.}\label{tab:overview}
\centering
\newcommand{\mysp}{4.0mm}
\newcommand{\myss}{1.0mm}
\newcommand{\mysl}{2.0mm}
\newcommand{\hsp}{\hspace{\mysp}}
\newcommand{\hs}{\hspace{\myss}\hspace{\myss}}
\newcommand{\hsl}{\hspace{\mysl}}
\newcommand{\plog}{\log^\alpha}
\newcommand{\yy}{$\checkmark$}
\newcommand{\kludge}{\\[-0.02mm]}
\newcolumntype{t}{@{}>{\columncolor{ACMLightBlue!20!white}[\myss][\myss]}c@{}}
\newcolumntype{T}{@{\hspace{-\myss}}>{\columncolor{ACMLightBlue!50!white}[0pt][0pt]}c@{\hspace{\myss}}}
\newcommand{\hl}[1]{\multicolumn{1}{@{}T@{}}{#1}}
\begin{tabular}{@{}ll@{\hsp}c@{\hs}c@{\hs}c@{\hs}c@{\hsp}c@{\hs}c@{\hs}t@{\hs}c@{\hs}c@{\hsp}c@{}}
\toprule
\textbf{Setting}
&
& \multicolumn{4}{@{}l@{}}{\emph{Paths and cycles}}
& \multicolumn{5}{@{}l@{}}{\emph{Trees}}
& \emph{General} \\
\cmidrule(r{\mysp}){3-6}\cmidrule(r{\mysp}){7-11}\cmidrule{12-12}
& no input           & \yy & \yy & \yy &     & \yy & \yy &    \yy & \yy &     &     \kludge
& regular                   & \yy & \yy &     &     & \yy & \yy &    \yy & \yy &     &     \kludge
& directed or rooted        & \yy & \yy &     &     &     &     &    \yy &     &     &     \kludge
& binary output             & \yy &     &     &     &     & \yy &        &     &     &     \kludge
& homogeneous               &     &     &     &     & \yy &     &        &     &     &     \\
\midrule
\textbf{Complexity}
& $O(1)$                    & D+R & D+R & D+R & D+R & D+R & D+R &    D+R & D+R & D+R & D+R \kludge
\textbf{classes}
& $\cdots$                  & --- & --- & --- & --- & --- & --- &    --- & --- & --- & --- \kludge
& $\Theta(\log \log^* n)$   & --- & --- & --- & --- & --- & --- &\hl{---}& ?   & ?   & D+R \kludge
& $\cdots$                  & --- & --- & --- & --- & --- & --- &\hl{---}& ?   & ?   & D+R \kludge
& $\Theta(\log^* n)$        & --- & D+R & D+R & D+R & D+R & --- &    D+R & D+R & D+R & D+R \kludge
& $\cdots$                  & --- & --- & --- & --- & --- & --- &    --- & --- & --- & --- \kludge
& $\Theta(\log \log n)$     & --- & --- & --- & --- & R   & R   &\hl{---}& R   & R   & R   \kludge
& $\Theta(\plog \log n)$    & --- & --- & --- & --- & --- & --- &    --- & --- & --- & ?   \kludge
& $\cdots$                  & --- & --- & --- & --- & --- & --- &    --- & --- & --- & --- \kludge
& $\Theta(\log n)$          & --- & --- & --- & --- & D+R & D+R &    D+R & D+R & D+R & D+R \kludge
& $\cdots$                  & --- & --- & --- & --- & --- & --- &    --- & --- & --- & D+R \kludge
& $\Theta(n^{1/k})$         & --- & --- & --- & --- & --- & --- &\hl{($n$)}& ?   &($n$)& D+R \kludge
& $\cdots$                  & --- & --- & --- & --- & --- & --- &    --- & --- & --- & D+R \kludge
& $\Theta(n)$               & D+R & D+R & D+R & D+R & --- & D+R &    D+R & D+R & D+R & D+R \\
\midrule
\textbf{Decidability}
& P                         & \yy & \yy & \yy & --- & ?   & (D) &    ?   & ?   & --- & --- \kludge
& PSPACE                    & \yy & \yy & \yy & ?   & ?   & (D) &    ?   & ?   & --- & --- \kludge
& EXPTIME                   & \yy & \yy & \yy & ?   & ?   & (D) &\hl{($k$)}& ?   & ?   & --- \kludge
& decidable                 & \yy & \yy & \yy & \yy & ?   & (D) &\hl{\yy}& (H) & (H) & --- \\
\midrule
\textbf{References} &
& \cite{NaorStockmeyer95,Brandt2017,binary_lcls}
& \cite{NaorStockmeyer95,Brandt2017}
& \cite{lcls_on_paths_and_cycles}
& \cite{balliu19lcl-decidability}
& \cite{BHOS19HomogeneousLCL}
& \cite{binary_lcls}
& this
& \cite{Chang20,CP19timeHierarchy}
& \cite{Chang20,CP19timeHierarchy}
& \cite{NaorStockmeyer95,Brandt2017}
\kludge
& & & & & & & & work & & &
\\
\midrule
\textbf{Legend}
& \multicolumn{11}{l@{}}{\yy = yes} \\
& \multicolumn{11}{l@{}}{? = unknown} \\
& \multicolumn{11}{l@{}}{--- = not possible} \\
& \multicolumn{11}{l@{}}{$\alpha > 1$} \\
& \multicolumn{11}{l@{}}{$k = 2, 3, \dotsc$} \\
& \multicolumn{11}{l@{}}{D = class exists for deterministic algorithms} \\
& \multicolumn{11}{l@{}}{R = class exists for randomized algorithms} \\
& \multicolumn{11}{l@{}}{($n$) = the current construction assumes the knowledge of $n$; unknown without this information} \\
& \multicolumn{11}{l@{}}{($k$) = does not determine the value of $k$ for the class $\Theta(n^{1/k})$} \\
& \multicolumn{11}{l@{}}{(D) = known only for deterministic complexities, unknown for randomized} \\
& \multicolumn{11}{l@{}}{(H) = known only for classes between $\Omega(\log n)$ and $O(n)$} \\
\bottomrule
\end{tabular}
\end{table}

\subsection{Contributions}

As we have seen, the family of locally checkable problems in regular rooted trees is rich and expressive. Using auxiliary labels similar to what we saw in the MIS example in Section~\ref{ssec:mis-example}, we can encode, in essence, any locally checkable labeling problem ($\LCL$ problem) \cite{NaorStockmeyer95} in the classic sense, as long as the problem is such that the interesting part is related to what happens in the internal parts of regular trees. We have already seen that there are problems with at least four distinct complexity classes: $O(1)$, $\Theta(\log^* n)$, $\Theta(\log n)$, and $\Theta(n)$. In \autoref{sec:poly-region} we also show how to generate problems of complexity $\Theta(n^{1/k})$ for any $k = 1, 2, 3, \dotsc$

We prove in this work that \emph{this list is exhaustive}: any problem that can be represented in our formalism has complexity $O(1)$, $\Theta(\log^* n)$, $\Theta(\log n)$, or $\Theta(n^{1/k})$ in rooted regular trees with $n$ nodes. This is a \emph{robust} result that does not depend on the specific choice of the model of computing: the complexity of a given problem is the same, regardless of whether we are looking at the $\LOCAL$ model or the $\CONGEST$ model, and regardless of whether we are using deterministic or randomized algorithms.

One of the surprising consequences is that \emph{randomness does not help} in rooted regular trees. In unrooted regular trees there are problems (the canonical example being the \emph{sinkless orientation} problem) that can be solved with the help of randomness in $\Theta(\log \log n)$ rounds, while the deterministic complexity is $\Theta(\log n)$~\cite{Brandt2017}. This class of problems disappears in rooted trees.

Our main contribution is that the \emph{complexity of any given problem in this formalism is decidable}: there is an algorithm that, given the description of a problem $\Pi$ as a list of permitted configurations, outputs the computational complexity of problem $\Pi$, putting it in one of the four possible classes, i.e., determines whether the complexity is $O(1)$, $\Theta(\log^* n)$, $\Theta(\log n)$, or $\Theta(n^{1/k})$ for some $k$; in the fourth case our algorithm does not determine the exponent $k$, but then one could (at least in principle) use the more general decision procedure by \citet{Chang20} to determine the value of $k$.

While our algorithm takes in the worst case exponential time in the size of the description of $\Pi$, the approach is nevertheless \emph{practical}. We have \emph{implemented} the algorithm for the case of $\delta = 2$, and made it freely available online~\cite{AnonymousRepo}. Even though it is not at all optimized for performance, it classifies for example all of our sample problems above in a matter of \emph{milliseconds}.

We summarize our key results and compare them with prior work in Table~\ref{tab:overview}.

\section{Related work}

\subsection{Landscape of LCL problems in the LOCAL model}

\subsubsection{Paths and cycles}

We know that, on graph families such as paths and cycles, there are $\LCL${}s with complexities (both deterministic and randomized) of $O(1)$ (e.g., trivial problems), $\Theta(\log^* n)$~\cite{ColeVishkin86, Linial92, Naor1991} (e.g., $3$-coloring), and $\Theta(n)$ (e.g., global problems such as properly orienting a path/cycle).  Moreover, there are complexity \emph{gaps}, that is, in these families of graphs, there are no $\LCL${}s with round complexity between $\omega(1)$ and $o(\log^* n)$~\cite{NaorStockmeyer95}, and between $\omega(\log^* n)$ and $o(n)$~\cite{CKP19exponential}. These works show that the only possible complexities for $\LCL$ problems on paths and cycles are $O(1)$, $\Theta(\log^* n)$, and $\Theta(n)$, and randomness does not help in solving problems faster.

\subsubsection{Trees}

For the case of the graph family of trees almost everything is understood nowadays. As in the case of paths and cycles, we have $\LCL${}s with time complexities (both deterministic and randomized) $O(1)$, $\Theta(\log^* n)$, and $\Theta(n)$. On trees, we know that there is more: there are $\LCL$ problems with both deterministic and randomized complexity of $\Theta(\log n)$ (e.g., problems of the form ``copy the input of the nearest leaf''), and $\Theta(n^{1/k})$ for any $k \ge 2$~\cite{CP19timeHierarchy}. Moreover, there are cases where randomness helps, in fact there are problems that have $\Theta(\log n)$ deterministic and $\Theta(\log\log n)$ randomized complexity~\cite{BFHKLRSU16}. As far as gaps are concerned, let us first consider the spectrum of time complexities of $\omega(\log^* n)$, and then the one of $o(\log^* n)$. \citet{CKP19exponential} showed that the deterministic complexity of any $\LCL$ problem on bounded-degree trees is either $O(\log^* n)$ or $\Omega(\log n)$, while its randomized complexity is either $O(\log^* n)$ or $\Omega(\log \log n)$. Moreover, \citet{CP19timeHierarchy} showed that any algorithm that takes $n^{o(1)}$ rounds can be sped up to run in $O(\log n)$ rounds. \citet{BBOS18almostGlobal} showed that there is a gap between $\omega(\sqrt{n})$ and $o(n)$ for deterministic algorithms, and \citet{Chang20} extended these results and showed that there is a gap between $\omega(n^{1/k})$ and $o(n^{1/(k-1)})$, for any $k \ge 2$, for both deterministic and randomized algorithms.
The spectrum of time complexities of $o(\log^* n)$ is still not entirely understood. \citet{CP19timeHierarchy} showed that ideas similar to \citet{NaorStockmeyer95} can be used to prove that there are no $\LCL${}s on bounded-degree trees with time complexity between $\omega(1)$ and $o(\log \log^* n)$. Also, in the same paper, the authors conjectured that it should be possible to extend this gap up to $o(\log^* n)$. While this still remains an open question, \citet{BHOS19HomogeneousLCL} showed that such a gap exists for a special subclass of $\LCL${}s, called \emph{homogeneous} $\LCL${}s. Observe that all the mentioned results hold for the setting of \emph{unrooted} trees, and in this setting there are still many open questions related to \emph{decidability}.
In this work, we prove decidability results for a restriction of this setting, that is, for \emph{rooted} trees.

\subsubsection{General graphs}
In general bounded-degree graphs there are $\LCL${}s with the same time complexity as in trees, so the question is if there are also the same gaps, or if in the case of general graphs we have a denser spectrum of complexities. First of all, the gaps of the lower spectrum on trees hold also on general graphs: we still have the $\omega(1)$ \--- $o(\log\log^* n)$ gap for both deterministic and randomized algorithms, the $\omega(\log^* n)$ \--- $o(\log n)$ for deterministic algorithms, and the $\omega(\log^* n)$ \--- $o(\log\log n)$ gap for randomized algorithms. Also, \citet{CP19timeHierarchy} showed that any $o(\log n)$-round randomized algorithm can be sped up to run in $O(T_{\LLL})$ rounds, where $T_{\LLL}$ is the time required for solving with randomized algorithms the distributed constructive Lov\'asz Local Lemma problem ($\LLL$) \cite{CPS17DistrLLL} under a polynomial criterion. By combining this result with the results on the complexity of $\LLL$ by \citet{FischerGhaffari17LLL} and the network decomposition one by \citet{RG20NetDecomposition}, we get a gap for randomized algorithms between $\omega(\poly(\log\log n))$ and $o(\log n)$. \citet{BHKLOS18lclComplexity} showed that, differently from the case of trees, the regions between $\omega(\log\log^* n)$ and $o(\log^* n)$  and between $\omega(\log n)$ and $o(n)$ are dense. In fact, for any complexity $T$ in these regions, it is possible to define an $\LCL${} with a time complexity that is arbitrary close to $T$. Also, in the case of trees, randomness either helps exponentially or not at all, while in the case of general graphs this is not the case anymore. In fact, \citet{BBOS20paddedLCL} showed that there are $\LCL$ problems on general graphs where randomness helps only polynomially by defining $\LCL${}s with deterministic complexity $\Theta(\log^k n)$ and randomized complexity $\Theta(\log^{k-1} n\log\log n)$, for any integer $k\ge 1$.

\subsubsection{Special settings}

Over the years, researchers have investigated the complexity of interesting subclasses of $\LCL${}s. We already mentioned homogeneous $\LCL${}s on trees~\cite{BHOS19HomogeneousLCL}, that, on a high level, are $\LCL${}s for which the hard instances are $\Delta$-regular trees. For this subclass of $\LCL$ problems, the spectrum of deterministic complexities consists of $O(1)$, $\Theta(\log^* n)$, and $\Theta(\log n)$. Also, as in the case of trees, there are cases where randomness helps: there are homogeneous $\LCL${}s with $\Theta(\log n)$ deterministic and $\Theta(\log\log n)$ randomized complexity. These are the only possible complexities for homogeneous $\LCL${}s. \citet{Brandt2017} studied $\LCL${}s on $d$-dimensional torus grids, and showed that there are $\LCL${}s with complexity (both deterministic and randomized) $O(1)$, $\Theta(\log^* n)$, and $\Theta(n^{1/d})$. The authors showed that these are the only possible complexities, implying that randomness does not help. \citet{binary_lcls} studied \emph{binary labeling problems}, that are $\LCL${}s that can be expressed with no more than two labels in the \emph{edge labeling} formalism \cite{BBHORS19MMlowerBound,Olivetti2019REtor} (such $\LCL${}s include, for example, sinkless orientation). The authors showed that, in trees, there are no such $\LCL${}s with deterministic round complexity between $\omega(1)$ and $o(\log n)$, and between $\omega(\log n)$ and $o(n)$, proving that the spectrum of deterministic complexities of binary labeling problems in bounded-degree trees consists of $O(1)$, $\Theta(\log n)$ and $\Theta(n)$. The authors also studied the randomized complexity of binary labeling problems that have deterministic complexity $\Theta(\log n)$, showing that for some of them randomness does not help, while for some others it does help (note that from previous work we know that, in this case, randomness either helps exponentially or not at all). Determining the tight randomized complexity of all binary labeling problems is still an open question.

\subsection{Decidability of LCL problems}
As we have seen, there are often gaps in the spectrum of distributed complexities of $\LCL${}s. Hence, a natural question that arises is the following: given a specific $\LCL$, can we decide on which side of the gap it falls? In other words, are these classifications of $\LCL$ problems decidable? We can push this question further and ask whether it is possible to automate the design of distributed algorithms for optimally solving $\LCL{}$s. There is a long line of research that has investigated these kind of questions.

For graph families that consist of \emph{unlabeled} paths and cycles (that is, nodes do not have any label in input), the complexity of a given $\LCL$ is decidable~\cite{NaorStockmeyer95, Brandt2017, lcls_on_paths_and_cycles}. The next natural question is whether we have decidability in the case of trees (rooted or not). Because the structure of a tree can be used to encode input labels, researchers had to first understand the role of input labels in decidability. For this purpose, \citet{balliu19lcl-decidability} studied the decidability of \emph{labeled} paths and cycles, showing that the complexity of $\LCL${}s in this setting is decidable, but it is PSPACE-hard to decide it, and this PSPACE-hardness result extends also for the case of bounded-degree unlabeled trees (since the structure of the tree may encode input labels). The authors also show how to automate the design of asymptotically optimal distributed algorithms for solving $\LCL${}s in this context. Later, \citet{Chang20} improved these results showing that, in this setting, it is EXPTIME-hard to decide the complexity of $\LCL${}s. While the decidability on bounded degree trees is still an open question, there are some positive partial results in this direction. In fact, \citet{CP19timeHierarchy} along with the $\omega(\log n)$ \--- $n^{o(1)}$ gap, showed also that we can decide on which side of the gap the complexity of an $\LCL$ lies. Moreover, \citet{binary_lcls} showed that, the deterministic complexity of binary labeling problems on trees is decidable and we can automatically find optimal algorithms that solve such $\LCL${}s. The works of \citet{Brandt19RE} and \citet{Olivetti2019REtor} played a fundamental role in further understanding to which extent we can automate the design of algorithms that optimally solve $\LCL{}$s.

Unfortunately, in general, the complexity of an $\LCL$ is not decidable. In fact, Naor and Stockmeyer showed that, even on unlabeled non-toroidal grid graphs, it is undecidable whether the complexity of a given $\LCL$ is $O(1)$~\cite{NaorStockmeyer95}. For unlabeled toroidal grids, \citet{Brandt2017} showed that, given an $\LCL$, it is decidable whether its complexity is $O(1)$, but it is undecidable whether its complexity is $\Theta(\log^* n)$ or $\Theta(n)$. On the positive side, the authors showed that, given an $\LCL$ with round complexity $O(\log^* n)$, one can automatically find an $O(\log^* n)$ rounds algorithm that solves it.

\section{Road map}
We will start by providing some useful definitions in Section~\ref{sec:definitions}. Then, in Section~\ref{sec:superlog} we will consider the spectrum of complexities in the $\Omega(\log n)$ region (that is, $\Theta(\log n)$ and above). We will define an object called \emph{certificate for $O(\log n)$ solvability}, for which we will prove, in \autoref{thm:superlogdecidability}, that we can decide the existence in polynomial time. We will prove in \autoref{path-flexible-form-impl-log(n)-ub} that, if such a certificate for a problem exists, then the problem can be solved in $O(\log n)$ time with a deterministic algorithm, even in the $\CONGEST$ model, while if such a certificate does not exist then we will prove in \autoref{no-path-flexible-impl-n-lb} that the problem requires $n^{\Omega(1)}$ rounds, even in the $\LOCAL$ model and even for randomized algorithms. By combining these results, we will essentially obtain a \emph{decidable gap} between $\omega(\log n)$ and $n^{o(1)}$ that is robust on the choice of the model.

We will then consider, in Section~\ref{sec:sublog}, the spectrum of complexities in the $O(\log n)$ region. We will define the notion of \emph{certificate for $O(\log^* n)$ solvability}, and we will prove, in \autoref{certificate_search_runtime}, that we can decide in exponential time if such a certificate exists. We will also prove, in \autoref{log*_algorithm}, that the existence of such a certificate implies a deterministic $O(\log^* n)$ algorithm for the $\CONGEST$ model, while we will prove in \autoref{nologstarcert} that the non-existence of such a certificate implies an $\Omega(\log n)$ randomized lower bound for the $\LOCAL$ model. Hence, also in this case we obtain a decidable gap that is robust on the choice of the model.

In Section~\ref{sec:sublogstar}, we consider the spectrum of complexities in the $O(\log^* n)$ region. We will define the notion of \emph{certificate for $O(1)$ solvability}, that will be nothing else than a certificate for $O(\log^* n)$ solvability that has some special property. We will show, in \autoref{constant_certificate_search_runtime}, that also in this case, we can decide its existence in exponential time, and we will show in \autoref{certificate_and_aab_implies_O(1)} that its existence implies a constant-time deterministic algorithm for the $\CONGEST$ model, while we will show in \autoref{no1cert} that the non-existence implies an $\Omega(\log^* n)$ lower bound for the $\LOCAL$ model. Hence, we will obtain that there are only four possible complexities, $O(1)$, $\Theta(\log^* n)$, $\Theta(\log n)$, and $n^{\Omega(1)}$, and that for all problems we can decide which of these four complexities is the right one.

For the fine-grained structure inside the $n^{\Omega(1)}$ class we refer to the prior work \cite{Chang20,CP19timeHierarchy,smallmessages}; while these papers study the case of \emph{unrooted} trees, we note that the orientation can be encoded as a locally checkable input, and the results are also applicable here. It follows that there are only classes $O(1)$, $\Theta(\log^* n)$, $\Theta(\log n)$, and $\Theta(n^{1/k})$ for $k = 1, 2, \dotsc$, and the exact class (including the value of $k$) is decidable. Although  the existence of the gap $\omega(n^{1/(k+1)})$ -- $o(n^{1/k})$~\cite{Chang20} applies to regular rooted trees, the problems with complexity $\Theta(n^{1/k})$ that have been shown to exist in \cite{CP19timeHierarchy} are not defined on regular rooted trees (e.g., nodes of different degrees may have different constraints).  In \autoref{sec:poly-region}, we define problems with complexity  $\Theta(n^{1/k})$ in regular rooted trees, showing that the complexity class $\Theta(n^{1/k})$ is non-empty for regular rooted trees.

\section{Definitions}\label{sec:definitions}

In this section we define some notions that will be used in the following sections.

\subsection{Input graphs}

We assume that all input graphs will be \emph{unlabeled rooted trees} where each node has either exactly $\delta$ or $0$ children for some positive integer $\delta$. That is, input graphs are \emph{full $\delta$-ary trees}. For convenience, when not specified otherwise, a tree $T$ is assumed to be a full $\delta$-ary tree.

\subsection{Models of computing}

The models that we consider in this work are the classical $\LOCAL$ and $\CONGEST$ model of distributed computing. Let $G$ be any graph with $n$ nodes and maximum degree $\Delta$. In the $\LOCAL$ model, each node of $G$ is equipped with an identifier in $\{1,2,\ldots,\poly(n)\}$, and the initial knowledge of a node consists of its own identifier, its degree (i.e., the number of incident edges), the total number $n$ of nodes, and $\Delta$ (in the case of rooted trees, each node knows also which of its incident edges connects it to its parent). Nodes try to learn more about the input instance by communicating with the neighbors. The computation proceeds in synchronous rounds, and at each round nodes send messages to neighbors, receive messages from them, and perform local computation. Messages can be arbitrarily large and the local computational can be of arbitrary complexity. Each node must terminate its computation at some point and decide on its local output. The running time of a distributed algorithm running at each node in the $\LOCAL$ model is determined by the number of rounds needed such that all nodes have produced their local output. In the randomized version of the $\LOCAL$ model, each node has access to its own stream of random bits. The randomized algorithms considered in this paper are Monte Carlo ones, that is, a randomized algorithm of complexity $T$ that solves a problem $P$ must terminate at all nodes upon $T$ rounds and this should result in a global solution for $P$ that is correct with probability at least $1 - 1/n$.

There is only one difference between the $\CONGEST$ and the $\LOCAL$ model, and it lies in the size of the messages. While in the $\LOCAL$ model messages can be arbitrarily large, in the $\CONGEST$ model the size of the messages is bounded by $O(\log n)$ bits.

\subsection{LCL problems}

We define $\LCL$ problems as follows.
\begin{definition}[$\LCL$ problem]
An $\LCL$ problem is a triple $\Pi = (\delta,\Sigma,\CC)$ where:
\begin{itemize}
	\item $\delta$ is the number of allowed children;
	\item $\Sigma$ is a finite set of (output) labels;
	\item $\CC$ is a set of tuples of size $\delta+1$ from $\Sigma^{\delta+1}$ called \emph{allowed configurations}.
\end{itemize}
\end{definition}
A configuration $(a,b_1,\ldots,b_\delta)$ will also be written as $(a : b_1,\ldots,b_\delta)$, in order to highlight that the label $a$ is for the parent and $b_1,\ldots,b_\delta$ are labels of the leaves. Sometimes we will omit the commas, and just write $(a : b_1\ldots b_\delta)$. Sometimes even the parenthesis will be omitted, obtaining $a : b_1\ldots b_\delta$, that is the notation used in e.g.\ Section~\ref{ssec:mis-example}.
As a shorthand notation, for an $\LCL$ problem $\Pi$, we will also denote the labels and configurations of $\Pi$ by $\Sigma(\Pi)$ and $\CC(\Pi)$.

\begin{definition}[solution]
A solution to an $\LCL$ problem $\Pi$ for a tree $T$ is a labeling function $\lambda$ for which:
\begin{itemize}
	\item every node $v \in T$ is labeled with a label $\lambda(v)$ from $\Sigma(\Pi)$;
	\item every node $v \in T$ with $\delta$ children $v_1,\dots,v_\delta$ satisfies that there exists a permutation $\rho \colon \{1, \dots, \delta\} \to \{1, \dots, \delta\}$ such that  $(\lambda(v) : \lambda(v_{\rho(1)}),\dots,\lambda(v_{\rho(\delta)}))$ is in $\CC(\Pi)$.
\end{itemize}
\end{definition}
In other words, a solution is a labeling for the nodes that must satisfy some local constraints. Note that only nodes with $\delta$ children are constrained, but that such $\LCL$ problems could be well-defined even on non-full $\delta$-ary trees (nodes with a number of children different from $\delta$ are unconstrained). Full $\delta$-ary trees are the hardest instances for the problems as every node is constrained.

\begin{definition}[restriction]\label{def:restriction}
	Given an $\LCL$ problem $\Pi = (\delta,\Sigma,\CC)$, a restriction of $\Pi$ to labels $\Sigma' \subseteq \Sigma$ is a new $\LCL$ problem $\Pi' = (\delta,\Sigma',\CC')$, where $\CC'$ consists of all configurations in $\CC$ that only use labels in $\Sigma'$.
\end{definition}

\begin{definition}[continuation below]
Let $\Pi$ be an $\LCL$ problem. Label $\sigma \in \Sigma(\Pi)$ has a \emph{continuation below} if there exists a configuration $(\sigma : \sigma_1,\dots,\sigma_\dd) \in \CC(\Pi)$.
\end{definition}

\begin{definition}[continuation below with specific labels]
	Let $\Pi = (\delta,\Sigma,\CC)$ be an $\LCL$ problem. Label $\sigma \in \Sigma(\Pi)$ has a \emph{continuation below with labels in $\Sigma' \subseteq \Sigma$} if there exists a configuration $(\sigma : \sigma_1,\dots,\sigma_\dd) \in \CC(\Pi)$ such that $\{\sigma,\sigma_1,\dots,\sigma_\dd\} \subseteq\nobreak \Sigma'$.
\end{definition}

\begin{definition}[path-form of an $\LCL$ problem]
\label{path_form}
Let $\Pi = (\delta,\Sigma,C)$ be an $\LCL$ problem. The \emph{path-form} of $\Pi$ is the $\LCL$ problem $\PathPi = (1,\Sigma,C')$, where $(a : b) \in C'$ if and only if there exists a configuration $(a : b_1, b_2,\dots,b_\dd) \in C$ with $b = b_i$ for some $i$.
\end{definition}

See Figure~\ref{fig:path-flexible-form}b for an illustration of the path-form.

\subsection{Automata and flexibility}

\begin{definition}[automaton associated with path-form of an $\LCL$ problem; \cite{lcls_on_paths_and_cycles}]
\label{def:automaton}
Let $\Pi$ be an $\LCL$ problem. The automaton $\M(\Pi)$ associated with the path-form of $\Pi$ is a nondeterministic unary semiautomaton defined as follows:
\begin{itemize}
	\item The set of states is $\Sigma(\Pi)$.
	\item There is a transition from state $a$ to state $b$ if there is a configuration $(a : b)$ in the path-form $\PathPi$ of $\Pi$.
\end{itemize}
\end{definition}

See Figure~\ref{fig:path-flexible-form}c for an illustration of the automaton.

\begin{definition}[flexible state of an automaton; \cite{lcls_on_paths_and_cycles}]\label{def:flexible-state}
\label{flexible_state}
A state $a$ from $\M(\Pi)$ is \emph{flexible} if there is a natural number $K$ such that for all $k \ge K$ there is a walk $a \leadsto a$ of length exactly $k$ in~$\M(\Pi)$. The smallest number $K$ that satisfies this property is the \emph{flexibility} of state $a$, in notation $\flexibility(a)$.
\end{definition}

As the set of states of the automaton is the set of labels, we can expand the notion of flexibility of a state to the notion of flexibility of a label.

\begin{definition}[path-flexibility]\label{def:path-flexibility}
Let $\Pi$ be an $\LCL$ problem and $\PathPi$ its path-form. A label $\sigma \in \Sigma(\Pi)$ is \emph{path-flexible} if $\sigma$ is a flexible state in automaton $\M(\Pi)$, and \emph{path-inflexible} otherwise. 

Moreover, an $\LCL$ problem $\Pi$ is \emph{path-flexible} if all labels are path-flexible labels and its automaton $\M(\Pi)$ has one strongly connected component.
\end{definition}

\subsection{Graph-theoretic definitions}

\begin{definition}[root-to-leaf path]
A \emph{root-to-leaf} path in a tree is a path that starts at the root and ends at one of its leaves.
\end{definition}

\begin{definition}[hairy path]
A full $\delta$-ary tree $T$ is called a \emph{hairy path} if it can be obtained by attaching leaves to a directed path such that all nodes of the path have exactly $\delta$ children.
\end{definition}

\begin{definition}[minimal absorbing subgraph]\label{def:minimal-absorbing-subgraph}
Let $G$ be a directed graph. A subgraph $G' \subseteq G$, is called a \emph{minimal absorbing subgraph} if $G'$ is a strongly connected component of $G$ and $G'$ does not have any outgoing edges.
\end{definition}

We note that a minimal absorbing subgraph exists for any directed graph.

\begin{definition}[ruling set]
Let $G$ be a graph. A \emph{$(k,l)$-ruling set} is a subset $S$ of nodes of $G$ such that the distance between any two nodes in $S$ is at least $k$, and the distance between any node in $G$ and the closest node in $S$ is at most~$l$.
\end{definition}

\section{Super-logarithmic region}\label{sec:superlog}
In this section we prove that there is no $\LCL$ problem $\Pi$ with distributed time complexity between $\omega(\log n)$ and $n^{o(1)}$. Also, we prove that, given a problem $\Pi$, we can \emph{decide} if its complexity is $O(\log n)$ or $n^{\Omega(1)}$.
In view of~\cite{Chang20,CP19timeHierarchy}, randomness does not help for $\LCL$ problems with round complexity $\Omega(\log n)$, so we focus on the deterministic setting.

\subsection{High-level idea}

The key idea is that we \emph{iteratively prune} the description of problem $\Pi$ by removing subsets of labels that we call \emph{path-inflexible}---these are sets of labels that require long-distance coordination (cf.\ $2$-coloring). After each such step, we may arrive at a subproblem that contains a new path-inflexible set, but eventually the pruning process will terminate, as there is only a finite number of labels.

Assume the pruning process terminates after $k$ steps. Let $X_1,\allowbreak X_2, \dotsc, X_k$ be the sets of labels we removed during the process, and let $X'$ be the set of labels that is left after no path-inflexible labels remain. We have two cases:
\begin{enumerate}
  \item Set $X'$ is empty. In this case we can show that the round complexity of the problem $\Pi$ is at least $\Omega(n^{1/k})$. To prove this, we make use of a $k$-level construction that generalizes the one used for $2\frac12$-coloring in \cite{CP19timeHierarchy}. We argue that, roughly speaking, for $o(n^{1/k})$-round algorithms, no label from set $X_i$ can be used for labeling the level-$j$ nodes, for each $j \geq i$, as this requires coordination over distance $\Theta(n^{1/k})$.

  \item Set $X'$ is non-empty. In this case, after removing the sets $X_i$, we are left with a non-empty \emph{path-flexible} subproblem $\Pi' \subseteq \Pi$, and we can make use of the flexibility to solve $\Pi'$ in $O(\log n)$ rounds. Hence the original problem $\Pi$ is also solvable in $O(\log n)$ rounds.
\end{enumerate}
We say that problem $\Pi$ has a \emph{certificate for $O(\log n)$ solvability} if and only if the set $X'$ is non-empty.

Formally, we prove the following theorems.

\begin{theorem}
\label{path-flexible-form-impl-log(n)-ub}
Let\/ $\Pi$ be a problem having a certificate for $O(\log n)$ solvability. Then $\Pi$ is solvable in $O(\log n)$ rounds in the $\CONGEST$ model.
\end{theorem}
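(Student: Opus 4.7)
My plan is to extract from the certificate the non-empty path-flexible restriction $\Pi' = (\delta, \Sigma', \CC') \subseteq \Pi$ that remains after pruning, and to design an $O(\log n)$-round $\CONGEST$ algorithm for $\Pi'$; since every labeling valid for $\Pi'$ is also valid for $\Pi$ (\autoref{def:restriction}), this suffices. By \autoref{def:path-flexibility}, every $\sigma \in \Sigma'$ is flexible in $\M(\Pi')$ and $\M(\Pi')$ is strongly connected, so combining flexibility with reachability inside the single SCC I would first establish the strong-reachability statement: there is a constant $K$ depending only on $\Pi$ such that for every $a,b \in \Sigma'$ and every integer $\ell \ge K$, there is a walk $a \leadsto b$ of length exactly $\ell$ in $\M(\Pi')$. (Take $K = \max_{a,b} \operatorname{dist}_{\M(\Pi')}(a,b) + \max_\sigma \flexibility(\sigma)$, prepending a short path into a flexible loop at $b$.) Fix once and for all a label $\sigma^* \in \Sigma'$.

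The algorithm I would design has three phases. First, compute a $(K+1,\, c\log n)$-ruling set $S$ of $T$ for a suitable constant $c$; standard deterministic constructions on bounded-degree graphs achieve this in $O(\log n)$ $\CONGEST$ rounds, and only $O(1)$-bit labels need be exchanged. Force the root into $S$ if it is not already covered. Second, every node in $S$ commits to the output label $\sigma^*$. Third, for each $v \in S$ define the \emph{zone} $Z_v$ as $v$ together with all descendants $w$ such that the directed path from $v$ to $w$ contains no further $S$-node; the bottom of $Z_v$ consists of leaves of $T$ (unconstrained) and of $S$-descendants already labeled $\sigma^*$. Each zone has depth at most $c\log n$, and the labels in its interior will be produced by a two-sweep procedure inside the zone.

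To label a zone, I would run a bottom-up sweep that computes, at each $w \in Z_v$, the set $R(w) \subseteq \Sigma'$ of labels that can be placed at $w$ and extended to a valid labeling of the sub-zone rooted at $w$ respecting the boundary, followed by a top-down sweep that, starting with $\sigma^*$ at $v$, picks at each internal node a configuration $(a : b_1, \dots, b_\delta) \in \CC(\Pi')$ witnessed by the DP. The content of the correctness proof is to show that $R(w) \neq \emptyset$ for every $w$ and, in particular, $\sigma^* \in R(v)$. For a bottom node $R(w)$ is either $\Sigma'$ (original leaf) or $\{\sigma^*\}$ ($S$-descendant). For an internal $w$, the strong-reachability statement combined with the ruling-set spacing $\ge K+1$ between consecutive anchors provides enough room to pick, for any desired $a \in \Sigma'$, a configuration at $w$ whose child-labels lie in the $R(c_i)$ computed inductively. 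Off-path subtrees inside a zone that bottom out only at unconstrained leaves can always be extended greedily because the single-SCC property forces every $\sigma \in \Sigma'$ to appear as the left-hand side of some configuration in $\CC(\Pi')$, so a continuation below always exists.

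The hardest point will be the joint consistency of label choices inside a zone: distinct root-to-anchor paths share prefixes and must agree on the labels along the shared part, so one cannot simply pick an independent ``$\sigma^* \leadsto \sigma^*$ walk'' per path, and a naive path-by-path construction fails. The DP on $R(\cdot)$ is the device that I would use to resolve this uniformly, as $R(w)$ by construction encodes exactly what is simultaneously achievable in the subtree below $w$; the inductive step then reduces to the strong-reachability lemma applied with the residual depth to each boundary point. For the $\CONGEST$ implementation, I rely on $\Pi$ being a fixed constant-size object: each $R(w)$ has constant description length (since $|\Sigma'| = O(1)$), all messages exchanged between neighbors are labels or subsets of $\Sigma$ of constant bit length, and both sweeps inside a zone run in $O(\mathrm{depth}(Z_v)) = O(\log n)$ rounds. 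Minor boundary effects at the root and near the leaves of $T$ would be handled by a constant-time local fix in the style of the remark following the MIS example in \autoref{ssec:mis-example}.
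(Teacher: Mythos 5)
There is a genuine gap at the very first step of your algorithm: the claim that a $(K+1,\,c\log n)$-ruling set $S$ yields zones $Z_v$ of depth $O(\log n)$. The domination radius of a ruling set bounds the \emph{undirected graph distance} from a node to the nearest $S$-node, not the distance to the nearest $S$-\emph{ancestor}, and your zone of $v$ extends downward until the next $S$-node \emph{on the root-to-leaf path}. These two quantities can differ by a factor of $\Theta(n/\log n)$: take a hairy spine $v_m \to \cdots \to v_0$ of length $\Theta(n)$ and hang off each $v_i$ a pendant hairy path of length $K$ with an $S$-node at its bottom. The $S$-nodes are pairwise $\geq K+1$ apart and dominate every node within radius $O(K)=O(1)$, so this is a perfectly valid $(K+1, c\log n)$-ruling set, yet the spine contains no $S$-node and the zone of the topmost $S$-ancestor has depth $\Theta(n)$, making your two-sweep DP take $\Theta(n)$ rounds. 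What you actually need is a set $S$ such that every node has an $S$-ancestor within $O(\log n)$ hops while $S$-nodes stay well separated; constructing such a ``vertical'' decomposition in $O(\log n)$ rounds is precisely the nontrivial part of the upper bound, and it is what the paper's rake-and-compress procedure $\RCP(k)$ delivers (constant-size pieces organized into $O(\log n)$ layers, with per-path ruling sets computed only inside the long compress-paths, where domination and ancestry coincide). Your proposal assumes this difficulty away.

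A secondary, fixable issue: even granting bounded-depth zones, the spacing $K+1$ is not enough to guarantee $\sigma^* \in R(v)$. Two boundary $S$-nodes at distance exactly $K+1$ can have their LCA $w$ at distance about $K/2$ from each, and then $R(w)$ is determined by walks to $\sigma^*$ of length below the flexibility threshold, which need not exist and need not be combinable into a single configuration at $w$; your appeal to ``strong reachability with the residual depth'' does not cover this case. Increasing the separation to $2K+2$ repairs it, because then every node has at most one boundary descendant within distance $K$, and one can show by induction on subtree height that $R(w)=\Sigma'$ whenever all boundary descendants of $w$ are at distance at least $K$ (off-path children always have full $R$-sets, so along the single constrained path $R$ evolves exactly as the preimage iteration of $\{\sigma^*\}$ under $\M(\Pi')$, which saturates after $K$ steps). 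This existence argument is the content you left implicit; the DP itself is a fine device once it is supplied. Also note that $\Pi'$ must be taken to be the certificate $\PfPi$ (the restriction to a minimal absorbing subgraph, Lemma~\ref{properties_of_pfpi}), not merely the fixed point of the pruning, since the latter need not be strongly connected.
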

\begin{theorem}
\label{no-path-flexible-impl-n-lb}
Let\/ $\Pi$ be an $\LCL$ problem having no certificate for $O(\log n)$ solvability.
  Then both the randomized and the deterministic complexity of\/ $\Pi$ in the $\LOCAL$ model are $\Omega(n^{1/k})$ for some $k \ge 1$.
\end{theorem}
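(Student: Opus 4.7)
The plan is to build a $k$-level family of hard instances and run an induction on the level: for each $i$, every level-$i$ node deep inside the instance must receive a label in $\Sigma(\Pi_i)$, where $\Pi_i$ is obtained from $\Pi$ by $i$ rounds of pruning path-inflexible labels. At the top level we need $\Sigma(\Pi_k) = \emptyset$, contradicting the existence of any correct labeling.

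First, I would formalize the pruning. Set $\Pi_0 := \Pi$, and while $\Pi_{i-1}$ has at least one path-inflexible label, let $X_i$ be the set of all its path-inflexible labels and let $\Pi_i$ be the restriction of $\Pi_{i-1}$ to $\Sigma(\Pi_{i-1}) \setminus X_i$ (Definition~\ref{def:restriction}). The process halts after finitely many steps $k$ because $|\Sigma(\Pi_i)|$ strictly decreases, and the assumption that $\Pi$ has no certificate for $O(\log n)$ solvability implies $\Sigma(\Pi_k) = \emptyset$, so $\Sigma(\Pi) = X_1 \cup \cdots \cup X_k$. Next I would construct a family of full $\delta$-ary trees $T_n^{(k)}$ on $\Theta(n)$ nodes with a nested $k$-level structure, generalizing the $2\frac12$-coloring hard instance of~\cite{CP19timeHierarchy}: $T_n^{(k)}$ is a hairy path of length $m := \Theta(n^{1/k})$ whose path-nodes each have $\delta-1$ leaves and one ``deep'' child rooting an independent copy of $T_{n/m}^{(k-1)}$; the base case $T_m^{(0)}$ is a single leaf. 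Every root-to-leaf path traverses $\Theta(n^{1/k})$ path-nodes at each of the $k$ levels.

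The central claim would then be: for $t = o(n^{1/k})$, any deterministic $t$-round algorithm on $T_n^{(k)}$ must, at every level-$i$ path-node $v$ lying at distance $\gg t$ from both endpoints of its containing level-$i$ hairy path, output a label from $\Sigma(\Pi_i)$. I would prove this by induction on $i$. For the inductive step, suppose toward contradiction that $v$ receives some $x \in X_i$. By the inductive hypothesis, far-from-endpoint nodes in the level-$(i-1)$ subtrees rooted under the level-$i$ path carry labels in $\Sigma(\Pi_{i-1})$, so the sequence of labels along the level-$i$ path forms a walk in $\M(\Pi_{i-1})$ via its path-form (Definitions~\ref{path_form} and~\ref{def:automaton}). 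I would then run an indistinguishability argument: by splicing together appropriately shifted copies of $T_n^{(k)}$, all locally indistinguishable in $v$'s $t$-neighborhood, the algorithm is forced to label other nodes on the level-$i$ path with $x$ as well, producing closed walks $x \leadsto x$ of many different lengths in $\M(\Pi_{i-1})$. A standard combination using pairwise differences and the period of the strongly connected component containing $x$ shows that walks of \emph{all} sufficiently large lengths exist, contradicting path-inflexibility of $x$ (Definition~\ref{flexible_state}). At the top level $i=k$ every label is forbidden, giving the $\Omega(n^{1/k})$ deterministic lower bound.

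The main obstacle will be making the splicing step at level $i$ fully rigorous: one must argue not only that two spliced instances are locally indistinguishable at $v$, but also that the algorithm's outputs on both sides of the cut extend to globally valid labelings of the spliced tree, which requires the inductive hypothesis to cover a sufficiently wide band of the adjacent level-$(i-1)$ subtrees and uses the $\delta-1$ unconstrained leaves hanging off each path-node to absorb boundary mismatches. The randomized lower bound then follows from the deterministic one via the standard Chang--Kopelowitz--Pettie speed-up~\cite{CKP19exponential}: any $T(n)$-round randomized $\LOCAL$ algorithm on bounded-degree instances yields a deterministic $\LOCAL$ algorithm of complexity $O(T(\poly(n)))$, which preserves any polynomial lower bound $\Omega(n^{1/k})$ up to an increase in the constant $k$, so the same $n^{\Omega(1)}$ bound holds for randomized algorithms.
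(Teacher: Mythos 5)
Your overall plan---prune $\Sigma$ into $X_1\cup\dots\cup X_k$, build a $k$-level nested hairy-path instance generalizing the $2\frac12$-coloring construction, and induct on the level to show that level-$i$ nodes deep inside the instance can only use $\Sigma(\Pi_i)$---is the same as the paper's. The gap is in the inductive step, and it is exactly the one the paper spends most of Section~\ref{sec:poly-lb} repairing. Your inductive hypothesis covers only level-$(i-1)$ path nodes that are far from the endpoints of their containing level-$(i-1)$ path, but the nodes you need it for at the next level are the \emph{children} of the level-$i$ path nodes, and in the recursive construction those children are precisely the \emph{roots} of the attached level-$(i-1)$ subtrees, i.e.\ endpoints of level-$(i-1)$ paths---the one place your hypothesis is silent. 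Without control over the children, the labels along the level-$i$ path only give a walk in the path-form of $\Pi$, not of $\Pi_{i-1}$, and inflexibility of $X_i$ \emph{in $\Pi_{i-1}$} yields no contradiction. (Relatedly, your construction as written attaches $\delta-1$ leaves plus one deep child to each path node, which over-counts the degree and, for $\delta\ge 3$, leaves some children as bare leaves whose labels are not forced into $\Sigma(\Pi_{i-1})$ at all; every non-path child must root a deep level-$(i-1)$ copy.) The paper's fix is to strengthen the induction to a statement about \emph{all} radius-$t$ views occurring near the middle edge of a concatenation $T_{a\leftarrow b}^x$ of two level-$\ge i$ trees (the sets $S_i$), and, for each such view, to build a bespoke spliced instance $T_l\leftarrow T_m\leftarrow T_r$ whose connecting path stays in layers $\ge i$ and whose path-nodes' children all have views already handled at level $i-1$ (Lemmas~\ref{lem:lb-aux-3}--\ref{lem:lb-aux-7}). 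Widening the ``band'' and invoking unconstrained leaves, as you propose, does not address this: the problematic children sit at distance $1$ from the path and are internal nodes. A further minor issue: for an inflexible label $x$, combining realizable closed-walk lengths via ``pairwise differences and the period of the SCC'' can never produce all sufficiently large lengths---that is what inflexibility forbids---so the argument must instead directly realize a closed walk of a specific length $d$ for which $\M(\Pi_{i-1})$ admits no $x\leadsto x$ walk, which is how \autoref{lem:lb-aux} is organized.

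The randomized claim also does not follow by the route you give. The derandomization of \cite{CKP19exponential} yields $D(n)\le R\bigl(2^{\Theta(n^2)}\bigr)$, an \emph{exponential} blow-up in the instance size rather than a polynomial one; applied to $D(n)=\Omega(n^{1/k})$ it only gives $R(N)=\Omega\bigl((\log N)^{1/(2k)}\bigr)$, far short of $N^{\Omega(1)}$. The correct route, and the one the paper takes, is the tree-specific speed-up and gap theorems of \cite{CP19timeHierarchy,Chang20}: a randomized $n^{o(1)}$-round algorithm for an $\LCL$ on bounded-degree trees implies a deterministic $O(\log n)$-round algorithm, so the deterministic $n^{\Omega(1)}$ lower bound transfers to randomized algorithms.
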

\begin{theorem}\label{thm:superlogdecidability}
Whether an $\LCL$ problem $\Pi$ has round complexity $O(\log n)$ or $n^{\Omega(1)}$ can be decided in polynomial time.
\end{theorem}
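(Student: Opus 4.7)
The plan is to reduce the decision problem to the existence of a certificate for $O(\log n)$ solvability, and then invoke \autoref{path-flexible-form-impl-log(n)-ub} and \autoref{no-path-flexible-impl-n-lb} to get the dichotomy. So it suffices to show that, given $\Pi = (\delta,\Sigma,\CC)$, one can decide in polynomial time whether the set $X'$ obtained by iteratively removing path-inflexible labels is empty.

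The first step is to identify the path-flexible labels of a given subproblem. I would build the automaton $\M(\Pi)$ of \autoref{def:automaton} in polynomial time, and then observe that, by \autoref{def:flexible-state}, a state $a$ is flexible iff there is a walk $a \leadsto a$ of every sufficiently large length. A standard characterization from automata theory says that this is equivalent to $a$ lying in a strongly connected component $S$ of $\M(\Pi)$ that contains at least one cycle and has period $1$, where the period is the gcd of the lengths of all cycles in $S$. Both properties are polynomial-time checkable: compute the SCCs via Tarjan's algorithm, and for each nontrivial SCC run BFS from an arbitrary vertex, take the gcd of the ``back-edge discrepancies'' $d(u) + 1 - d(v)$ over all edges $(u,v)$ in the SCC, and test whether this gcd equals~$1$.

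Next, I would iterate the pruning. Starting from $\Sigma_0 = \Sigma$ and $\Pi_0 = \Pi$, at step $i$ I compute the path-flexible labels of $\Pi_i$ using the procedure above, set $\Sigma_{i+1}$ to that set, and form the restriction $\Pi_{i+1}$ according to \autoref{def:restriction}. Since $|\Sigma_{i+1}| \leq |\Sigma_i|$ and the process reaches a fixed point as soon as no path-inflexible label is present, at most $|\Sigma|$ iterations occur and each is polynomial, so the whole procedure runs in polynomial time in $|\Pi|$. Let $X'$ be the final label set. By the definition of the certificate, $\Pi$ has a certificate for $O(\log n)$ solvability iff $X' \neq \emptyset$. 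Applying \autoref{path-flexible-form-impl-log(n)-ub} in the non-empty case and \autoref{no-path-flexible-impl-n-lb} in the empty case yields the claimed dichotomy.

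The only nontrivial ingredient is the polynomial-time decidability of path-flexibility itself, and that reduces cleanly to computing the period of a strongly connected subgraph. Beyond that, correctness is essentially bookkeeping supplied by the two upper/lower bound theorems, and the only subtlety is noting that each restriction step may remove edges of $\M(\Pi_i)$ (because a configuration survives only if \emph{all} of its labels are still present), so flexibility must be recomputed from scratch in each round rather than inferred from the previous automaton; this causes no difficulty for the polynomial running time.
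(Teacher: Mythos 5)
Your proposal is correct and follows essentially the same route as the paper: iterate the removal of path-inflexible labels (at most $|\Sigma|$ rounds, each polynomial), test whether the surviving label set is empty, and invoke \autoref{path-flexible-form-impl-log(n)-ub} and \autoref{no-path-flexible-impl-n-lb} for the dichotomy. The only difference is that you make the polynomial-time flexibility test explicit via the SCC-period (gcd of cycle lengths) characterization, whereas the paper simply cites prior work for that step; your added remark that edges of the automaton can disappear under restriction, so flexibility must be recomputed each round, is accurate and consistent with the paper's procedure.
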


\subsection{Certificate}

\begin{figure}
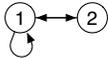
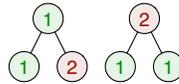

\figlayout

(a) Problem $\Pi_0$:
\begin{center}
\includegraphics[page=7,scale=\figscale]{figs.pdf}
\end{center}
\medskip

(b) Path-form $\PathPi_0$ of $\Pi_0$:
\begin{center}
\includegraphics[page=8,scale=\figscale]{figs.pdf}
\end{center}
\medskip

(c) Automaton $\M(\Pi_0)$:
\begin{center}
\includegraphics[page=9,scale=\figscale]{figs.pdf}
\end{center}

(d) Problem $\Pi_1$:
\begin{center}
\includegraphics[page=10,scale=\figscale]{figs.pdf}
\end{center}

(e) Path-form $\PathPi_1$ of $\Pi_1$:
\begin{center}
\includegraphics[page=11,scale=\figscale]{figs.pdf}
\end{center}

(f) Automaton $\M(\Pi_1)$:
\begin{center}
\includegraphics[page=12,scale=\figscale]{figs.pdf}
\end{center}

(g) Problem $\Pi_\mathrm{pf}$:
\begin{center}
\includegraphics[page=13,scale=\figscale]{figs.pdf}
\end{center}

\caption{Certifying that problem $\Pi_0$ is solvable in $O(\log n)$ rounds (see~\autoref{alg:path_flexible_form_finder}). The sample problem $\Pi_0$ is a combination of the branch $2$-coloring problem \eqref{eq:branch-example}, using labels $1$ and $2$, and the normal $2$-coloring problem \eqref{eq:2col-example}, using labels $a$ and $b$. The grayed out states in the automaton denote inflexible states. Problem $\Pi_\mathrm{pf}$ is the \emph{path-flexible} form of problem $\Pi_0$.}\label{fig:path-flexible-form}
\Description{}
\end{figure}

We present an algorithm that decides whether the complexity of a given problem $\Pi$ is $O(\log n)$ or $n^{\Omega(1)}$ rounds.

We start by defining a procedure that, given a problem $\Pi$, creates its restriction $\Pi'$ to path-flexible states of $\Pi$; see \autoref{alg:inflexible_removal}. However, note that states that were path-flexible in $\Pi$ may become path-inflexible in $\Pi'$; hence problem $\Pi'$ may still contain path-inflexible states.

\begin{algorithm}[tbp]
\caption{\label{alg:inflexible_removal}$\removePathInflexibleConfigurations(\Pi)$}
\DontPrintSemicolon
\KwIn{$\LCL$ problem $\Pi$}
\KwOut{$\LCL$ problem $\Pi'$, a restriction of $\Pi$ to its path-flexible states}
\BlankLine
Construct $\PathPi$, the path-form of $\Pi$. \comm*{See \autoref{path_form}.}
Construct the automaton $\M(\Pi)$. \comm*{See \autoref{def:automaton}.}
$\Sigma' \gets$ the set of path-flexible states of $\Pi$. \comm*{See \autoref{def:path-flexibility}.}
$\Pi' \gets$ the restriction of $\Pi$ to labels $\Sigma'$. \comm*{See \autoref{def:restriction}.}
\Return $\Pi'$
\end{algorithm}

Next we describe a new procedure $\findLogCertificate$ that uses \autoref{alg:inflexible_removal} to analyze the complexity of a given problem. This procedure either returns $\epsilon$ to indicate that the problem requires $n^{\Omega(1)}$ rounds, or it returns a new problem $\PfPi$ that is a restriction of $\Pi$, but that will be nevertheless solvable in $O(\log n)$ rounds (and therefore $\Pi$ is also solvable in this time).

Informally, procedure $\findLogCertificate$ applies iteratively \autoref{alg:inflexible_removal} until one of the following happens:
\begin{itemize}
	\item We obtain an empty problem. In this case we return $\epsilon$. We will show that this can only happen if $\Pi$ requires $n^{\Omega(1)}$ rounds.
	\item We reach a non-empty fixed point $\Pi_i$. In this case we further restrict $\Pi_i$ to the labels that induce a minimal absorbing subgraph in the automaton associated with its path-form. Let $\PfPi$ be the problem constructed this way. We return $\PfPi$, and we say that $\PfPi$ is the \emph{certificate for\/ $O(\log n)$ solvability}. We will show that $\PfPi$ and hence also the original problem $\Pi$ can be solved in $O(\log n)$ rounds.
Note that a minimal absorbing subgraph has the property that any labeling of the two endpoints of a sufficiently long path with labels from the subgraph admits an extension of the solution to the entire path with labels from the subgraph.
This provides the intuition why reducing the labels to those of a minimal absorbing subgraph allows for an $O(\log n)$-round algorithm using the rake-and-compress approach explained in Section~\ref{sec:upbo}.
\end{itemize}
The procedure is described more formally in \autoref{alg:path_flexible_form_finder}, and an example of execution for a concrete problem can be seen in \autoref{fig:path-flexible-form}.

\begin{algorithm}[tbp]
\DontPrintSemicolon
\caption{\label{alg:path_flexible_form_finder}$\findLogCertificate(\Pi)$}
\KwIn{$\LCL$ problem $\Pi$}
\KwOut{Certificate for $O(\log n)$ solvability if it exists, or $\epsilon$ otherwise}
\BlankLine
$\Pi_0 \gets \Pi$\;
$i \gets 0$\;
\Repeat{$\Pi_{i} = \Pi_{i-1}$}{
	$\Pi_{i+1} \gets \removePathInflexibleConfigurations(\Pi_i)$ \comm*{See \autoref{alg:inflexible_removal}.}
	$i \gets i+1$
}
\uIf{$\Pi_{i}$ is empty}{
	\Return $\epsilon$ \comm*{$\Pi$ cannot be solved in $O(\log n)$ rounds.}
	}
\Else{
  $\Sigma' \gets$ labels that induce a minimal absorbing subgraph of automaton $\M(\Pi_i)$ \comm*{See \autoref{def:minimal-absorbing-subgraph}.}
  $\PfPi \gets$ the restriction of $\Pi_i$ to $\Sigma'$\;
	\Return $\PfPi$ \comm*{Certificate for $O(\log n)$ solvability.}
}
\end{algorithm}

Now, let us prove some of the properties of \autoref{alg:path_flexible_form_finder}. First, we observe that this is indeed a polynomial-time algorithm.

\begin{lemma}\label{lem:superlogdecidability}
\autoref{alg:path_flexible_form_finder} runs in polynomial time in the size of the description of\/ $\Pi$.
\end{lemma}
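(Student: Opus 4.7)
The plan is to show that Algorithm~2 terminates after at most $|\Sigma(\Pi)|+1$ iterations of the main loop, and that each iteration, together with the final post-processing, runs in polynomial time.

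First I would bound the number of iterations. Each call to $\removePathInflexibleConfigurations$ returns the restriction of its input to the path-flexible labels; either this restriction strictly shrinks the label set (so $|\Sigma(\Pi_{i+1})| \le |\Sigma(\Pi_i)| - 1$), or it coincides with the input and the repeat-until loop terminates. Since the label set can shrink at most $|\Sigma(\Pi)|$ times before becoming empty, the loop performs at most $|\Sigma(\Pi)|+1$ iterations, and once it terminates the post-processing branch is executed exactly once.

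Next I would argue that each call to $\removePathInflexibleConfigurations$ runs in polynomial time in $|\Sigma(\Pi)| + \delta \cdot |\CC(\Pi)|$. Building the path-form $\PathPi$ amounts to iterating over all configurations and emitting at most $\delta$ pairs per configuration, taking time $O(\delta \cdot |\CC(\Pi)|)$, and the automaton $\M(\Pi)$ is read off directly from $\PathPi$. Deciding path-flexibility reduces to computing the strongly connected components of $\M(\Pi)$ and, for each SCC, its period (the greatest common divisor of its cycle lengths): by \autoref{def:flexible-state} a state is flexible precisely when it lies in an SCC of period $1$ that contains at least one edge, and both SCC decomposition and period computation admit well-known polynomial-time algorithms. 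Filtering $\CC(\Pi)$ down to the configurations using only the surviving labels is then a single linear scan.

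Finally, the post-processing after the loop performs one more SCC decomposition of $\M(\Pi_i)$ to locate a sink SCC of the condensation (which necessarily exists in any nonempty directed graph and corresponds to a minimal absorbing subgraph), followed by one additional restriction step; both are polynomial. Combining these bounds yields polynomial total running time. I expect the only point requiring some care is the flexibility check: it is not entirely obvious from \autoref{def:flexible-state} that flexibility is polynomial-time decidable, but once flexibility is rephrased as ``the containing SCC has period $1$ and contains at least one edge'', standard algorithms for the period of a strongly connected digraph finish the argument, so this is not a serious obstacle.
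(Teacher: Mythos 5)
Your proposal is correct and follows essentially the same structure as the paper's proof: bound the number of loop iterations by the size of the label set (each non-terminal iteration removes at least one label) and observe that each invocation of \autoref{alg:inflexible_removal} is polynomial. The only difference is that the paper simply cites \cite{lcls_on_paths_and_cycles} for the polynomial-time flexibility check, whereas you give the explicit (and correct) argument that a state is flexible iff its strongly connected component has period $1$ and contains an edge, which is checkable via standard SCC and period computations.
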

\begin{proof}
When creating a successive restrictions of $\Pi$ in \autoref{alg:path_flexible_form_finder}, we always remove at least one label. Hence we invoke \autoref{alg:inflexible_removal} at most $|\Sigma(\Pi)|$, and \autoref{alg:inflexible_removal} runs in polynomial time \cite{lcls_on_paths_and_cycles}.
\end{proof}

Then we prove that the step where we restrict to a minimal absorbing subgraph behaves well; in particular, it will preserve flexibility.
\begin{lemma}
\label{properties_of_pfpi}
Let\/ $\Pi$ be a non-empty\/ $\LCL$ problem, such that all of its states are path-flexible. Let\/ $\Sigma'$ be a set of labels that induces a minimal absorbing subgraph of automaton $\M(\Pi)$, and let\/ $\PfPi$ be the restriction of\/ $\Pi$ to labels\/ $\Sigma'$.
Then all states of\/ $\PfPi$ are flexible, there is a walk between any two states of $\M(\PfPi)$, and $\M(\PfPi)$ has at least one edge.
\end{lemma}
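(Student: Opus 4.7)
The plan is to first establish the structural observation that $\M(\PfPi)$ is exactly the subgraph of $\M(\Pi)$ induced on $\Sigma'$. Once this is in hand, all three conclusions fall out almost immediately from the definitions of minimal absorbing subgraph and path-flexibility.

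For the structural step, fix any $a \in \Sigma'$ and suppose $(a,b)$ is an edge of $\M(\Pi)$. By \autoref{def:automaton} combined with \autoref{path_form}, this edge arises from some configuration $(a : b_1, \dots, b_\dd) \in \CC(\Pi)$ with $b = b_i$. Then every pair $(a,b_j)$ is an edge of $\M(\Pi)$, and since $\Sigma'$ induces a minimal absorbing subgraph (hence has no outgoing edges), we must have $b_j \in \Sigma'$ for every $j$. Therefore the whole configuration lies in $\CC(\PfPi)$ after restriction to $\Sigma'$, which shows that $(a,b) \in \M(\PfPi)$. The converse inclusion is immediate from \autoref{def:restriction}. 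Consequently, the edges of $\M(\PfPi)$ are exactly the edges of $\M(\Pi)$ whose endpoints both lie in $\Sigma'$.

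Given this identification, the three conclusions are short. Strong connectivity of $\M(\PfPi)$, and in particular the existence of walks between any two states, follows because $\Sigma'$ is by assumption a strongly connected component of $\M(\Pi)$ and the induced edges are preserved. For the existence of at least one edge, note that $\Pi$ is non-empty and every label is path-flexible, so every state of $\M(\Pi)$ admits arbitrarily long closed walks and therefore has at least one outgoing edge; a minimal absorbing subgraph of a non-empty finite digraph is itself non-empty, so any $a \in \Sigma'$ has an outgoing edge in $\M(\Pi)$ that must land inside $\Sigma'$ and hence is an edge of $\M(\PfPi)$. Finally, for flexibility of an arbitrary $a \in \Sigma'$ in $\M(\PfPi)$: since $a$ is path-flexible in $\Pi$, there is some $K$ such that for every $k \ge K$ a walk $a \leadsto a$ of length exactly $k$ exists in $\M(\Pi)$; every such walk starts in $\Sigma'$ and by the absorbing property can never leave $\Sigma'$, so by the structural step each of its edges belongs to $\M(\PfPi)$, giving the same walk inside $\M(\PfPi)$. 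Thus the flexibility of $a$ in $\M(\PfPi)$ is at most (in fact equal to) that in $\M(\Pi)$, which is finite.

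The only step that is not a direct unpacking of definitions is the structural observation, and it is also the only potential pitfall: the edges of the path-form automaton bundle entire configurations together, so a priori one might worry about a configuration with parent in $\Sigma'$ but some child outside $\Sigma'$, which would be lost by the restriction even though the corresponding edges of $\M(\Pi)$ land in $\Sigma'$. The absorbing property of $\Sigma'$ is precisely what rules this out, because it forces every child in such a configuration to already lie in $\Sigma'$.
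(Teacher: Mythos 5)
Your proof is correct and follows essentially the same route as the paper's: the paper's one-line observation that ``no configuration for these states will be removed'' because all outgoing edges from a minimal absorbing subgraph stay inside it is exactly your structural step, and the remaining three conclusions are derived the same way (strong connectivity, non-emptiness plus flexibility forcing an outgoing edge, and returning walks being trapped in $\Sigma'$). You simply spell out more explicitly the pitfall about configurations bundling several edges together, which the paper's proof handles implicitly.
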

\begin{proof}
First, let us prove that the restriction will preserve the flexibility of the states that remain. Since for every state in a minimal absorbing subgraph all outgoing edges are connected to states in the same minimal absorbing subgraph by definition, then no configuration for these states will be removed, and all returning walks for a state will stay.
Second, a walk between any two states of $\M(\PfPi)$ is implied by the fact that $\M(\PfPi)$ is strongly connected.
Lastly, $\M(\PfPi)$ has at least one edge, as \emph{every node} has returning walks, hence incoming and outgoing edges, and the minimal absorbing subgraph is non-empty.
\end{proof}

In the rest of the section, we will prove that our certificate for $O(\log n)$ solvability indeed characterizes $O(\log n)$ solvability in the following sense: if $\Pi$ has a certificate for $O(\log n)$ solvability, then $\Pi$ can be solved in $O(\log n)$ rounds, otherwise there is an $n^{\Omega(1)}$ lower bound for $\Pi$. Hence
\autoref{lem:superlogdecidability} implies \autoref{thm:superlogdecidability}.

\subsection{Upper bound}\label{sec:upbo}

We prove that, if \autoref{alg:path_flexible_form_finder} does not return $\epsilon$, then the original problem $\Pi$ can be solved in $O(\log n)$ rounds. Note that $\PfPi$, the result of \autoref{alg:path_flexible_form_finder}, is obtained by considering a subset of labels of $\Pi$ and all constraints that use only those labels, hence a solution for $\PfPi$ is also a valid solution for $\Pi$. Hence, we prove our claim by providing an algorithm solving $\PfPi$ in $O(\log n)$ rounds. For this purpose, we start by providing a procedure that is a modified version of the rake and compress procedures of \citet{Miller1985}, where, informally, we remove degree-$2$ nodes only if they are contained in long enough paths. We start with some definitions.
Note that in a rooted tree we assume that each edge $\{u,v\}$ is oriented from $u$ to $v$ if $v$ is the parent of $u$.

\begin{definition}[leaves]
Let $G=(V,E)$ be a graph. We define that $\leaves(G) \subseteq V$ is the set of all nodes with indegree~$0$.
\end{definition}

\begin{definition}[long-paths]
Let $G=(V,E)$ be a graph and $p$ be a constant. Let $X \subseteq V$ consist of all nodes of indegree~$1$. We define that $\longpaths(G,p) \subseteq X$ consists of the set of all nodes that belong to a connected component of size at least $p$ in the subgraph of $G$ induced by $X$.
\end{definition}

We now define our variant of the rake-and-compress procedure. Note that a similar variant, for unrooted trees, appeared in \cite{binary_lcls}.

\begin{definition}[$\RCP$]
Let $p \in \set{1,2,\dotsc}$. Procedure $\RCP(p)$ iteratively partitions the set of nodes $V$ into non-empty sets $V_1,V_2,\dotsc,V_L$ for some $L$ as follows:
\[
G_0 = G, \quad
V_{i+1} = \leaves(G_i) \cup \longpaths(G_i,p) \quad
G_{i+1} = G_i \setminus V_{i+1}.\]
\end{definition}

Note that the graphs $G_i$ can be disconnected.

We now prove an upper bound on the highest possible layer obtained by the procedure. In particular, we prove that there is some layer $L = O(\log n)$ such that $G_{L+1}$ is empty. For this purpose, we now prove that the number of nodes of $G_{i+1}$ is at least a $(1-\frac{1}{6p})$ factor smaller than the number of nodes of $G_i$, implying that after $O(\log n)$ steps we obtain an empty graph.

\begin{lemma}
	Let $p$ be a constant and let $G = (V,E)$ be a tree with $n$ nodes. At least one of the following holds:
	\[
	\bigl|\mspace{1mu}\leaves(G)\mspace{1mu}\bigr| \geq \frac{n}{6p}
	\quad\text{or}\quad
	\bigl|\mspace{1mu}\longpaths(G,p)\mspace{1mu}\bigr| \geq \frac{n}{3} .
	\]
\end{lemma}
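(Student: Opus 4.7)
The plan is to prove the contrapositive: assuming $|\leaves(G)| < n/(6p)$, I derive $|\longpaths(G,p)| \geq n/3$. Partition the vertex set into three classes according to indegree in the rooted tree: the leaves $A := \leaves(G)$ (indegree $0$), the set $X$ of indegree-$1$ nodes, and the set $B$ of nodes of indegree at least $2$.

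The first step lower-bounds $|X|$. Since $G$ is a tree on $n$ nodes, the indegrees sum to $n-1$, so $|X| + 2|B| \leq n - 1$; combining this with $|A| + |X| + |B| = n$ yields $|X| \geq n + 1 - 2|A|$, and the hypothesis $|A| < n/(6p)$ then gives $|X| \geq n\bigl(1 - 1/(3p)\bigr)$. The second step upper-bounds the number $K$ of connected components of the induced subgraph $G[X]$. Each node of $X$ has exactly one child in $G$ and at most one parent, so $G[X]$ is a disjoint union of directed paths (``chains''). Each chain has a unique bottom $X$-node whose single child lies in $A \cup B$, and distinct chains produce distinct such children because every node of $G$ has at most one parent. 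Consequently $K \leq |A| + |B| = n - |X| \leq n/(3p)$.

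To conclude, every $X$-node outside $\longpaths(G,p)$ belongs to a chain of length at most $p - 1$, so the total number of such nodes is at most $K(p-1) \leq Kp \leq n/3$. Therefore $|\longpaths(G,p)| \geq |X| - n/3 \geq n\bigl(1 - 1/(3p)\bigr) - n/3 \geq n/3$, where the final inequality holds for every $p \geq 1$. I anticipate no serious obstacle here: the argument is a standard indegree handshake count, combined with the structural observation that $G[X]$ decomposes into directed paths, and the constants $1/(6p)$ and $1/3$ are calibrated precisely so that the two bounds fit together.
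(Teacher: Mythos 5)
Your proof is correct and follows essentially the same route as the paper's: an indegree handshake count showing that nodes of indegree at least $2$ are outnumbered by leaves, a bound of roughly $|A|+|B| < n/(3p)$ on the number of path components of the indegree-$1$ subgraph, and the final counting step that short components can absorb at most $n/3$ nodes. The only cosmetic difference is how the component count is bounded (you inject each chain into the child of its bottom node, while the paper contracts components into edges of a quotient tree), which yields the same estimate.
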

\begin{proof}
Let $n_0,n_1,n_{2+}$ be the number of nodes of indegree 0, 1, and 2 or more, respectively. We have $n = n_0 + n_1 + n_{2+}$. The number of edges in a tree is $m = n - 1$ or by counting using indegrees we get $m \geq 0 n_0 + 1 n_1 + 2 n_{2+}$, from which we obtain $ n_0 + n_1 + n_{2+} - 1 \geq 0 n_0 + 1 n_1 + 2 n_{2+}$. Hence $n_{2+} < n_0$.
We have $|\mspace{1mu}{\leaves(G)}\mspace{1mu}| = n_0$ so if $n_0 \geq n/(6p)$, the claim holds. In what follows, we assume that $n_0 < n/(6p)$ which together with $n_{2+} < n_0$ implies that $n_0 + n_{2+} < n/(3p)$. This implies that the total number of nodes of indegree $1$ nodes is $n_1 = n - (n_0 + n_{2+}) > n - n/(3p) \geq 2n/3$.
Consider the subgraph $G_1$ induced by indegree-1 nodes of $G$. If we contract each connected component of $G_1$ into an edge, we obtain a tree $G'$ in which we have $n' = n_{2+} + n_0$ nodes and $m' = n_{2+} + n_0 - 1$ edges. As each edge represents at most one connected component of $G_1$, there are fewer than $n/(3p)$ components in $G_1$. Hence we have $n_1 > 2n/3$ indegree-1  nodes that are contained in less than  $n/(3p)$ connected components. Since components of size less than $p$ can contain at most $p \cdot n/(3p) = n/3$ nodes in total, then there have to be at least $2n/3 - n/3 = n/3$ nodes in the components of size at least $p$, hence $|\mspace{1mu}{\longpaths(G,p)}\mspace{1mu}| \geq n/3$.
\end{proof}

We now prove an upper bound on the time required for all nodes $v$ to know the layer $i$ in which they belong, that is, the layer $i$ satisfying $v \in V_i$.
\begin{lemma}
\label{rcp_is_O(log(n))}
$\RCP(p)$ can be computed in $O(\log n)$ rounds in the $\LOCAL$ and $\CONGEST$ models.
\end{lemma}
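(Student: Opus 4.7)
The plan is to combine the constant-factor shrinking bound from the preceding lemma with the observation that a single round of $\RCP(p)$ only requires each node to inspect a constant-radius neighborhood. First, I would argue that the number of layers $L$ produced by $\RCP(p)$ is $O(\log n)$: by the preceding lemma, at every step either $|\leaves(G_i)| \ge |V(G_i)|/(6p)$ or $|\longpaths(G_i,p)| \ge |V(G_i)|/3$, so
\[
|V(G_{i+1})| \le \Bigl(1 - \tfrac{1}{6p}\Bigr)\,|V(G_i)|.
\]
Since $p$ is a fixed constant, iterating this inequality gives $|V(G_L)| < 1$ after $L = O(\log n)$ rounds, so the partition is complete after $O(\log n)$ layers.

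Next I would argue that a single layer can be computed in $O(p) = O(1)$ communication rounds. Given that every node already knows to which of the layers $V_1,\dots,V_i$ it belongs (or that it is still in $G_i$), each surviving node $v$ can determine whether $v \in V_{i+1}$ using only information within radius $p$: $v$ is a leaf of $G_i$ iff none of its children survive to $G_i$, and $v$ belongs to $\longpaths(G_i,p)$ iff it has indegree $1$ in $G_i$ and the maximal path of indegree-$1$ nodes through $v$ in $G_i$ has at least $p$ vertices, which is decided by examining the labels (``removed in layer $j$'' or ``still alive'') of the nodes within distance $p$. This is a purely local predicate, so one layer costs $O(p)$ rounds.

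Multiplying, the whole procedure finishes in $O(p \log n) = O(\log n)$ rounds in the $\LOCAL$ model. For the $\CONGEST$ claim, I would simply note that in each round each node only has to transmit $O(\log n)$-bit information to its neighbors, namely the index of the layer to which it has been assigned (or a flag indicating that it is still alive), which trivially fits in a single $\CONGEST$ message. Hence the $O(p)$-round local computation per layer can be carried out in $\CONGEST$ with the same round complexity, and the overall bound of $O(\log n)$ rounds carries over.

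The main (very mild) obstacle is verifying that the ``long path'' condition really is computable in $O(p)$ rounds in $\CONGEST$: the predicate depends on the entire length-$p$ neighborhood of each indegree-$1$ node in $G_i$, but since $p$ is a constant this neighborhood can be fully learned with $O(p)$ rounds of $O(\log n)$-bit messages, and no aggregation beyond this is needed. Everything else is a direct combination of the shrinking lemma with locality.
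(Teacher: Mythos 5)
Your proof is correct and follows essentially the same approach as the paper: simulate the iterative removal process, noting that each layer is a constant-radius local predicate computable in $O(p)$ rounds even with $O(\log n)$-bit messages, and that the preceding shrinking lemma bounds the number of layers by $O(\log n)$. Your write-up is merely more explicit than the paper's about deriving the $(1-\tfrac{1}{6p})$ shrinking factor and about why the long-path test fits in $\CONGEST$.
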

\begin{proof}
	We build a virtual graph where we iteratively remove nodes for $O(\log n)$ rounds. At each step, nodes can check in $1$ round which neighbors have already been removed, and hence compute their indegree in the virtual graph. Nodes mark themselves as removed if their indegree is $0$, or if their indegree is $1$ and they are in paths of length at least $p$. The result of each node is the step in which they have been marked as removed. Notice that each step requires $O(p)$ rounds, even in $\CONGEST$, and since $p$ is a constant, this procedure requires $O(\log n)$ rounds in total.
\end{proof}

We are now ready to prove that, if \autoref{alg:path_flexible_form_finder} returns some problem  $\PfPi$, then $\PfPi$ (and hence $\Pi$) can be solved in $O(\log n)$ rounds, proving \autoref{path-flexible-form-impl-log(n)-ub}.

\begin{proof}[Proof of \autoref{path-flexible-form-impl-log(n)-ub}]
Let $\Pi$ be a problem having a certificate for $O(\log n)$ solvability. Then we will show that $\Pi$ is $O(\log n)$ solvable in the $\CONGEST$ model.

Let $\PfPi$ be the path-flexible form from \autoref{alg:path_flexible_form_finder}.
Let \[k = \max_{\sigma \in \Sigma(\PfPi)} (\flexibility(\sigma)) + \bigl|\Sigma(\PfPi)\bigr|,\] where $\flexibility(\sigma)$ is the flexibility of a state $\sigma$ in $\M(\PfPi)$ as defined in \autoref{flexible_state}.

Given a tree $T$, we solve $\PfPi$ as follows. We start by running the procedure $\RCP(k)$ on $T$. After this process, each node $v$ knows the set $V_i$, $1 \le i\le L$, which it belongs to. Then, we compute a distance-$k$ coloring by using a palette of $O(1)$ colors, which can be done in $O(\log^* n)$ rounds, even in $\CONGEST$, since $\dd$ is constant (using, e.g., Linial's algorithm~\cite{Linial92} on power graphs).

We then process the layers one by one, from layer $L$ to $1$. For each layer $i$, we label all (unlabeled) nodes in $V_i$ and all of their children. We need to deal with two cases, either we are labeling a long path or we are labeling a leaf node (both in $G_{i-1}$).

If a node $v_j \in V_i$ is a leaf node, then by definition its children were not processed yet so they are not labeled. Node $v_j$ could be labeled with its parent, or it is unlabeled. But in both cases, we can complete this labeling (by labeling the descendants of $v_j$ and possibly $v_j$ itself) as every label has a continuation below.

If we need to label a long path $P$, then by construction all inner nodes have no fixed labels so far. The topmost node can be labeled (as we may have already processed its parent) and the bottom-most node has indegree one, and thus it is connected to exactly one node from an upper layer, and hence it will have exactly one child already labeled. To label all nodes of $P$, we proceed in several steps. First, we exploit the precomputed distance-$k$ coloring to compute a $(k,k)$-ruling set on each path in parallel in constant time, by iterating through the constantly many color classes and adding to the ruling set all nodes of the processed color for which no node in distance at most $k-1$ is already contained in the ruling set.
The ruling set nodes split the path into constant-length chunks.
Next, for each endpoint of the path, we remove the closest ruling set node from the ruling set.
This ensures that all chunks are of length at least $k$.
Then, we label all nodes that still remain in the ruling set with an arbitrarily chosen label from $\Sigma(\PfPi)$.
Finally, we label the nodes in the constant-length chunks (and their children) in a consistent manner.
This is possible since each label used for the ruling set nodes is flexible and has a walk to any other label in $\M(\PfPi)$ (as proved by \autoref{properties_of_pfpi}) and the ruling set nodes are far enough apart (more than the flexibility of any label in $\PfPi$).

As all of these steps can be performed in constant time (provided the precomputed distance-$k$ coloring), we can label the whole tree in $L \cdot O(1) + O(\log^* n) = O(\log n)$ rounds.
\end{proof}

\subsection{Lower bound}\label{sec:poly-lb}

We prove that if \autoref{alg:path_flexible_form_finder} returns $\epsilon$, then the original problem $\Pi=(\delta, \Sigma, \CC)$ requires $n^{\Omega(1)}$ rounds to solve.

\paragraph{A sequence of labels.} If
\autoref{alg:path_flexible_form_finder} returns $\epsilon$ after $k$ iterations, then there is a sequence $\Sigma_1, \Sigma_2 \ldots, \Sigma_k$ of sets of labels meeting the following conditions and leading to a sequence $\Pi_0, \Pi_1, \Pi_2 \ldots, \Pi_k$ of $\LCL$ problems:
\begin{itemize}
    \item $\Pi_0 = \Pi$.
    \item For $1 \leq i \leq k$, $\Pi_i$ is the $\LCL$ problem that is the restriction of $\Pi_{i-1}$ to the label set $\Sigma(\Pi_{i-1})  \setminus \Sigma_i$, or equivalently, $\Pi_i$ is the restriction of the original $\LCL$ problem $\Pi$ to the label set $\Sigma \setminus (\Sigma_{1} \cup \Sigma_{2} \cup \cdots \cup \Sigma_i)$.
    \item For $1 \leq i \leq k$, $\Sigma_i$ is the set of path-inflexible labels in $\Pi_{i-1}$.
    \item $\Sigma_{\Pi_k} = \emptyset$, so $\Sigma = \Sigma_1 \cup \Sigma_2 \cup \cdots \cup \Sigma_k$ is a partition of $\Sigma$.
\end{itemize}

The set of labels $\Sigma_i \subseteq \Sigma$ consists of the labels removed during the $i$th iteration of \autoref{alg:path_flexible_form_finder}, as they are path-inflexible in  $\Pi_{i-1}$. As \autoref{alg:path_flexible_form_finder}  returns $\epsilon$ after $k$ iterations,  $\Sigma = \Sigma_1 \cup \Sigma_2 \cup \cdots \cup \Sigma_k$ is a partition of $\Sigma$.
The goal of this section is to show that solving $\Pi$ requires $\Omega(n^{1/k})$ rounds.

\paragraph{Centered graphs.}
A \emph{radius-$t$ centered graph} is a pair $(G,v)$ where $v \in V$ is a node in $G=(V,E)$ so that all $u\in V$ are within distance $t$ to $v$, and each $u \in V$ whose distance to $v$ is exactly $t$ is permitted to have incident edges of the form $e = \{u, ?\}$, indicating that $e$ is an external edge that connects $u$ to some unknown node outside of $G$. As we only consider rooted trees, we assume that all edges are oriented towards the root, so that each node has outdegree at most $1$.

Observe that the view of a node $v$ after $t$ rounds of communication in $\LOCAL$ can be described by a radius-$t$ centered graph.
Therefore, a $t$-round $\LOCAL$ algorithm on $n$-node graphs is simply an assignment of a label $\sigma \in \Sigma$ to each radius-$t$ centered graph $(G,v)$ where each node in $G$ has a distinct $O(\log n)$-bit identifier.

\paragraph{Terminology.}
In this section, we use the term radius-$t$ \emph{view} of $v$ to denote the corresponding radius-$t$ centered graph, and the term radius-$t$ \emph{neighborhood} of $v$ to denote the set of nodes that are within distance $t$ to $v$. Note that the radius-$t$ {view} of $v$ contains more information than the subgraph induced by the  radius-$t$ neighborhood of $v$, as the  radius-$t$ {view} of $v$ includes information about the external edges.

\paragraph{Permissible labels.}
From now on, we fix $A$ to be any $\LOCAL$ algorithm that solves $\Pi=(\delta, \Sigma, \CC)$ in $t$ rounds on $n$-node graphs.
Given such an algorithm $A$, we say that a label $\sigma \in \Sigma$ is \emph{permissible} for $(G,v)$ if there exists some assignment of distinct $O(\log n)$-bit identifiers to the nodes in $G$ such that the output of $v$ is $\sigma$ when we run $A$ on $G$.

Using the notion of permissible labels, to show that $\Pi=(\delta, \Sigma, \CC)$ cannot be solved in $t$ rounds on $n$-node graphs, it suffices to find a graph $G=(V,E)$ that has at most $n$ nodes such that there exists a node $v \in V$ such that no label $\sigma \in \Sigma$ is permissible for the radius-$t$ centered graph $(H,v)$ corresponding to the radius-$t$ view of $v$ in $G$.
The following lemma is useful for showing that some label $\sigma \in \Sigma$ is not permissible for some  radius-$t$ centered graph $(G,v)$.

\begin{lemma}\label{lem:lb-aux}
Let\/ $A$ be a\/ $t$-round $\LOCAL$ algorithm that solves\/ $\Pi=(\delta, \Sigma, \CC)$ for\/ $n$-node rooted trees.
Let\/ $(G,v)$ be a fixed radius-$t$ centered graph.
Let\/ $\Sigma^\diamond \subseteq \Sigma$ be a subset of labels, and let\/ $\Pi'$ be the restriction of\/ $\Pi$ to\/ $\Sigma \setminus \Sigma^\diamond$.
Suppose there exists a number\/ $K$ such that for any\/ $d \geq K$,
we can construct a rooted tree\/ $T$ containing a directed path\/ $P= v_1 \leftarrow v_2 \leftarrow \cdots \leftarrow v_{d+1}$ meeting the following conditions.
\begin{enumerate}
        \item The radius-$t$ views of\/ $v_1$ and\/ $v_{d+1}$ are isomorphic to\/ $(G,v)$.
        \item Let\/ $S$ denote the set of nodes\/ $v_1, v_2, \ldots, v_{d+1}$ and their children. Then each\/ $\sigma \in \Sigma^\diamond$ is not permissible for the radius-$t$ views of each\/ $u \in S$.
        \item The radius-$(t+2)$ neighborhood of\/ $v_i$ contains at most\/ $n$ nodes, for each\/ $1 \leq i \leq d+1$.
\end{enumerate}
Then, for each\/ $\sigma \in \Sigma \setminus \Sigma^\diamond$, the following holds:
if\/ $\sigma$ is path-inflexible in\/ $\Pi'$, then\/ $\sigma$ is not permissible for\/ $(G,v)$.
\end{lemma}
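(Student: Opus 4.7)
I would proceed by contradiction: suppose some $\sigma \in \Sigma \setminus \Sigma^\diamond$ that is path-inflexible in $\Pi'$ is nevertheless permissible for $(G,v)$. Because $\sigma$ is path-inflexible in $\Pi'$, the contrapositive of \autoref{flexible_state} guarantees that arbitrarily large integers $d$ exist for which there is \emph{no} walk $\sigma \leadsto \sigma$ of length exactly $d$ in $\M(\Pi')$. Fix such a $d$ satisfying also $d \geq \max(K, 2t+1)$ and use the hypothesis to obtain the tree $T$ together with the path $v_1 \leftarrow v_2 \leftarrow \cdots \leftarrow v_{d+1}$ meeting the three stated conditions.

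Next I would exhibit an identifier assignment on $T$ that forces both endpoints of the path to output $\sigma$. Since $\sigma$ is permissible for $(G,v)$, there is a distinct-ID assignment on $G$ under which $A$ outputs $\sigma$ at $v$. By condition~1 the radius-$t$ views of $v_1$ and $v_{d+1}$ are isomorphic to $(G,v)$, and by the choice $d \ge 2t+1$ these two views are node-disjoint, so I can install a copy of that ID assignment on each of them and complete to a globally injective ID function on $T$ using fresh identifiers elsewhere. Since outputs in $\LOCAL$ depend only on a node's radius-$t$ view, both $v_1$ and $v_{d+1}$ now output $\sigma$.

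Now read off the outputs $\lambda(u)$ of $A$ at every $u \in S$. Condition~2 forbids any label in $\Sigma^\diamond$ on such a node, so every $\lambda(u)$ lies in $\Sigma \setminus \Sigma^\diamond = \Sigma(\Pi')$. To apply correctness of $A$ locally, for each $i$ with $1 \le i \le d$ I would embed the radius-$(t+2)$ neighborhood of $v_i$ (which has at most $n$ nodes by condition~3) into some full $\delta$-ary rooted tree on at most $n$ nodes with an ID assignment extending the one already chosen, ensuring that the radius-$t$ views of $v_i$ and of each of its children are preserved. Since $A$ is correct on that completion and radius-$t$ views determine outputs, the configuration $(\lambda(v_i) : \lambda(v_{i+1}), \dots)$ at $v_i$ belongs to $\CC$; as all involved labels are in $\Sigma(\Pi')$, this configuration is in $\CC(\Pi')$. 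Projecting to the path-form, the pair $(\lambda(v_i) : \lambda(v_{i+1}))$ is an edge of $\M(\Pi')$ for each such $i$, so concatenating yields a walk $\lambda(v_1) \to \lambda(v_2) \to \cdots \to \lambda(v_{d+1})$ of length exactly $d$ from $\sigma$ to $\sigma$ in $\M(\Pi')$. This contradicts the choice of $d$, completing the proof.

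\paragraph{Main obstacle.}
The delicate step is the local-simulation argument that lets one invoke correctness of $A$ at each $v_i$ without running $A$ on the full (possibly larger than $n$-node) tree $T$. One must show that the radius-$(t+2)$ neighborhood of $v_i$ can always be completed to a valid full $\delta$-ary instance of size at most $n$, with distinct identifiers consistent with those already pinned down inside the two copies of $(G,v)$ and such that the radius-$t$ views of $v_i$ and of its children remain unchanged. Condition~3 is exactly what provides the room to perform this gluing, but verifying the compatibility of the chosen identifiers---particularly when $v_i$ lies close to one of the endpoints of the path---is where the technical work lies.
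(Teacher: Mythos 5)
Your proposal is correct and follows essentially the same route as the paper: pick, by path-inflexibility, an arbitrarily large $d$ admitting no length-$d$ walk $\sigma \leadsto \sigma$ in $\M(\Pi')$, pin identifiers so that both endpoints output $\sigma$, use condition~2 to confine all outputs along $P$ to $\Sigma \setminus \Sigma^\diamond$, and use condition~3 to localize correctness of $A$ to an at-most-$n$-node instance around each $v_i$ (you phrase the contradiction as ``every configuration is valid, hence a forbidden walk exists,'' while the paper finds the invalid configuration and localizes the failure --- the two are equivalent). One small caveat: you ask for a \emph{globally} injective identifier assignment on $T$, but $T$ can have far more than $\poly(n)$ nodes (its size grows with $d$), so such an assignment need not exist; the paper instead only ensures that identifiers are distinct within each radius-$(t+2)$ neighborhood of each $v_i$ (allowing repetitions far apart), which is all your local-simulation step actually uses, so the fix is immediate.
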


\begin{figure}
	\centering
  \includegraphics[page=23,scale=\figscale]{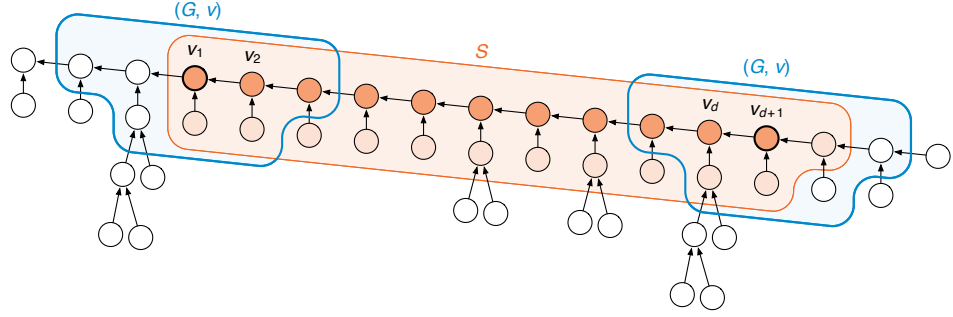}
	\caption{An illustration of \autoref{lem:lb-aux} with $t = 2$.}
	\label{fig:lb-aux}
  \Description{}
\end{figure}

See \autoref{fig:lb-aux} for an illustration of \autoref{lem:lb-aux}.
Before proving \autoref{lem:lb-aux}, let us give a brief, informal example of how we might apply it. We assume we have already established that algorithm $A$ cannot output labels from $\Sigma^\diamond$ in certain ``tricky'' radius-$t$ views. Tree $T$ is then constructed so that nodes of $S$ have tricky views, so algorithm $A$ is forced to solve the restriction $\Pi'$ of $\Pi$ around path $P$. Now if $\Pi'$ contains some path-inflexible labels, we can apply \autoref{lem:lb-aux} to rule out the possibility of using path-inflexible labels along path $P$, so we learn that the view $(G,v)$ is super-tricky, as it rules out not only $\Sigma^\diamond$, but also all path-inflexible labels of $\Pi'$. We can repeat this argument to discover many super-tricky views, by constructing different trees $T$.

This way we can start with a problem $\Pi_0$, and rule out the use of path-inflexible labels of $\Pi_0$ at least in some family of tricky views. Hence in those views we are, in essence, solving problem $\Pi_1$, which is the restriction of $\Pi_0$ to path-flexible labels. We repeat the argument, and rule out the use of path-inflexible labels of $\Pi_1$ in at least some family of super-tricky views, etc.

If we eventually arrive at an empty problem, we have reached a contradiction: algorithm $A$ cannot solve the original problem in some family of particularly tricky views. However, plenty of care will be needed to keep track of the specific family of views, as well as to make sure that we can still construct a suitable tree $T$ using only such views. We will get back to these soon, but this informal introduction will hopefully help to see why we first seek to prove this somewhat technical statement.

\begin{proof}[Proof of \autoref{lem:lb-aux}]
Fix a label $\sigma \in \Sigma \setminus \Sigma^\diamond$ such that $\sigma$ is path-inflexible in $\Pi'$. By the definition of path-flexibility, for any $K$, there exists an integer $d \geq K$ such that the following statement holds:
\begin{itemize}
    \item For any length-$d$
directed path
$P= v_1 \leftarrow v_2 \leftarrow \cdots \leftarrow v_{d+1}$ such that each node $v_i$ in $P$ is assigned a label $\lambda(v_i) \in \Sigma \setminus \Sigma^\diamond$, if the two end points $v_1$ and $v_{d+1}$ are labeled with $\sigma$,
then the labeling of $P$, interpreted as 1-ary tree, is not a valid solution for the path-form of $\Pi'$. More precisely, there must exist $1 \leq i \leq d+1$ such that $(\lambda(v_i) : \lambda(v_{i+1}))$ is not an allowed configuration in the path-form of $\Pi'$.
\end{itemize}

For the rest of the proof, we pick $d$ be a sufficiently large number such that the above statement holds. The precise choice of $d$ is to be determined. We consider a rooted tree $T$ that satisfies the lemma statement for this parameter $d$.
We assume that $\sigma$ is permissible for $(G,v)$, and then we will derive a contradiction.

Now consider the path $P$ in the lemma statement. Here all nodes in $P$ have exactly $\delta$ children, and all nodes of $P$ and their children can only be assigned labels from $\Sigma \setminus \Sigma^\diamond$ by $A$. Again, if $v_1$ and $v_{d+1}$ are labeled with $\sigma$, then there must exist $1 \leq i \leq d+1$ such that the node configuration of $v_i$ and its $\delta$ children is not an allowed configuration of $\Pi'$. Furthermore, it cannot be an allowed configuration for $\Pi$, either, as $\Pi'$ contains all configurations that consist of labels from $\Sigma \setminus \Sigma^\diamond$.

Consider any assignment of $O(\log n)$-bit identifiers to the nodes in $(G,v)$ that makes $A$ output $\sigma$, and apply this assignment to the radius-$t$ neighborhoods of $v_1$ and $v_{d+1}$ in $T$. Extend this identifier assignment to cover all nodes
that are within distance $t+2$ to some $v_i$ such that the radius-$(t+2)$ neighborhood of any $v_i$ does not contain repeated identifiers. This is possible because we assume that $T$ satisfies the property that the radius-$(t+2)$ neighborhood of $v_i$ contains at most $n$ nodes, for each $1 \leq i \leq d+1$, and because that we may choose $d \gg t$ to be sufficiently large so that the radius-$(t+2)$ neighborhood of each $u \in S$ cannot simultaneously intersect the radius-$(t+2)$ neighborhood of both $v_1$ and $v_{d+1}$.
Although some identifiers may appear several times
in $T$ and the total number of nodes in $T$ may exceed $n$. As we will later see, they are not problematic.

Consider the output labels of $v_i$ and its children, for $1 \leq i \leq d+1$, resulting from simulating $A$ on $T$. Note that $A$ by definition cannot output labels that are not permissible, and hence all of these nodes receive labels from $\Sigma \setminus \Sigma^\diamond$. Our choice of $d$ implies that there exists $1 \leq i \leq d+1$ such that the node configuration corresponding to $v_i$ and its children is not in~$\CC$. We take the subtree $T'$ induced by the union of the radius-$(t+1)$ neighborhood of  $v_i$ and its children. Since the radius-$(t+2)$ neighborhood of $v_i$ contains at most $n$ nodes, the rooted tree  $T'$ also contains at most $n$ nodes.
The output labelings of $v_i$ and its children due to simulating $A$ are the same in both $T$ and $T'$, as their radius-$t$ views are invariant of the underlying network being $T$ or $T'$. This violates the correctness of $A$, as  $T'$ contains at most $n$ nodes. Thus, $\sigma$ cannot be permissible for $(G,v)$.
\end{proof}

\paragraph{A hierarchical construction of rooted trees.}
 We first consider the following natural recursive construction of rooted trees. A \emph{bipolar tree} is a tree with two distinguished nodes $s$ and $t$, and it is also viewed as a rooted tree by setting $s$ as the root. The unique path connecting $s$ and $t$ is called the \emph{core path} of the bipolar tree. We consider the following operation $\bigoplus^x$.
\begin{itemize}
    \item Given a rooted tree $T$, define $\bigoplus^x T$ as the result of the following construction. Start with an $x$-node path $(v_1, v_2, \ldots, v_x)$.  Consider $x (\delta-1)$ copies  of $T$, indexed by two numbers $i$ and $j$: \[\bigl\{T_{i, j}'\bigr\}_{1 \leq i \leq x,\ 1 \leq j \leq \delta-1.}\]
    For $1 \leq i \leq x$ and $1 \leq j \leq \delta-1$, make the root of $T_{i, j}'$ a child of $v_i$ by adding an edge connecting them. Finally, set the two distinguished nodes of the resulting tree by $s = v_1$ and $t = v_x$.
\end{itemize}

\begin{figure}
\begin{center}
\includegraphics[page=16,scale=\figscale]{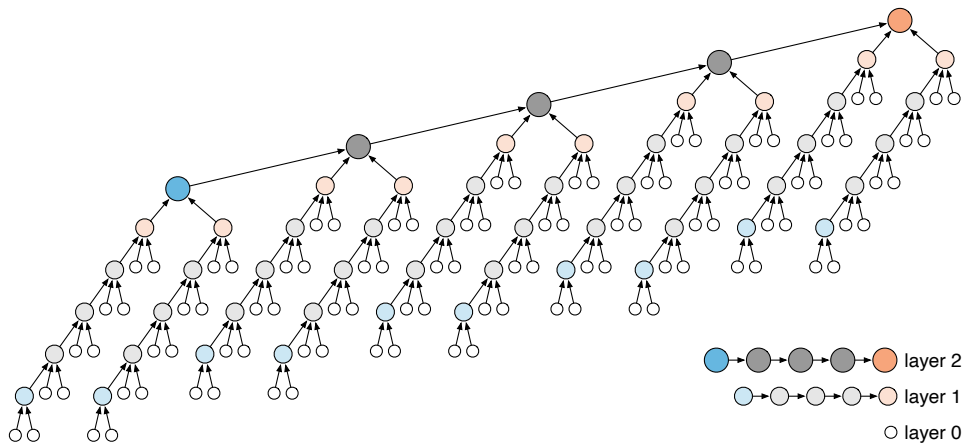}
\end{center}
\caption{Construction of the bipolar tree $T_k^x$, illustrated here for $\delta = 3$, $x = 5$, and $k = 2$.}\label{fig:bipolar-trees}
\Description{}
\end{figure}

Based on this operation, we construct a sequence of bipolar trees $T_0^x, T_1^x, \ldots, T_k^x$, where the nodes in $T_i^x$ are  partitioned into layers $0, 1, \ldots, i$; see \autoref{fig:bipolar-trees}:
\begin{itemize}
    \item For $i = 0$, define $T_0^x$ as the trivial bipolar tree consisting of only one isolated node $v$ with $s = v$ and $t = v$. We say that $v$ is in layer $0$.
    \item For $1 \leq i \leq k$, define  $T_i^x = \bigoplus^x T_{i-1}^x$. We say that all nodes in the core path of  $T_i^x$ are in layer $i$.
\end{itemize}

For any constant $\delta$, it is straightforward to see that the number of nodes in $T_k^x$ is $n = O(x^k)$, so $x = \Omega(n^{1/k})$. For each $1 \leq i \leq k$, the layer-$i$ nodes form paths consisting of exactly $x$ nodes. We call such an $x$-node path a layer-$i$ path.

The tree $T_k^x$ is analogous to the lower bound graph used in~\cite{CP19timeHierarchy} for establishing the tight $\Omega(n^{1/k})$ lower bound for some artificial $\LCL$ problem considered in~\cite{CP19timeHierarchy}.
The $\Omega(n^{1/k})$ lower bound proof in~\cite{CP19timeHierarchy} involves an argument showing that to solve the given $\LCL$ problem it is necessary that the two endpoints of a layer-$i$ path communicate with each other, and this costs at least $x = \Omega(n^{1/k})$ rounds.

When $x \geq 2$ and $1 \leq j \leq k$, there are three possible degrees in $T_j^x$: $1$, $\Delta-1 = \delta$, and $\Delta = \delta+1$. A node has degree $1$ if and only if it is in layer zero. A node has degree $\delta$ if and only if it is the root or it is the last node $v_x$ in some layer-$i$ path $v_1 \leftarrow v_2 \leftarrow \cdots \leftarrow v_x$. All the remaining nodes have degree exactly $\Delta = \delta +1$. Intuitively, the nodes with degree $\delta$ are in the boundary of the graph.

\paragraph{High-level ideas.}
To prove the $\Omega(n^{1/k})$ lower bound,  we will construct a sequence $S_1 \supseteq S_2 \supseteq \cdots \supseteq S_k$ of non-empty sets of radius-$t$ centered graphs, where $t = \Omega(n^{1/k})$. Each radius-$t$ centered graph used in our construction is isomorphic to the radius-$t$ view of some node $v$ in some graph of at most $n$ nodes.
By applying \autoref{lem:lb-aux} inductively with $\Sigma^\diamond = \Sigma_1 \cup \Sigma_2 \cup \cdots \cup \Sigma_{i-1}$ and $\sigma \in \Sigma_i$, we will show that for any given $t$-round algorithm $A$, the labels in $\Sigma_i$ are not permissible for the radius-$t$ centered graphs in $S_i$, for each $1 \leq i \leq k$.  Therefore, the given $\LCL$ problem $\Pi=(\delta, \Sigma, \CC)$ cannot be solved in $t$ rounds.

The construction of $S_1 \supseteq S_2 \supseteq \cdots \supseteq S_k$ requires somewhat complicated definitions. To motivate these forthcoming definitions, we begin with describing a natural attempt to prove the  $\Omega(n^{1/k})$ lower bound directly using the trees $T_k^x$ and see why it does not work.

Suppose that the given $\LCL$ problem $\Pi=(\delta, \Sigma, \CC)$ can be solved in $t=o(n^{1/k})$ rounds on $n$-node graphs by an algorithm $A$. We pick $x$ to be a sufficiently large number such that  $x = \Theta(t)$  and the number of nodes in $T_k^x$ is at most $O(x^k) < n$.
Recall that $\Sigma_1$ is the set of path-inflexible labels for the original $\LCL$ problem $\Pi=(\delta, \Sigma, \CC)$. Let $P = v_1 \leftarrow v_2 \leftarrow \cdots \leftarrow v_{x}$ be a layer-$1$ path in $T_k^x$. It is straightforward to see that for each $1+t < j < x-t$, the radius-$t$ view of each $v_j$ is identical. Let $(G,v)$ denote the corresponding radius-$t$ centered subgraph. By the path inflexibility of the labels in $\Sigma_1$, all labels $\sigma \in \Sigma_1$ are not permissible for $(G,v)$. Intuitively, this is because that we can find in $T_k^x$ a path connecting two views isomorphic to $(G,v)$ with a flexible path length. Similarly, we can apply the same argument for layer-$\ell$ paths for each $1 \leq \ell \leq k$, so we infer  that the labels in $\Sigma_1$ cannot be used to label nodes in layer $1$ or above.

For the inductive step, suppose that we already know that $\Sigma_1, \Sigma_2, \ldots \Sigma_{i-1}$ cannot be used to label nodes in layer $i-1$ or above. We consider the $\LCL$ problem that is the restriction of $\Pi$ to the set of labels $\Sigma_i \cup \Sigma_{i+1} \cup \cdots \cup \Sigma_k$. The above argument still works if we replace $\Sigma_1$ by $\Sigma_i$ and only consider the layers $i \leq \ell \leq k$, as we recall that $\Sigma_i$ is the set of path-inflexible labels when we restrict to the set  of labels $\Sigma_i \cup \Sigma_{i+1} \cup \cdots \cup \Sigma_k$.

It appears that this approach allows us to show that for each $1 \leq \ell \leq k$, the layer-$\ell$ nodes cannot be labeled using $\Sigma_1, \Sigma_2, \ldots, \Sigma_{\ell}$, so the given algorithm $A$ cannot produce any output label for the layer-$k$ nodes, contradicting the correctness of $A$. This approach, however, has one issue. Consider again the layer-$\ell$ path $P = v_1 \leftarrow v_2 \leftarrow \cdots \leftarrow v_{x}$ in $T_k^x$ in the above discussion. We are only able to show that the labels in $\Sigma_1$ cannot be used to label the nodes $v_j$ for each $1+t < j < x-t$, as the radius-$t$ view of the remaining nodes in $P$ are different. This is problematic because in the next level of induction, when we try to show that the labels in $\Sigma_2$ cannot be used to label some node $v$ that is in layer $2$ or above, the proof relies on the condition that the labels in $\Sigma_1$ cannot be used to label $v$ \emph{and its children}; see \autoref{lem:lb-aux} and its proof. In particular,  showing that $\Sigma_1$ cannot be used to label the middle nodes in $P$ whose radius-$t$ views are identical is not enough.

To resolve this issue, we need to consider essentially all possible radius-$t$ centered graph $(G,v)$ corresponding to a radius-$t$ view of a layer-$i$ node, and we have to make sure that for any sufficiently large number $d$, we can find a rooted tree $T$ that contains a length-$d$ directed path $P= v_1 \leftarrow v_2 \leftarrow \cdots \leftarrow v_{d+1}$ such that the radius-$t$ views of the two endpoints $v_1$ and $v_{d+1}$ are isomorphic to  $(G,v)$ and  all the intermediate nodes $v_2, v_3, \ldots, v_{d}$ are in layer $i$ or above, so that \autoref{lem:lb-aux} is applicable.

\begin{figure}
\begin{center}
\includegraphics[page=17,scale=\figscale]{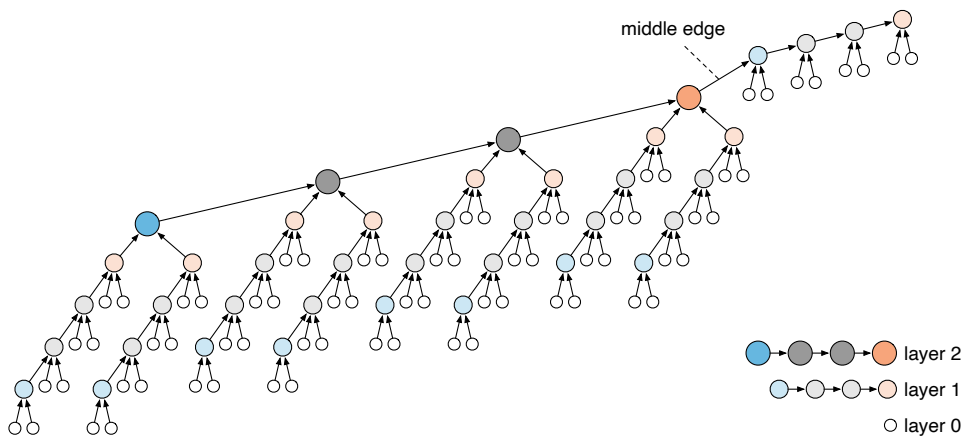}
\end{center}
\caption{Tree $T_{i \leftarrow j}^x$, for $\delta = 3$, $x = 4$, $j = 2$, and $i = 1$.}\label{fig:tree-concatenate}
\Description{}
\end{figure}

To deal with the views $(G,v)$ that do not belong to the central part of the long paths, we will need to concatenate two trees $T_i^x$ and $T_j^x$ for some $1 \leq i \leq k$ and $1 \leq j \leq k$ to obtain directed paths starting and ending with the same view $(G,v)$, so that we can apply \autoref{lem:lb-aux}. Such a concatenation will create new views that did not exist before in $T_k^x$. In order to capture all such views, we will consider the following definition $T_{i \leftarrow j}^x$ and build the argument around it; see \autoref{fig:tree-concatenate}.

\begin{itemize}
    \item For $1 \leq i \leq k$ and $1 \leq j \leq k$, define $T_{i \leftarrow j}^x$ as the result of the following construction. Let $T_1' = T_i^x$ (distinguished nodes are $s_1$ and $t_1$) and $T_2' = T_j^x$ (distinguished nodes are $s_2$ and $t_2$). Concatenate these two bipolar trees into a new bipolar tree by adding an edge $\{t_1, s_2\}$ and setting $s = s_1$ and $t = t_2$. We call $e = \{t_1, s_2\}$  the \emph{middle edge}.  The layer numbers of the nodes are kept when $T_i^x$ and $T_j^x$ are linked together into $T_{i \leftarrow j}^x$.
\end{itemize}

We make the following two  observations. For the special case of $i=j$, $T_{i \leftarrow i}^x$ is simply  $\bigoplus^{2x} T_{i-1}^x$.
For any number $t$, the number of nodes in the radius-$t$ neighborhood of any node in $T_{i \leftarrow j}^x$ is $O(t^{\max\{i,j\}}) = O(t^k)$, regardless of $x$.

\paragraph{A sequence of sets of radius-$t$ centered graphs.}
Now, we are ready to define the set of radius-$t$ centered graphs $S_i$, for each $1 \leq i \leq k$. In the definition of  $S_i$, we let $x$ be any integer such that $x \geq 2t+2$. It will be clear from the construction of $S_i$ that the definition of  $S_i$  is invariant of the choice of $x$, as long as $x$ is sufficiently large comparing with $t$.

The set $S_i$ consists of all radius-$t$ centered graphs $(G,v)$ such that there exists a node $u$ in the rooted tree $T_{a \leftarrow b}^x$ for some $a$ and $b$ meeting the following conditions.
\begin{itemize}
\item $i \leq a \leq k$ and $i \leq b \leq k$.
    \item The radius-$t$ view of $u$ is isomorphic to $(G,v)$.
    \item The radius-$t$ view of $u$ contains at least one node in the middle edge $e$ of $T_{a \leftarrow b}^x$.
    \item The layer number of $u$ is at least $i$.
\end{itemize}

Note that the threshold $x \geq 2t+2$ is chosen to make sure that for each node $u'$ in the radius-$t$ neighborhood of $u$, if $u'$ is not in layer zero, then its degree is exactly  $\Delta = \delta+1$, that is, $u'$ has one parent and $\delta$ children.

It is clear from the above definition of $S_i$ that we have $S_1 \supseteq S_2 \supseteq \cdots \supseteq S_k \neq \emptyset$.
Before we proceed, we prove a result showing that $S_i$ includes essentially all radius-$t$ view for layer-$i$ nodes in $T_i^x$. Formally, for each $1 \leq i \leq k$, we define $S_i^\ast$ as the set of all radius-$t$ centered graphs $(G,v)$ meeting the following conditions.
\begin{itemize}
    \item There exist $x \geq 1$, $i \leq j \leq k$, and a layer-$i$ node $u$ in $T_j^x$ such that the radius-$t$ view of $u$ in  $T_j^x$ is isomorphic to $(G,v)$. Furthermore, for each node $u'$ in the radius-$t$ neighborhood of $u$, if $u'$ is not in layer zero, then its degree is exactly  $\Delta = \delta+1$.
\end{itemize}

Intuitively, $S_i^\ast$ is the set of all possible radius-$t$ views for layer-$i$ nodes, excluding those near the boundary.
We exclude the views involving boundary nodes because we want to focus on the interior part of the graph where all nodes have the same degree $\Delta = \delta +1$, except the layer-0 nodes whose degree is always one.

\begin{lemma}\label{lem:lb-aux-6}
For each $1 \leq i \leq k$, we have $S_i^\ast \subseteq S_i$.
\end{lemma}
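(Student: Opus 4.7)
My plan is to take any $(G, v) \in S_i^\ast$, realised as the radius-$t$ view of a layer-$i$ node $u$ in some $T_j^x$ with $i \leq j \leq k$ and no degree-$\delta$ non-layer-$0$ node in $N_t(u)$, and to produce an explicit witness $u'$ in some $T_{a \leftarrow b}^{x''}$ certifying $(G, v) \in S_i$. I may assume $x \geq 2t+2$: shifting $u$ to the analogous position in $T_j^{x'}$ for any $x' \geq x$ preserves its view and its degree condition, because all visible path endpoints only retreat further.

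Introduce $j^\ast$, the maximum layer meeting $N_t(u)$; since $u$ itself is layer $i$, we have $i \leq j^\ast \leq j$. Write $m_\ell$ for the core-path position of $u$'s layer-$\ell$ ancestor and $D_\ell = m_i + \cdots + m_{\ell-1}$ for the distance from $u$ to that ancestor. The degree condition together with maximality of $j^\ast$ force $D_{j^\ast} \leq t$ and $D_{j^\ast + 1} \geq t + 1$, so the view sees only a short segment of the layer-$j^\ast$ core path around its $j^\ast$-ancestor. The natural case split is whether the left endpoint $v_1^{(j^\ast)}$ of that path lies inside $N_t(u)$.

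If $v_1^{(j^\ast)} \notin N_t(u)$, the top of the view is a boundary-free interior patch. I set $a = b = j^\ast$ and $x'' = x$. In $T_{j^\ast \leftarrow j^\ast}^{x''}$ (which by the excerpt's own observation equals $\bigoplus^{2x''} T_{j^\ast-1}^{x''}$), I place $u'$ so that its layer-$j^\ast$ ancestor is exactly the middle-edge endpoint $t_1$, and I copy $u$'s positions at the lower layers verbatim. Then $t_1 \in N_t(u')$ at distance $D_{j^\ast} \leq t$, so the middle edge lies in the view; since $x'' \geq 2t+2$ the far endpoints $s_1$ and $t_2$ stay outside $N_t(u')$, and the remaining layers agree because the lower-layer positions are identical.

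If $v_1^{(j^\ast)} \in N_t(u)$, then $u$'s view exposes exactly one external edge going up from $v_1^{(j^\ast)}$ to its layer-$(j^\ast + 1)$ parent (which must lie just outside $N_t(u)$), and this forces $j^\ast < j \leq k$, so $j^\ast + 1 \leq k$ is a legal upper index. Take $a = j^\ast + 1$, $b = j^\ast$, $x'' = x$, and place $u'$ in the second copy (a $T_{j^\ast}^x$) at positions $(m_i, \ldots, m_{j^\ast})$. A short distance calculation shows $s_2 \in N_t(u')$ at distance $D_{j^\ast + 1} - 1 = t$, so the middle edge is in the view, and $t_1$, which sits at distance $t + 1$ and is of layer $j^\ast + 1$, plays precisely the role of $v_1^{(j^\ast)}$'s missing parent in $u$'s view; isomorphism at lower layers follows because the positions coincide. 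The main obstacle is exactly this case: a naive symmetric choice $a = b = j$ pushes the witness further than $t$ from the middle edge whenever $j^\ast < j$, and recognising that $t_1$'s layer must impersonate $v_1^{(j^\ast)}$'s parent's layer forces the asymmetric $a = j^\ast + 1$, $b = j^\ast$.
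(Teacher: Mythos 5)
Your proof is correct and takes essentially the same route as the paper's: identify the highest layer $j^\ast$ meeting the radius-$t$ view and re-embed that view so that it touches the middle edge of a concatenated tree $T^x_{a\leftarrow b}$ with both indices at least $i$. The only divergence is your second case, where you pass to the asymmetric $T^x_{j^\ast+1\leftarrow j^\ast}$; the paper instead stays with $T^x_{j^\ast\leftarrow j^\ast}$ and slides the visible window along its length-$2x$ core path until it meets the middle edge, which also suffices because $v_1^{(j^\ast)}$ lies at distance exactly $t$ from $u$, so the view records only an external edge stub there and never the layer of the node behind it.
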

\begin{proof}
Consider the node $u$ in the graph $T_j^x$ in the definition of $S_i^\ast$.
Since  the radius-$t$ neighborhood of $u$ does not include any non-leaf node whose degree smaller than $\Delta = \delta+1$, we may assume that $x$ is an arbitrarily large number.
Let $u^\ast$ be any node in the radius-$t$ neighborhood of $u$ that has the highest layer number. Let $i^\ast$ be the layer number of  $u^\ast$. We have $i \leq i^\ast \leq j \leq k$. The radius-$t$ neighborhood of $u$ is confined to some subgraph $T_{i^\ast}^{x}$ of $T_{j}^{x}$ where $u^\ast$ lies on the core path of $T_{i^\ast}^{x}$. The graph $T_{i^\ast}^{x}$ can be viewed as a subgraph of  $T_{i^\ast \leftarrow i^\ast}^{x}$  such that $u^\ast$ is a node in the middle edge of  $T_{i^\ast \leftarrow i^\ast}^{x}$. As $u$ is within distance $t$ to $u^\ast$ and the radius-$t$ view of $u$ in $T_{i^\ast \leftarrow i^\ast}^{x}$, $T_{i^\ast}^{x}$, and the original graph $T_{j}^{x}$ are identical, we conclude that the radius-$t$ view of $u$ is isomorphic to some member in $S_{i}$ by considering the graph $T_{i^\ast \leftarrow i^\ast}^{x}$.
\end{proof}

\paragraph{The lower bound proof.}
For any given integer $t$, we pick $n$ to be the maximum number of nodes in the radius-$(t+2)$ neighborhood of any node in $T_{i \leftarrow j}^x$, over all choices of $i$, $j$, and $x$ such that $1 \leq i \leq k$, $1 \leq j \leq k$, and $x \geq 1$. It is clear that $n = O(t^k)$, or equivalently $t = \Omega(n^{1/k})$.
Therefore, to prove an $\Omega(n^{1/k})$ lower bound for the given problem $\Pi$, it suffices to show the non-existence of a $t$-round algorithm ${A}$ that solves $\Pi$ on $n$-node graphs.

Suppose such an algorithm $A$ exists. In \autoref{lem:lb-aux-2}, whose proof is deferred, we will prove by induction that all labels in $\Sigma_i$ are not permissible for all centered graphs in $S_i$, for each $1 \leq i \leq k$. In particular, this means that all labels in $\Sigma$ are not permissible for all centered graphs in $S_k$, as $\Sigma = \Sigma_1 \cup \Sigma_2 \cup \cdots \cup \Sigma_k$ and $S_1 \supseteq S_2 \supseteq \cdots \supseteq S_k$.

\begin{lemma}\label{lem:lb-aux-2}
For each $1 \leq j \leq k$, all labels in $\Sigma_j$ are not permissible for all centered graphs in $S_j$.
\end{lemma}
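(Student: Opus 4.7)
We proceed by induction on $j$, using \autoref{lem:lb-aux} as the main lever. At step $j$, fix $(G,v) \in S_j$, $\sigma \in \Sigma_j$, and set $\Sigma^\diamond = \Sigma_1 \cup \cdots \cup \Sigma_{j-1}$ so that $\Pi' = \Pi_{j-1}$ is the restriction of $\Pi$ to $\Sigma \setminus \Sigma^\diamond$. By the very definition of $\Sigma_j$, the label $\sigma$ is path-inflexible in $\Pi'$, so the hypothesis of \autoref{lem:lb-aux} on $\sigma$ is automatic. The inductive hypothesis applied to levels $1, 2, \ldots, j-1$, combined with the nesting $S_1 \supseteq S_2 \supseteq \cdots \supseteq S_{j-1}$, yields that every $\sigma' \in \Sigma^\diamond$ is not permissible for any view in $S_{j-1}$; this is precisely what is needed to discharge condition~(2) of \autoref{lem:lb-aux}, provided we can route the path $P$ through nodes (and their children) whose radius-$t$ views all lie in $S_{j-1}$.

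The bulk of the work is thus a construction task: given $(G,v)$ and any sufficiently large $d$, we must build a rooted tree $T$ with a directed path $P = v_1 \leftarrow v_2 \leftarrow \cdots \leftarrow v_{d+1}$ satisfying conditions (1)--(3) of \autoref{lem:lb-aux}. By the definition of $S_j$, we may write $(G,v)$ as the radius-$t$ view of a layer-$\geq j$ node $u$ in some $T_{a \leftarrow b}^x$ with $a, b \geq j$ whose view spans the middle edge. We build $T$ by taking a sufficiently deep and wide bipolar construction (morally, a long nested $T_{c}^{X}$ with $c \geq \max(a,b)$ and very large $X$, with additional copies of the middle-edge structure grafted at both ends) so that $T$ hosts two widely separated nodes whose constant-radius neighborhoods exactly reproduce the local environment of $u$ around its middle edge, connected by a layer-$\geq j$ directed path of length $d$. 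Condition~(3) of \autoref{lem:lb-aux} is immediate from the choice of $n$ as the maximum radius-$(t+2)$ neighborhood size over the family $T_{i \leftarrow j}^x$. Condition~(1) follows because the radius-$t$ view depends only on a constant-depth local structure that can be embedded identically at both endpoints. Condition~(2) is where \autoref{lem:lb-aux-6} enters: each $v_i$ lies on a layer-$\geq j$ core path far from any boundary, so its view belongs to $S_j^\ast \subseteq S_j \subseteq S_{j-1}$; and every child of $v_i$ either continues that core path (still in layer $\geq j$) or is the root of a $T_c^{x'}$ subtree with $c \geq j-1$, placing its view in $S_{j-1}^\ast \subseteq S_{j-1}$.

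The main obstacle is this construction: it must faithfully reproduce the arbitrary relative position of $u$ with respect to the middle edge of $T_{a \leftarrow b}^x$ at both endpoints of $P$---in particular when $a \neq b$, so that the view on one side of the middle edge differs qualitatively from the view on the other side---while simultaneously forcing the intermediate nodes and their children into layers whose views land in $S_{j-1}$. Once $T$ is in place, \autoref{lem:lb-aux} immediately gives that $\sigma$ is not permissible for $(G,v)$, closing the inductive step. The base case $j = 1$ is the same argument with $\Sigma^\diamond = \emptyset$, in which condition~(2) is vacuous and the construction is correspondingly simpler.
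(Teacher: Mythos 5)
Your high-level architecture matches the paper exactly: induction on $j$, $\Sigma^\diamond = \Sigma_1 \cup \cdots \cup \Sigma_{j-1}$, path-inflexibility of $\sigma$ in $\Pi_{j-1}$ by definition of $\Sigma_j$, and an application of \autoref{lem:lb-aux} once a suitable tree $T$ and path $P$ are built. However, the proof has a genuine gap precisely at the point you yourself flag as ``the main obstacle'': the construction of $T$ is only described ``morally,'' and the way you discharge condition~(2) is incorrect as stated. You claim that each $v_i$ has its view in $S_j^\ast \subseteq S_j$ and each child has its view in $S_{j-1}^\ast \subseteq S_{j-1}$. But the sets $S_\ell^\ast$ are defined only via single trees $T_c^x$, whereas the endpoints $v_1$ and $v_{d+1}$ must by definition of $S_j$ have views that span the \emph{middle edge} of some $T_{a\leftarrow b}^x$ --- such views are not in any $S_\ell^\ast$. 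The same problem recurs for every node of $P$ (and every child) whose radius-$t$ ball crosses one of the two edges where the pieces of $T$ are glued together: those views do not occur inside any single $T_c^{x'}$ either. This is exactly why the paper defines $S_i$ in terms of concatenations $T_{a\leftarrow b}^x$ rather than single trees, and why it needs the dedicated lemmas (\autoref{lem:lb-aux-3}, \autoref{lem:lb-aux-4}, \autoref{lem:lb-aux-7}) showing that each gluing edge of $T$ has a neighborhood isomorphic to the middle-edge neighborhood of some $T_{a'\leftarrow b'}^x$ with $a', b' \geq i$, so that the crossing views land in $S_{i-1}$ via the middle-edge clause of its definition, not via $S^\ast$.

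Relatedly, your sketched construction (``a long nested $T_c^X$ with $c \geq \max(a,b)$, with middle-edge structure grafted at both ends'') cannot reproduce the view of $u$ when $a \neq b$: the middle edge of $T_{a\leftarrow b}^x$ joins a $T_a^x$ to a $T_b^x$, and no such edge pattern occurs inside a single $T_c^X$. The paper's construction instead takes two \emph{full copies} $T_l, T_r$ of $T_{a\leftarrow b}^x$ (so both endpoints genuinely see the view $(G,v)$) and joins them through a separate middle tree $T_m = \bigoplus^{y} T_{i-1}^{x}$, where the core-path length $y = d - D_l - D_r - 1$ is the free parameter used to make $|P|$ equal to any prescribed $d \geq K$. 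Without this explicit construction and the accompanying isomorphism lemmas for the gluing edges, condition~(2) of \autoref{lem:lb-aux} is not verified and the inductive step does not close.
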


We now prove the main result of this section assuming \autoref{lem:lb-aux-2}.

\begin{lemma}\label{lem-no-path-flexible-lb}
If
\autoref{alg:path_flexible_form_finder}  returns $\epsilon$ after $k$ iterations, then $\Pi$ requires   $\Omega(n^{1/k})$  rounds to solve.
\end{lemma}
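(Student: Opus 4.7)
The plan is to argue by contradiction, leveraging \autoref{lem:lb-aux-2} (whose proof is deferred). Following the preceding paragraph, I would fix the parameter $n$ to be the maximum number of nodes in any radius-$(t+2)$ neighborhood of a node in any tree $T_{i \leftarrow j}^x$, which guarantees $n = O(t^k)$. Hence establishing the non-existence of a $t$-round algorithm on $n$-node graphs already implies the desired bound $t = \Omega(n^{1/k})$, and the whole task reduces to producing a single centered graph on which every possible output of an alleged $A$ is ruled out.

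So suppose, toward a contradiction, that a $t$-round $\LOCAL$ algorithm $A$ solves $\Pi$ on $n$-node graphs. Applying \autoref{lem:lb-aux-2} to each index $1 \leq j \leq k$ immediately yields that no label of $\Sigma_j$ is permissible for any centered graph in $S_j$. I would then combine this with the chain $S_1 \supseteq S_2 \supseteq \cdots \supseteq S_k$ and the partition $\Sigma = \Sigma_1 \cup \Sigma_2 \cup \cdots \cup \Sigma_k$: for any $(G,v) \in S_k$ and any label $\sigma \in \Sigma$, picking the unique $j$ with $\sigma \in \Sigma_j$ gives $(G,v) \in S_k \subseteq S_j$, so $\sigma$ is not permissible for $(G,v)$. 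Consequently no label at all is permissible for any $(G,v) \in S_k$.

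To finish, since $S_k$ is non-empty by construction, I fix some $(G,v) \in S_k$. By the definition of $S_k$, there is a node $u$ in $T_{k \leftarrow k}^x$ (for $x \geq 2t+2$ sufficiently large) whose radius-$t$ view is isomorphic to $(G,v)$. Let $H$ be the subtree induced by the radius-$(t+2)$ neighborhood of $u$; by our choice of $n$ we have $|V(H)| \leq n$, so $H$ is a valid input for $A$. Assigning distinct $O(\log n)$-bit identifiers to $H$ arbitrarily and running $A$ on $H$ must produce some output label $\sigma \in \Sigma$ at $u$. Since $u$'s radius-$t$ view in $H$ equals its radius-$t$ view in $T_{k \leftarrow k}^x$, namely $(G,v)$, the label $\sigma$ is by definition permissible for $(G,v)$—contradicting the previous paragraph.

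The main obstacle in this scheme lies entirely inside \autoref{lem:lb-aux-2}, whose proof will require a delicate induction over $j$ using \autoref{lem:lb-aux} on the concatenated trees $T_{i \leftarrow j}^x$: in the inductive step one must check that every radius-$t$ view arising at a layer-$i$ node (including near the junction between $T_i^x$ and $T_j^x$) can be realized as both endpoints of an arbitrarily long directed path whose nodes and their children all lie at layer $\geq i$, so that the exclusion of labels in $\Sigma_1 \cup \cdots \cup \Sigma_{i-1}$ on those nodes (from the previous inductive level) makes \autoref{lem:lb-aux} applicable to rule out $\Sigma_i$ on the view. The present lemma is then only a short bookkeeping step on top of that induction.
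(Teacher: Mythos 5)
Your argument matches the paper's proof of this lemma essentially step for step: both fix $n$ as the maximum size of a radius-$(t+2)$ neighborhood in any $T_{i\leftarrow j}^{x}$, invoke \autoref{lem:lb-aux-2} to conclude that no label in $\Sigma=\Sigma_1\cup\cdots\cup\Sigma_k$ is permissible for any view in the non-empty set $S_k$, and then derive a contradiction by running the alleged algorithm on the subgraph induced by a radius-$(t+2)$ neighborhood realizing such a view. The paper likewise defers all the real work to \autoref{lem:lb-aux-2}, so treating that lemma as a black box here is exactly what is intended.
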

\begin{proof}
In view of the above discussion, it suffices to show that the algorithm $A$ considered above does not exist.
Recall that $S_k \neq \emptyset$, and each $(G,v) \in S_k$ is isomorphic to the radius-$t$ view of some node $u$ in  $T_{a \leftarrow b}^x$ with  $1 \leq i \leq k$, $1 \leq j \leq k$, and $x \geq 2t+2$. Furthermore, the radius-$(t+2)$ neighborhood of $u$ contains at most $n$ nodes. If we run $A$ on the subgraph  induced by the radius-$(t+2)$ neighborhood of $u$ in  $T_{a \leftarrow b}^x$, then according to \autoref{lem:lb-aux-2} the algorithm  $A$ does not output any label for $u$, violating the correctness of $A$, so such an algorithm  $A$ does not exist.
\end{proof}

It is clear that \autoref{lem-no-path-flexible-lb} implies \autoref{no-path-flexible-impl-n-lb}.

\paragraph{Constructing a rooted tree $T$ for applying \autoref{lem:lb-aux}.}
For the rest of the section, we prove \autoref{lem:lb-aux-2}. We begin with describing the construction of the rooted tree $T$ needed for applying \autoref{lem:lb-aux} in the proof of  \autoref{lem:lb-aux-2}; see \autoref{fig:tree-T} for an illustration.

\begin{figure}
\begin{center}
\includegraphics[page=18,scale=\figscale]{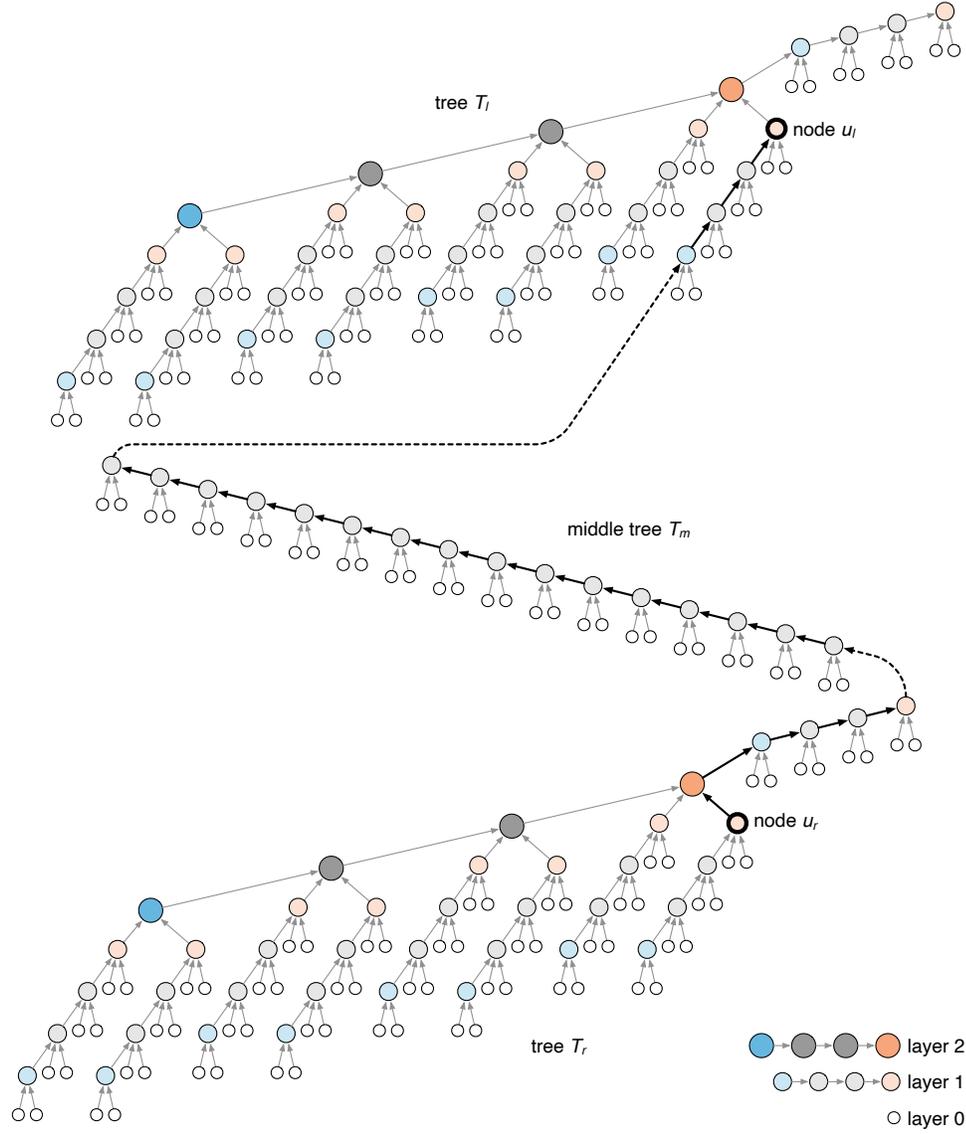}
\end{center}
\caption{Constructing a rooted tree $T$ for applying \autoref{lem:lb-aux}. Here $(G,v)$ was isomorphic to the view of some node $u$ in layer $1$ of $T_{2 \leftarrow 1}^x$. Therefore we construct two copies of $T_{2 \leftarrow 1}^x$, one of them is called $T_l$ and the other one is $T_r$, and we identify the nodes $u_l$ and $u_r$ that have views isomorphic to $(G,v)$. We identify the unique path $P_r$ from $u_r$ to the root of $T_r$ and the unique layer-$1$ path $P_l$ that takes us to $u_l$ (dark arrows). Finally, we connect $P_r$ through the middle tree to $P_l$. Note that the resulting path $P = P_l \leftarrow P_m \leftarrow P_r$ does not use layer-$0$ nodes.}\label{fig:tree-T}
\Description{}
\end{figure}

The construction of $T$ is parameterized by any $(G,v) \in S_j$, for any $1 \leq j \leq k$. Recall from the definition of $S_j$ that $(G,v)$ is isomorphic to the radius-$t$ view of some layer-$i$ node $u$ in $T_{a \leftarrow b}^x$ such that $1 \leq a \leq k$, $1 \leq b \leq k$, and $j \leq i \leq \min\{a,b\}$, and this radius-$t$ neighborhood contains at least one node in the middle edge $e$ of $T_{a \leftarrow b}^x$.

From now on we fix $x = 2t+4$.
Then $D = (k+1)x-1$ is an upper bound on the length of any root-to-leaf path in $T_{a \leftarrow b}^x$, for any $1 \leq a \leq k$ and $1 \leq b \leq k$. We define $K = 2D+x+1$.

The construction of $T$ is also parameterized by a distance parameter $d$ such that $d \geq K$. In the rooted tree $T$ that we construct, there will be a length-$d$ directed path $P = v_1 \leftarrow v_2 \leftarrow \cdots \leftarrow v_{d+1}$ satisfying some good properties to make \autoref{lem:lb-aux}  applicable.

Intuitively, $T$ will be the result of concatenating two copies $T_l$ and $T_r$ of $T_{a \leftarrow b}^x$ via a middle tree $T_m = \bigoplus^{y} T_{i-1}^{x}$, and then $P$ will be the unique directed path in $T$ connecting the two copies of $u$ in $T_l$ and $T_r$. Note that $T_m = \bigoplus^{y} T_{i-1}^{x}$ is simply a variant of $T_i^x =  \bigoplus^{x} T_{i-1}^{x}$ such that the length of the core path is $y$ instead of $x$. We select $y$ to make the length of $P$ equals $d$. The points of concatenation will be selected to ensure that all nodes in $P$ are in layer $i$ or above.
Formally, the construction of the rooted tree $T$ and its length-$d$ path $P$ is as follows.

\begin{description}
    \item[The two trees $T_l$ and $T_r$.] Recall that $u$ is a layer-$i$ node in $T_{a \leftarrow b}^x$ whose radius-$t$ neighborhood  contains at least one node in the middle edge. We consider two copies of $T_{a \leftarrow b}^x$, called $T_l$ and $T_r$. The two copies of $u$ in $T_l$ and $T_r$ are called $u_l$ and $u_r$. Similarly, we write $(s_l, t_l)$ and $(s_r, t_r)$ to denote the two distinguished nodes of $T_l$ and $T_r$.
    \item[The path $P_l$.] If $u_l$ is on the core path of $T_l$, then we define $P_l$ to be the unique directed path $u_l \leftarrow \cdots \leftarrow t_l$. Otherwise, then consider the unique layer-$i$ path $w_1 \leftarrow \cdots \leftarrow w_x$ that contains  $u_l$, and then we define $P_l$ to be the unique directed path $u_l \leftarrow \cdots w_x$. Observe that all nodes in $P_l$ are in layer $i$ or above.
    \item[The path $P_r$.] We define $P_r$ to be the unique directed path $s_r \leftarrow \cdots \leftarrow u_r$ in $T_r$. Observe that all nodes in $P_r$ are in layer $i$ or above.
    \item[The middle tree $T_m$ and its path $P_m$.] Let $D_l$ and $D_r$ denote the lengths of $P_l$ and $P_r$. Note that $D_l \leq D$ and $D_r \leq D$. We define $T_m = \bigoplus^{y} T_{i-1}^{x}$, where $y = d - D_l - D_r - 1$, and we define $P_m$ as the core path of $T_m$. Note that we must have $y \geq x$, due to the assumption $d \geq K$ and our choice of $K=2D+x+1$. Clearly, all nodes in $P_m$ are in layer $i$.
    \item[Concatenation.] Now, we are ready to define the rooted tree $T$ and its associated length-$d$ directed path $P$. We construct the directed path $P$ by adding two edges to concatenate the three paths $P_l$, $P_m$, and $P_r$ together: $P = P_l \leftarrow P_m \leftarrow P_r$. The length of $P$ is exactly $d$ due to our choice of $y = d - D_l - D_r - 1$. The rooted tree $T$ is the result of this concatenation of $T_l$, $T_m$, and $T_r$.
\end{description}

The radius-$t$ views of $u_l$ is isomorphic to $(G,v)$, regardless of the underlying graph being $T$ or $T_l$.
Similarly, the radius-$t$ views of $u_r$ is isomorphic to $(G,v)$, regardless of the underlying graph being $T$ or $T_r$.
Hence the radius-$t$ neighborhoods of the two endpoints of $P$ in $T$ are isomorphic to the given radius-$t$ centered graph $(G,v)$.
We also note that all nodes in $P$ are in layer $i$ or above, so all children of nodes in $P$ are in layer $i-1$ or above.

Next, we will prove some additional properties of $T$ and $P$. We begin with  \autoref{lem:lb-aux-3} and \autoref{lem:lb-aux-4}. Informally, in these lemmas we show that the local view seen from an edge connecting $T_l$, $T_m$, and $T_r$ is isomorphic to the local view seen from the middle edge $e$ of $T_{a' \leftarrow b'}^x$, for some choices of $i \leq a' \leq k$ and $i \leq b' \leq k$.

\begin{lemma}\label{lem:lb-aux-3}
Let $e_l= u' \leftarrow v'$ be the edge connecting $T_l$ and $T_m$. Let $U_l$ be the union of the radius-$(x-1)$ neighborhood of $u'$ and $v'$ in $T$. There is a subgraph $T_l'$ of  \ $T$ isomorphic to $T_{a' \leftarrow b'}^x$ for some $i \leq a' \leq k$ and $b' = i$ such that $T_l'$ contains all  nodes in $U_l$. In the isomorphism, the  edge $e_l$ is mapped to the middle edge $e$ of $T_{a' \leftarrow b'}^x$.
\end{lemma}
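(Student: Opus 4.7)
The plan is to exhibit $T_l'$ as the union of two carefully chosen subtrees of $T$, one inside $T_l$ and one inside $T_m$, glued along the edge $e_l$. Recall that $v' = s_m$ is the root of $T_m = \bigoplus^{y} T_{i-1}^{x}$ with $y \geq x$, while $u'$ is the endpoint of $P_l$, which is either $t_l$ (if $u_l$ lies on the core path of $T_l$) or the bottom node $w_x$ of the layer-$i$ path through $u_l$ (otherwise). In both subcases $u'$ is the node at which some $T_?^x$-structure of $T_l$ naturally ends from below, and $v'$ is the root of a $T_i^x$-structure sitting on top of $T_m$, so we are exactly in the picture of the middle edge of a bipolar tree.

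On the $T_m$-side I would take the subgraph consisting of the first $x$ nodes $v_1^{(m)} = s_m, v_2^{(m)}, \dots, v_x^{(m)}$ of the core path of $T_m$ together with all $\delta-1$ hanging copies of $T_{i-1}^x$ at each of them; this is visibly a copy of $T_i^x$ rooted at $v'$, and it will serve as the $T_{b'}^x$-piece with $b' = i$. A direct distance check shows that it contains every node of $T_m$ within distance $x-1$ of $v'$: the core path is covered up to $v_x^{(m)}$ at distance exactly $x-1$, and any excursion from $v_j^{(m)}$ into a hanging subtree has remaining budget $x-1-(j-1)$, which stays inside that subtree. On the $T_l$-side, if $u' = t_l$, I would take the $T_b^x$-half of $T_l = T_{a \leftarrow b}^x$, in which $u'$ plays the role of $t$, and set $a' = b$; since $i \leq b \leq k$ this is admissible. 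If instead $u' = w_x$, the node $u'$ is the $t$-endpoint of the core path of the $T_i^x$ hanging subtree of $T_l$ that contains $u_l$, and I would take this entire $T_i^x$ with $a' = i$. A symmetric distance argument then shows that in either subcase the chosen subtree of $T_l$ covers every node of $T_l$ within distance $x-1$ of $u'$.

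Gluing the two halves along $e_l = \{u',v'\}$ yields $T_l'$, and the natural identifications---the $T_l$-side with $T_{a'}^x$ placing $u'$ at $t$, the $T_m$-side with $T_i^x$ placing $v'$ at $s$---combine into an isomorphism $T_l' \cong T_{a' \leftarrow b'}^x$ that sends $e_l$ to the middle edge, since by definition the latter joins $t_1$ of the upper piece to $s_2$ of the lower piece. To conclude $U_l \subseteq T_l'$, I would observe that any node of $T$ in the radius-$(x-1)$ neighborhood of $v'$ that happens to lie in $T_l$ must reach $v'$ through $u'$ and is therefore already in the radius-$(x-1)$ neighborhood of $u'$ inside $T_l$, and symmetrically for $u'$ into $T_m$, so the two halves together contain all of $U_l$. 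The step I expect to require the most care is the distance bookkeeping in the second paragraph: the radius $x-1$ is calibrated precisely so that each neighborhood reaches exactly the last node of its enclosing core path but no further, and verifying this cleanly in both the core-path and hanging-subtree directions is what makes the chosen $T_{a'}^x$ and $T_{b'}^x = T_i^x$ just large enough to contain $U_l$.
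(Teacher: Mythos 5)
Your proposal is correct and follows essentially the same construction as the paper: take the $x$-node layer-$a'$ path of $T_l$ ending at $u'$ together with all its descendants (which is the $T_b^x$-half when $u'=t_l$, and the hanging $T_i^x$ containing $u_l$ otherwise), glue it via $e_l$ to the copy of $T_i^x$ formed by the first $x$ core-path nodes of $T_m$ and their descendants. Your distance bookkeeping for why $U_l$ is contained in this union is a correct elaboration of a step the paper dismisses as clear.
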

\begin{proof}
Let $a'$ be the layer number of $u'$. Note that we have either $a' = i$ or $a' = b$. In any case,  $i \leq a' \leq k$.
Consider the $2x$-node path $u_1' \leftarrow \cdots \leftarrow u_x' = u' \leftarrow v' = v_1' \leftarrow \cdots \leftarrow v_x'$  in $T$ defined as follows.
\begin{itemize}
    \item $e_l= u' \leftarrow v'$ is the edge connecting $T_l$ and $T_m$.
    \item $u_1' \leftarrow \cdots \leftarrow u_x'$ is the unique layer-$a'$ path in $T_l$ containing $u'$.
    \item $v_1' \leftarrow \cdots \leftarrow v_x'$ is the path formed by the first $x$ nodes in $P_m$.
\end{itemize}
 We consider the subgraph $T_{a'}^x$ induced by the nodes  $u_1' \leftarrow \cdots \leftarrow u_x'$  and their descendants in $T_l$. As $v_1' \leftarrow \cdots \leftarrow v_x'$ are the first $x$ nodes in the $y$-node core path of $T_m = \bigoplus^{y} T_{i-1}^{x}$,  the nodes $v_1' \leftarrow \cdots \leftarrow v_x'$  and their descendants induce a subgraph $T_{b'}^x$ with $b' = i$.
We choose $T_l'$ to be the union of these two subgraphs $T_{a'}^x$ and  $T_{b'}^x$, together with the edge $e_l= u' \leftarrow v'$. It is clear that $T_l'$ is isomorphic to $T_{a' \leftarrow b'}^x$ and contains all nodes in $U_l$.
\end{proof}

\begin{lemma}\label{lem:lb-aux-4}
Let $e_r= u' \leftarrow v'$ be the edge connecting $T_m$ and $T_r$. Let $U_r$ be the union of the radius-$(x-1)$ neighborhood of $u'$ and $v'$ in $T$. There is a subgraph $T_r'$ of \ $T$ isomorphic to $T_{a' \leftarrow b'}^x$ for  $a' = i$ and $b' = a$ such that $T_r'$ contains all  nodes in $U_r$. In the isomorphism, the  edge $e_r$ is mapped to the middle edge $e$ of $T_{a' \leftarrow b'}^x$.
\end{lemma}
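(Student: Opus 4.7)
The proof mirrors that of \autoref{lem:lb-aux-3}, reversing the roles of the two sides of the middle edge. By construction, $u' = m_y$ is the bottom of $P_m$ and therefore lies in $T_m$ at layer $i$, while $v' = s_r$ is the root of $T_r = T_{a \leftarrow b}^x$ and therefore lies on the core path of the $T_a^x$ part of $T_r$ at layer $a$. This is exactly the local picture around the middle edge of $T_{i \leftarrow a}^x$, matching the claimed values $a' = i$ and $b' = a$.

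Concretely, the plan is to consider the $2x$-node path $u_1' \leftarrow \cdots \leftarrow u_x' = u' \leftarrow v' = v_1' \leftarrow \cdots \leftarrow v_x'$, where $u_1', \ldots, u_x'$ are the last $x$ nodes of $P_m$ (which exist because $d \geq K = 2D+x+1$ forces $y = d - D_l - D_r - 1 \geq x$) and $v_1', \ldots, v_x'$ are the full core path of the $T_a^x$ part of $T_r$, starting at $v_1' = s_r$. I would then define $T_r'$ as the subgraph of $T$ consisting of (i) the path $u_1', \ldots, u_x'$ together with the $\delta-1$ copies of $T_{i-1}^x$ hanging from each of them in $T_m$, (ii) the path $v_1', \ldots, v_x'$ together with the $\delta-1$ copies of $T_{a-1}^x$ hanging from each of them in $T_r$, and (iii) the edge $e_r$. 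By the definition of $\bigoplus$, parts (i) and (ii) are isomorphic to $T_i^x$ and $T_a^x$ respectively (with $u_1'$ acting as $s_1$, $u_x'$ as $t_1$, and $v_1'$ as $s_2$), so $T_r'$ is isomorphic to $T_{i \leftarrow a}^x$ and $e_r$ is mapped to its middle edge.

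Containment of $U_r$ follows by a short distance argument. Any node within distance $x-1$ of $u' = u_x'$ either travels upward along $P_m$ (reaching at most $u_1'$, whose parent on $P_m$ lies at distance $x$) and possibly descends into a hanging $T_{i-1}^x$ subtree, or crosses $e_r$ and descends into the $T_a^x$ part of $T_r$; all such nodes lie in $T_r'$. Symmetrically, from $v' = v_1'$ the farthest node reachable upward is $u_2'$ (since $u_1'$ sits at distance $x$), while downward the search is confined to the $T_a^x$ part ($s_2$ of the $T_b^x$ portion being at distance $x$).

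The main technical point will be verifying that the degrees match at the boundary of $T_r'$ under the isomorphism. The critical cases are $u_x' = m_y$, which in $T$ has degree $\delta+1$ (parent $u_{x-1}'$, the $\delta-1$ hanging $T_{i-1}^x$ roots, and $s_r$ via $e_r$) matching $t_1'$ of $T_{i \leftarrow a}^x$, and $v_1' = s_r$, which in $T$ has degree $\delta+1$ (parent $u'$ via $e_r$, the $\delta-1$ hanging $T_{a-1}^x$ roots, and $v_2'$) matching $s_2'$. For $u_1'$, which in $T$ has one extra parent further up on $P_m$, I would simply omit that edge from $T_r'$; this is permitted because $T_r'$ needs only to be a subgraph of $T$, not an induced one.
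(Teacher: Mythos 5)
Your proposal is correct and follows essentially the same route as the paper's proof: you identify the last $x$ nodes of the core path of $T_m$ together with their descendants as a copy of $T_i^x$, take the $T_a^x$ half of $T_r$, and join them via $e_r$ to obtain $T_{i \leftarrow a}^x$ with $e_r$ as the middle edge. The extra details you supply (the distance argument for containment of $U_r$ and the remark that $T_r'$ need only be a subgraph, not an induced one) are consistent with, and slightly more explicit than, the paper's write-up.
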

\begin{proof}
Recall that $T_m = \bigoplus^{y} T_{i-1}^{x}$ with $y \geq x$ and $T_r = T_{a \leftarrow b}^x$ is formed by connecting $T_a^x$ and $T_b^x$. We write $v_1' \leftarrow \cdots \leftarrow v_y'$ to denote the core path of $T_m$, and we let $T' = \bigoplus^{x} T_{i-1}^{x} = T_i^x$ be a subtree of $T_m$ induced by the $x$-node subpath $v_{y-x+1}' \leftarrow \cdots \leftarrow v_y'$ and the descendants of the nodes in this subpath.

The edge  $e_r= u' \leftarrow v'$  connects the two trees $T' = T_i^x$ and $T_a^x$, as $u'$ is the distinguished node $t$ of $T' = T_i^x$ and $v'$ is the distinguished node $s$ of $T_a^x$. Therefore, we may take $T_r'$ to be the union of $T' = T_i^x$ and $T_a^x$, together with the edge  $e_r= u' \leftarrow v'$. The tree $T_r'$ is isomorphic to $T_{i \leftarrow a}^x$ and contains all nodes in $U_r$.
\end{proof}

Combining \autoref{lem:lb-aux-3} and \autoref{lem:lb-aux-4}, in \autoref{lem:lb-aux-5} we show that the local neighborhood of any node in $T$ is isomorphic to the local neighborhood of some node in $T_{a' \leftarrow b'}^{x'}$, for some choices of $1 \leq a' \leq k$, $1 \leq b' \leq k$ and $x' \geq 1$. In the proof of \autoref{lem:lb-aux-5} we utilizes the fact that $x = 2t+4$.

\begin{lemma}\label{lem:lb-aux-5}
For each node $w$ in $T$, the subgraph induced by its radius-$(t+2)$ neighborhood is isomorphic to the subgraph induced by the radius-$(t+2)$ neighborhood of some node $w'$ in $T_{a' \leftarrow b'}^{x'}$ for some $1 \leq a' \leq k$, $1 \leq b' \leq k$ and $x' \geq 1$.
\end{lemma}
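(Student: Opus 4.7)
The plan is to proceed by case analysis on the location of $w$ relative to the three building blocks $T_l$, $T_m$, $T_r$ of $T$, and on whether its radius-$(t+2)$ neighborhood crosses the two concatenation edges $e_l$ and $e_r$. The key numerical fact I will exploit throughout is that $x = 2t+4$, so that $2(t+2) = x$.

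The first two cases are immediate. If the radius-$(t+2)$ neighborhood of $w$ lies entirely within $T_l$ (respectively $T_r$), then its induced subgraph in $T$ equals its induced subgraph in $T_l$ (resp.\ $T_r$), and since $T_l \cong T_{a \leftarrow b}^x$ we simply take $w'$ to be the image of $w$ under this isomorphism and set $(a', b', x') = (a, b, x)$.

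Next, I will reduce the two "crossing" cases to \autoref{lem:lb-aux-3} and \autoref{lem:lb-aux-4}. Suppose the neighborhood of $w$ contains both endpoints $u', v'$ of $e_l$. Because $u'$ and $v'$ are adjacent and both are within distance $t+2$ of $w$, the closer of the two, call it $q$, satisfies $d(w, q) \leq t+1$. Then every $z$ with $d(z, w) \leq t+2$ satisfies $d(z, q) \leq (t+2) + (t+1) = x-1$, so $z \in U_l$. By \autoref{lem:lb-aux-3}, the radius-$(t+2)$ neighborhood of $w$ is contained in the subgraph $T_l' \cong T_{a' \leftarrow b'}^x$, and $w'$ is obtained as the image of $w$ under this isomorphism. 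The case of the neighborhood crossing $e_r$ is handled identically via \autoref{lem:lb-aux-4}.

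The remaining case is $w \in T_m$ whose radius-$(t+2)$ neighborhood is contained in $T_m$ (so in particular reaches neither $u'$ nor $s_r$). Here I plan to embed into $T_{i \leftarrow i}^x = \bigoplus^{2x} T_{i-1}^x$. The crucial observation is that $T_m = \bigoplus^y T_{i-1}^x$ and $T_{i \leftarrow i}^x$ share the same nested structure: a layer-$i$ core path with $\delta-1$ identical copies of $T_{i-1}^x$ hanging off each core node. They differ only in the length of their top-level core paths ($y$ versus $2x$). Since $2x = 4t+8 \geq 2(t+2)+1$, I can select a position $w'$ in $T_{i \leftarrow i}^x$ at distance at least $t+2$ from both endpoints of its core whose local subtree structure matches that of $w$; the induced subgraphs on the two radius-$(t+2)$ neighborhoods then coincide.

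I expect the main obstacle to lie in this last case: the correspondence $w \mapsto w'$ must be made precise when $w$ does not lie on $T_m$'s core but deep inside one of the hanging $T_{i-1}^x$ subtrees, and one must verify that if $w$'s neighborhood \emph{touches} (but does not cross) the endpoints of $T_m$'s core, this causes no mismatch — any edges present in $T$ but missing in $T_{i \leftarrow i}^x$ at those boundary nodes lie outside the induced subgraph on the radius-$(t+2)$ neighborhood, so they do not appear on either side of the isomorphism.
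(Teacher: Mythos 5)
Your proposal is correct and follows essentially the same route as the paper: the confinement-to-$T_l$/$T_r$ cases are immediate, the cases where the neighborhood crosses $e_l$ or $e_r$ are reduced to \autoref{lem:lb-aux-3} and \autoref{lem:lb-aux-4} via the same computation $(t+2)+(t+1)=x-1$ showing $U\subseteq U_l$ (resp.\ $U\subseteq U_r$), and the $T_m$-confined case is handled by re-embedding the neighborhood into a tree of the form $T_{a'\leftarrow b'}^{x'}$. The only (immaterial) deviation is your choice of target in that last case, $T_{i\leftarrow i}^{x}$ together with a sliding-window placement of $w'$ on its length-$2x$ core path, whereas the paper instead uses that $T_m=\bigoplus^{y}T_{i-1}^{x}$ sits inside a concatenated tree with parameter $y\ge x$; both choices work, and your boundary discussion about edges incident to the endpoints of $T_m$'s core path correctly addresses the point where care is needed.
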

\begin{proof}
The proof is done by a case analysis.
We write $U$ to denote the set of nodes within the radius-$(t+2)$ neighborhood of $w$ in $T$. If $U$ is completely confined in one of $T_l$ or $T_r$, then the lemma  holds with $T_{a' \leftarrow b'}^{x'} = T_{a \leftarrow b}^{x}$, as both $T_l$ and $T_r$ are isomorphic to $T_{a \leftarrow b}^{x}$. If $U$ is completely confined in $T_m$, then the lemma  holds with $T_{a' \leftarrow b'}^{x'}  = T_{i-1 \leftarrow i-1}^{y}$, as $T_m = \bigoplus^{y} T_{i-1}^{x}$ is a subgraph of $T_{i-1 \leftarrow i-1}^{y}$, as we recall that $y \geq x$.

If $U$ contains the edge $e_l$ connecting $T_l$ and $T_m$, then $U \subseteq U_l$, where $U_l$ is defined in \autoref{lem:lb-aux-3}. Note that the fact that $x=2t+4$ is used to show that $U \subseteq U_l$.
Therefore, the lemma  holds with the tree $T_{a' \leftarrow b'}^x$ considered in \autoref{lem:lb-aux-3}.

Finally, the remaining case is that $U$ contains the edge $e_r$ connecting $T_m$ and $T_r$. Similar to the previous case, using \autoref{lem:lb-aux-4} we obtain that  $U \subseteq U_r$, so  the lemma  holds with the tree $T_{a' \leftarrow b'}^x$ considered in \autoref{lem:lb-aux-4}.
\end{proof}

Same as the notation used in \autoref{lem:lb-aux}, for the rest of the section, we write $S$ to denote the set of the nodes in $P$ and their children.
Using \autoref{lem:lb-aux-6}, \autoref{lem:lb-aux-3}, and \autoref{lem:lb-aux-4}, we prove \autoref{lem:lb-aux-7}, which shows that the radius-$t$ view of each node in $S$ belongs to $S_{i-1}$.

\begin{lemma}\label{lem:lb-aux-7}
If $i > 1$, then the radius-$t$ view of each node in $S$ belongs to $S_{i-1}$.
\end{lemma}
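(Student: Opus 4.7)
The plan is to split the analysis by where the radius-$t$ view of $w$ lies relative to the two junction edges $e_l$ and $e_r$. For each of the resulting cases, I would exhibit a witness tree $T_{a' \leftarrow b'}^{x'}$ (with $a', b' \geq i-1$) together with a witness node $u'$ at layer at least $i-1$ whose view is isomorphic to $w$'s view and contains a node of the middle edge.

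If the view intersects $e_l$, then since $w$ is within distance $t$ of an endpoint of $e_l$ and $x = 2t+4$ gives $2t < x - 1$, \autoref{lem:lb-aux-3} produces a subgraph $T_l' \subseteq T$ isomorphic to $T_{a' \leftarrow i}^x$ for some $i \leq a' \leq k$ that already contains the entire radius-$t$ neighborhood of $w$; under this isomorphism $e_l$ maps to the middle edge, so $w$'s view contains a middle-edge node. The isomorphism preserves the recursive layer structure, and since $w \in S$ has layer at least $i-1$ in $T$, the corresponding node in $T_{a' \leftarrow i}^x$ inherits layer at least $i-1$. With $a' \geq i \geq i-1$, all four conditions of $S_{i-1}$ are satisfied. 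The case where the view intersects $e_r$ is symmetric via \autoref{lem:lb-aux-4}.

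Otherwise the view lies entirely inside one of $T_l$, $T_m$, or $T_r$. If this component has a middle edge (in the $T_{a' \leftarrow b'}^{x'}$ sense) that appears in $w$'s view---meaning $T_l$ or $T_r$ read as $T_{a \leftarrow b}^x$, or $T_m = \bigoplus^y T_{i-1}^x$ read as a piece of $T_{i \leftarrow i}^{x'}$ for sufficiently large $x'$---then that component (or the appropriate enclosing $T_{a' \leftarrow b'}^{x'}$) immediately serves as the witness, with parameters at least $i \geq i-1$. Otherwise I would let $w^\ast$ be a node of maximal layer $\ell^\ast$ inside $w$'s radius-$t$ neighborhood and mimic the construction in the proof of \autoref{lem:lb-aux-6}: embed the radius-$t$ neighborhood of $w$ into $T_{\ell^\ast \leftarrow \ell^\ast}^{x'}$ for sufficiently large $x'$ so that $w^\ast$ becomes an endpoint of the middle edge. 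Because $x = 2t+4$, the recursive subtrees at every level look identical within radius $t-1$ in the interior, so the embedding preserves the radius-$t$ view; and since $w \in S$ forces $\ell^\ast \geq i-1$, the layer and parameter requirements are met.

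The main obstacle I anticipate is handling the irregular boundary nodes of $T$ (the root $s_l$, the global bottom $t_r$, and the various boundary $t$-nodes of nested copies of $T_j^x$) that can appear in $w$'s view when $w$ sits close to such a node. In those situations the witness tree must be chosen so that the analogous boundary structure occupies the matching position, which may force $a'$ or $b'$ to be taken one layer above $w$'s own layer, or $x'$ to be small enough that a boundary and a middle edge both fit inside a radius-$t$ view while large enough that the deeper recursive subtrees still match those of $T$. Once these boundary configurations are enumerated and matched, $w$'s view belongs to $S_{i-1}$ in every case.
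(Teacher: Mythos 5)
Your proposal follows essentially the same route as the paper's proof: establish that every $w \in S$ has layer at least $i-1$, split into cases according to whether the radius-$t$ view of $w$ meets $e_l$, meets $e_r$, or is confined to one of $T_l$, $T_m$, $T_r$, handle the junction cases via \autoref{lem:lb-aux-3} and \autoref{lem:lb-aux-4}, and handle the confined cases either via the middle edge of $T_{a\leftarrow b}^x$ or via the containment $S_{i'}^\ast \subseteq S_{i'}$ of \autoref{lem:lb-aux-6}. The boundary worry you raise at the end is already neutralized by the choice $x = 2t+4$ together with the definition of $S_j$ (which excludes views containing non-leaf nodes of degree less than $\delta+1$), so no extra case enumeration is needed.
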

\begin{proof}
Let $w \in S$.
Let $i'$ be the layer number of $w$.
From the construction of $P$ we already know that all nodes on the path $P$ has layer number at least $i$, so their children have layer number at least $i-1$, and so $i' \geq i-1$.

We first consider the case where the radius-$t$ neighborhood of $w$ contains a node in the edge $e_l$ connecting $T_l$ and $T_m$. Then $w$ has the same radius-$t$ view in both $T$ and the graph $T_{a' \leftarrow b'}^x$ considered in \autoref{lem:lb-aux-3}.
Since $i' \geq i-1$, $a' \geq i > i-1$, $b' \geq i > i-1$, and $w$ is within distance $t$ to a node in the middle edge of $T_{a' \leftarrow b'}^x$, this radius-$t$ view belongs to $S_{i-1}$ by its definition.
The case of where the radius-$t$ neighborhood of $w$ contains a node in the edge $e_r$ connecting $T_m$ and $T_r$ can be handled using \autoref{lem:lb-aux-4} similarly.

From now on, we assume that  the radius-$t$ neighborhood of $w$ does not contain any node in $e_l$  and  $e_r$. There are three cases depending on whether the radius-$t$ neighborhood of $w$ is confined to $T_l$, $T_m$, or $T_r$.

Consider the case where  the  radius-$t$ neighborhood of $w$ is confined to $T_m$. Since $x$ is sufficiently large, the radius-$t$ neighborhood of $w$ does not contain any non-leaf node whose degree is not $\Delta = \delta+1$. Observe that $T_m = \bigoplus^{y} T_{i-1}^{x}$ is a subgraph of $T_i^y$ as $y \geq x$, so
the radius-$t$ view of $w$ is the same in $T$, $T_m$, and $T_{i}^y$, and so this radius-$t$ view  belongs to $S_{i'}^\ast$. By \autoref{lem:lb-aux-6}, we have $S_{i'}^\ast \subseteq S_{i'}$. We also have $S_{i'} \subseteq S_{i-1}$ because $i' \geq i-1$. Hence we conclude that this radius-$t$ view  belongs to $S_{i-1}$, as desired.

For the rest of the proof, we consider the case  where  the  radius-$t$ neighborhood of $w$ is confined to $T_l$, as the case of $T_r$ is similar. Recall that $T_l = T_{a \leftarrow b}^{x}$ is constructed by concatenating $T_a^x$ and $T_b^x$ by a middle edge $e$. If the radius-$t$ neighborhood of $w$ contains a node of $e$, then we know that this radius-$t$ view belongs to $S_{i-1}$, as we have $a \geq i > i-1$, $b \geq i > i-1$, and $i' \geq i-1$.
Otherwise, the radius-$t$ neighborhood of $w$ is confined to either $T_a^x$ or $T_b^x$. Similarly, we may use \autoref{lem:lb-aux-6} to show that the radius-$t$ view of $w$ is in $S_{i-1}$.
\end{proof}

Using \autoref{lem:lb-aux-5} and \autoref{lem:lb-aux-7}, we are now ready to prove  \autoref{lem:lb-aux-2}.

\begin{proof}[Proof of \autoref{lem:lb-aux-2}]
By induction hypothesis, suppose that the lemma statement holds for smaller $j$-values. Fix any $(G,v) \in S_j$. Then $(G,v)$ is isomorphic to the radius-$t$ neighborhood of a   layer-$i$ node $u$ in $T_{a \leftarrow b}^x$ such that $j \leq i \leq \min\{a,b\}$ and this radius-$t$ neighborhood  contains at least one node in the middle edge $e$ of $T_{a \leftarrow b}^x$.

Given $T_{a \leftarrow b}^x$ and $u$, construct the rooted tree $T$ and its directed path $P$ as we discuss above. Remember in our construction there is a number $K$ such that for each $d \geq K$, we are able to make $d$ the length of $P$.

Consider $\Sigma^\diamond = \Sigma_1 \cup \Sigma_2 \cup \cdots \cup \Sigma_{j-1}$. Fix any $\sigma \in \Sigma_j$. Recall that $\Sigma_j$ is the set of path-inflexible labels for the restriction of $\Pi$ to $\Sigma \setminus \Sigma^\diamond$.  To prove the lemma, it suffices to show that $\sigma$ is not permissible for $(G,v)$.

We apply \autoref{lem:lb-aux} with the rooted tree $T$ and its directed path $P$ with $\Sigma^\diamond$. We will see that the properties of $T$ and $P$  that we discuss above imply that the three conditions of  \autoref{lem:lb-aux} are met.
Condition~(1) follows immediately from the construction of $T$.
For Condition~(2), if $j = 1$, then $\Sigma^\diamond = \emptyset$, so Condition~(2) trivially holds; if $j > 1$, then $i \geq j > 1$, so we may apply \autoref{lem:lb-aux-7} to obtain that for each node $w \in S$, its radius-$t$ neighborhood in $T$ is in $S_{i-1} \subseteq S_{j-1}$. Therefore, by induction hypothesis, we know that each $\sigma' \in \Sigma^\diamond$ is not permissible for the radius-$t$ view of each $w \in S$, so Condition~(2) holds.
Condition~(3) follows from \autoref{lem:lb-aux-5} that the radius-$(t+2)$ neighborhood of each node in $P$ is isomorphic to the radius-$(t+2)$ neighborhood of some node in $T_{a' \leftarrow b'}^{x'}$ for some $1 \leq a' \leq k$, $1 \leq b' \leq k$ and $x' \geq 1$, and our choice of $n$ guarantees that the  radius-$(t+2)$ neighborhood of any node in $T_{a' \leftarrow b'}^{x'}$ cannot contain more than $n$ nodes.
Hence \autoref{lem:lb-aux} is applicable, so $\sigma$ is not permissible for $(G,v)$.
\end{proof}

\section{Sublogarithmic region}\label{sec:sublog}

In this section we prove that there is no $\LCL$ problem $\Pi$ with distributed time complexity between $\omega(\log^* n)$ and $o(\log n)$. Also, we prove that, given a problem $\Pi$, we can \emph{decide} if its complexity is $O(\log^* n)$ or $\Omega(\log n)$. Moreover, we prove that randomness cannot help: if a problem has \emph{randomized} complexity $O(\log^* n)$, then it has the same \emph{deterministic} complexity.

\subsection{High-level idea}

Informally, we prove that all problems that are $O(\log^* n)$ solvable can be solved in a normalized way, that is the following:
\begin{itemize}
	\item Split the rooted tree in constant size rooted subtrees, where each root has some minimum distance from the leaves. Note that each leaf is the root of another subtree.
	\item In each subtree, assign labels to the leaves, such that for any assignment to the root, the subtree can be completed with a valid labeling.
	\item Complete the labeling in each subtree independently.
\end{itemize}
Note that the only part requiring  $\Theta(\log^* n)$ is the first one, while the rest requires constant time. We then also prove that we can \emph{decide} if there is a subset of labels, and an assignment for the leaves of the subtrees, that satisfies the second point.

\begin{figure}
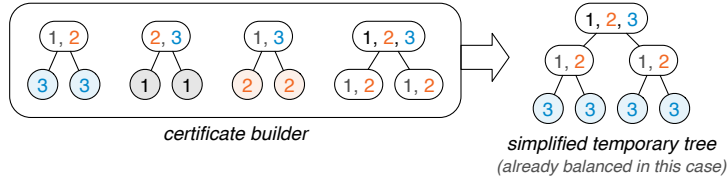
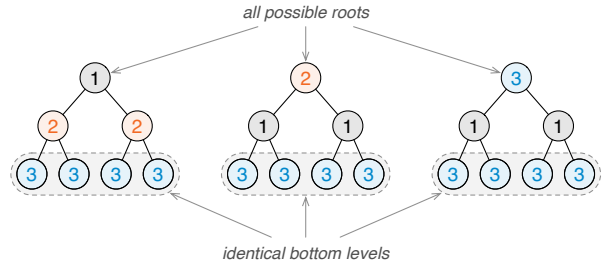

\figlayout

(a) Problem: 3-coloring in binary trees:
\begin{center}
\includegraphics[page=1,scale=\figscale]{figs.pdf}
\end{center}
\medskip

(b) Finding a certificate:
\begin{center}
\includegraphics[page=2,scale=\figscale]{figs.pdf}
\end{center}
\medskip

(c) Certificate for $O(\log^* n)$-round solvability:
\begin{center}
\includegraphics[page=3,scale=\figscale]{figs.pdf}
\end{center}

\caption{Finding a uniform certificate for $O(\log^* n)$ solvability (Definition~\ref{def:uniform-logstar-cert}) for the $3$-coloring problem (Section~\ref{ssec:3col-example}).}\label{fig:logstar-cert-3col}
\Description{}
\end{figure}

\subsection{Certificate}

We start by defining what is a uniform certificate of $O(\log^*n)$ solvability. Informally, it is a sequence of labeled trees having the same depth and the leaves labeled in the same way, such that for each label used in the trees there is a tree with the root labeled with that label. An example of such a certificate for the $3$-coloring problem is depicted in \autoref{fig:logstar-cert-3col}.

\begin{definition}[uniform certificate for $O(\log^*n)$ solvability]\label{def:uniform-logstar-cert}
Let $\Pi$ be an $\LCL$ problem. A uniform certificate of $O(\log^*n)$ solvability for $\Pi$ with labels  $\Sigma_\TT = \{\sigma_0,\ldots,\sigma_t\} \subseteq \Sigma(\Pi)$ and depth $d$ is a sequence $\TT$ of $t$ labeled trees (denoted by $\TT_i$) such that:
\begin{enumerate}
    \item Each tree is a complete $\delta$-ary tree of depth $d$ ($d$ has to be at least one).
    \item Each tree $\TT_i$ is labeled with labels from $\Sigma_\TT$ and correct w.r.t. configurations $\CC(\Pi)$.
    \item Let $\overline{\TT}_i$ be the tree obtained by starting from $\TT_i$ and removing the labels of all non-leaf nodes. It must hold that all trees $\overline{\TT}_i$ are isomorphic, preserving the labeling.
    \item Root of tree $\TT_i$ is labeled with label $\sigma_i$.
\end{enumerate}
\end{definition}
We will see that a problem $\Pi$ can be solved in $O(\log^* n)$ rounds if and only if a certificate of $O(\log^*n)$ solvability for $\Pi$ exists. We will later show that we can decide if such a certificate exists. We will now give an alternative definition of certificate, that we will later prove to be equivalent.
\begin{definition}[coprime certificate for $O(\log^*n)$ solvability]
	Let $\Pi$ be an $\LCL$ problem. A coprime certificate of $O(\log^*n)$ solvability for $\Pi$ with labels $\Sigma_{\TT} = \{\sigma_0,\ldots,\sigma_{t}\} \subseteq \Sigma(\Pi)$ and depth pair $(d_1,d_2)$ is a pair of sequences $\TT^1$ and $\TT^2$ of $t$ labeled trees (denoted by $\TT^1_i$ and $\TT^2_i$) such that:
	\begin{enumerate}
		\item The depths $d_1$ and $d_2$ are coprime.
		\item Each tree of $\TT^1$ (resp. $\TT^2$) is a complete $\delta$-ary tree of depth $d_1 \ge 1$ (resp. $d_2 \ge 1$).
		\item Each tree is labeled with labels from $\Sigma(\Pi)$ and correct w.r.t. configurations $\CC(\Pi)$.
		\item Let $\overline{\TT}^1_i$ (resp. $\overline{\TT}^2_i$) be the tree obtained by starting from $\TT^1_i$ (resp. $\TT^2_i$) and removing the labels of all non-leaf nodes. It must hold that all trees $\overline{\TT}^1_i$ (resp. $\overline{\TT}^2_i$) are isomorphic, preserving the labeling.
		\item The root of the tree $\TT_i^1$ (resp. $\TT^2_i$) is labeled with label $\sigma_i$.
	\end{enumerate}
\end{definition}
Note that the difference between a uniform certificate and a coprime certificate is that a coprime certificate requires two uniform certificates of coprime depth, but it allows internal nodes of the trees to be labeled from labels of $\Sigma(\Pi)$ that are not in $\Sigma_\TT$.
In the following, we will sometimes omit the type of the certificate, and we will just talk about \emph{certificate for  $O(\log^*n)$ solvability}. In this case, we will refer to a \emph{uniform} certificate.

\subsection{Upper bound}

We now present an $O(\log^*n)$-round algorithm that is able to solve $\Pi$ if there exists a certificate for $O(\log^*n)$ solvability for $\Pi$.

\begin{theorem}
\label{log*_algorithm}
Assume that a uniform or coprime certificate for\/ $O(\log^* n)$ solvability for\/ $\Pi$ exists. Then $\Pi$ can be solved in $O(\log^* n)$ rounds in the $\CONGEST$ model.
\end{theorem}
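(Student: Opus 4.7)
The plan is to build the $O(\log^* n)$-round $\CONGEST$ algorithm in two phases: (1) a preprocessing phase that produces a distance-$k$ coloring with $O(1)$ colors, for a constant $k$ depending on the certificate parameters, and (2) a labeling phase of constant additional rounds in which the certificate trees are stamped onto the tree. For Phase~1, I would apply Linial's algorithm to the $k$-th power graph $G^k$; since the maximum degree $\Delta=\delta+1$ is constant and $k$ is constant, $G^k$ has constant degree, so Linial's algorithm produces a distance-$k$ coloring $c\colon V\to[K]$ in $O(\log^* n)$ rounds. Each simulated round of Linial on $G^k$ requires communicating only $O(\log n)$-bit messages over $k = O(1)$ hops, so the whole step runs in $\CONGEST$.

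For Phase~2 in the coprime case with depths $d_1,d_2$ and label set $\Sigma_\TT$, the main tool is the Chicken McNugget (Frobenius) theorem: every integer at least $(d_1-1)(d_2-1)$ is a non-negative integer combination $a\,d_1+b\,d_2$. Using the distance-$k$ coloring, I would compute in $O(1)$ rounds an \emph{anchor set} $A\subseteq V$ whose pairwise distances along any root-to-leaf path lie in a window $[r, r']$ with $r\ge(d_1-1)(d_2-1)$ and $r'=O(1)$; a standard iteration through color classes achieves this from a distance-$k$ coloring with $k$ large compared to $r'$. Label each anchor arbitrarily with some $\sigma_i\in\Sigma_\TT$. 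For each segment of a root-to-leaf path between two consecutive anchors, decompose its length as $a\,d_1+b\,d_2$ and fill in the corresponding depth-$d_1$ or depth-$d_2$ subtree using the certificate tree $\TT^{1}_{j}$ or $\TT^{2}_{j}$ whose root label matches the label currently at the top of that chunk. Because each certificate family has a uniform leaf-labeling (condition~(4)), the leaves of a used certificate tree are in $\Sigma_\TT$, so the next chunk has a valid starting label and the construction chains consistently.

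For Phase~2 in the uniform case with depth $d$, I would iterate a single uniform certificate to obtain, for each $k\ge1$, a certificate of depth $kd$: plug the certificate tree $\TT_{\tau}$ under every leaf labeled $\tau$ of the previous layer. This gives the flexibility to fill in any subtree whose depth is a multiple of $d$. Here the main obstacle is to compute anchors whose pairwise distances along any root-to-leaf path are \emph{exact} multiples of $d$, since depth-from-the-root is not available in $O(\log^* n)$ rounds. I would resolve this by selecting anchors greedily top-down using only local information from the distance-$k$ coloring: once an anchor is placed, its anchor descendants are chosen at depth exactly $d$ below it, and the coloring lets us locally tie-break to avoid collisions; because the constraint is strictly local (only parent-anchor-to-child-anchor distances matter, not any global depth mod $d$), constant-time coordination suffices. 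Stamping the iterated certificate trees onto the resulting slabs then yields a correct $\Pi$-labeling. Correctness of the labeling follows from the certificate properties (each $\TT_i$ is correct w.r.t.\ $\CC(\Pi)$ and leaves glue consistently); leaves of the input tree are unconstrained, so incomplete slabs at the bottom cause no issue. The whole algorithm runs in $O(\log^*n)+O(1)=O(\log^* n)$ rounds, and every message is an $O(1)$-sized label or a $c$-value of $O(\log n)$ bits, so it fits in $\CONGEST$.
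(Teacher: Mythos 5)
Your high-level plan (break symmetry in $O(\log^* n)$ rounds, then stamp certificate trees in $O(1)$ additional rounds, using coprimality of the depths to absorb arbitrary gap lengths) matches the paper's, but there is a genuine gap in the step that makes the stamping possible. The certificate trees are \emph{complete} $\delta$-ary trees, so whenever you place one at a node $v$, you commit the depth of the first slab ($d_1$ or $d_2$) simultaneously for \emph{all} branches below $v$. Your anchors, produced by a generic ruling-set computation from the distance-$k$ coloring, will in general sit at different depths below $v$ in different branches, and the per-root-to-leaf-path Frobenius decompositions of these segment lengths need not be simultaneously realizable: e.g.\ with $d_1=2$, $d_2=3$ and two anchors at depths $3$ and $4$ below the same slab root, neither first choice ($2$ or $3$) leaves both residues expressible. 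So the decomposition cannot be carried out branch by branch; the cut points must form full levels within each region. Your uniform-case workaround does not repair this: propagating anchors top-down at exact multiples of $d$ from the root costs $\Omega(\mathrm{depth})$ rounds, while seeding the propagation from a ruling set reintroduces exactly the incompatible-offset collisions, and ``constant-time coordination'' to resolve them is asserted but not justified (two seeds on the same root-to-leaf path whose distance is not a multiple of $d$ cannot both survive).

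The paper resolves precisely this point differently. It solves, via a one-sided $O(\log^* n)$-round algorithm on all root-to-leaf paths, a path $\LCL$ that counts up to $\alpha$ or up to $\beta$ (coprime) before resetting, and then shifts every label one step downward (each node adopts its parent's label). The shift makes all siblings agree, and since the counting transitions are deterministic except at the reset state, the reset labels then form full levels: the tree decomposes into perfect $\delta$-ary subtrees of depth exactly $\alpha$ or $\beta$, onto which $\TT^1$ and $\TT^2$ (or, for a uniform certificate, depth-$d$ and depth-$(d{+}1)$ extensions obtained from the continuation-below property) can be stamped consistently. To fix your argument you would need to replace your anchor set with a construction guaranteeing this cross-branch alignment of slab boundaries; as written, the filling step fails.
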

\begin{proof}
	We will prove our claim by describing an algorithm $A$. The algorithm will consist of two main phases. First, we split the tree into constant size subtrees in $O(\log^* n)$ rounds. Then, we operate in a constant number of rounds on these subtrees in parallel.
	We assume that nodes are far enough from the root or the leaves of the tree, as we can imagine our tree to be embedded in a slightly larger tree. Since the leaves are unconstrained, this does not affect the validity of the solution that we compute.

	Consider the following family of problems defined on directed paths. Let $\alpha,\beta$ be two parameters. Labels are from $\{1,a_2,\ldots,a_{\alpha}\} \cup \{1,b_2,\ldots b_{\beta}\}$. Allowed configurations are
	\[
		\bigl\{(a_i,a_{i+1}) \bigm| 2 \le i < \alpha \bigr\} \cup
		\bigl\{(b_i,b_{i+1}) \bigm| 2 \le i < \beta \} \cup
		\bigl\{(1,a_2),(1,b_2),(\alpha,1),(\beta,1)\bigr\}.
	\]
	Essentially, this problem requires to label a directed path such that if a node is labeled $1$, then its successor is either labeled $a_2$ or $b_2$, and then in the first case we continue counting up to $a_\alpha$ and then start again with $1$, while in the second case we continue counting up to $b_\beta$ and then start again with $1$. If $\alpha$ and $\beta$ are coprime, then this problem can be solved in $O(\log^* n)$ rounds, and by \cite[Theorem 16]{lcls_on_paths_and_cycles} we can solve this problem in rooted trees, such that every root-to-leaf path is labeled with a valid labeling w.r.t.\ the definition of the problem on directed paths in $O(\log^* n)$ rounds as well. Let us now modify the solution as follows: let $\ell(v)$ be the labeling obtained on each node $v$. Each node $v$ labels itself with $\ell(u)$, where $u$ is the parent of $v$. We obtain that all siblings have the same labeling.
	Consider now the subtrees obtained by removing edges where endpoints are labeled $(a_\alpha,1)$ or $(b_\beta,1)$. To each subtree, we add as new leaves the nodes on the other side of the edges that have been removed (that is, nodes labeled $1$ are at the same time roots of their tree and leaves of the tree above). By how labels can propagate, we obtain that each obtained subtree is a perfect $\dd$-ary trees, where each tree has height either $\alpha$ or $\beta$. If we are given a coprime certificate, we compute such a splitting with $\alpha,\beta$ equal to the depth pair of the certificate, while if we are given a uniform certificate, we compute such a splitting with $d,d+1$, where $d$ is the depth of the certificate.

We now describe the second phase. In the following, we describe an algorithm that fixes, in constant time, the labeling of each subtree in parallel. 

If we are given a coprime certificate, we proceed as follows. For every subtree of depth $\alpha$ we label the leaves as the trees of $\TT^1$, while for every tree of depth $\beta$ we label the leaves as the trees of $\TT^2$. Note that all the leaves are also the root of a tree below. Hence, for each subtree, we have fixed the labeling of the root and the leaves. Now, for each tree of depth $\alpha$ (resp. $\beta$) we complete the labeling as in $\TT^1_i$ (resp. $\TT^2_i$), where $\sigma_i$ is the label assigned to the root. In this way, we obtain a valid labeling for the whole tree.

If we are given a uniform certificate, to each subtree, we assign to the nodes at depth $d$ the labels of the trees of the certificate. On subtrees of depth $d+1$ we then assign a labeling to the nodes at depth $d+1$ by using only labels of the certificate. This is possible since each label of the certificate is a root of a certificate of the tree, and hence has a continuation below that only uses labels of the certificate. We now need to complete the labeling of trees of depth $d$ where all roots and all leaves have labels of the certificate, and all internal nodes are unlabeled. This is possible by copying the labels assigned to the internal nodes of the trees of the certificate.

The round complexity of the described algorithm is $O(\log^*n)$ for computing the subtrees, and $O(1)$ for everything else, hence we have an algorithm that has a round complexity of $O(\log^*n)$.
\end{proof}

\subsection{Lower bound}

We now prove that, if there is no certificate for $O(\log^*n)$ solvability, then the problem requires $\Omega(\log n)$, even for randomized algorithms.

We start by considering deterministic algorithms. We will prove that if there is a deterministic $o(\log n)$ algorithm for $\Pi$, then we can construct an $O(\log^* n)$ certificate for it. In the following lemma we will prove something stronger, that will be useful later when considering constant-time algorithms. This lemma essentially says that, if there exists a fast enough algorithm that uses some set of labels far enough from the root and the leaves, then we can construct a certificate that uses the same set of labels. Moreover, we can force a leaf of the certificate to contain some specific label (this specific part will be used when considering constant-time algorithms).
\begin{lemma}\label{lem:constructcert}
	Assume that there exists a deterministic algorithm $A$ solving\/ $\Pi$ in $T(n) \in o(\log n)$ rounds on instances of size $n$. Let $n_0$ be any integer satisfying $n_0 > (1+\dd)^{10 T(n_0)}$. Let $S$ be the maximal set of labels satisfying that for each $s \in S$ there exists an instance of\/ $\Pi$ of size $n_0$ in which $A$ outputs $s$ on at least one node at distance strictly larger than $T(n_0)$ from the root and from any leaf. Let $\bar{s}$ be an arbitrary label in $S$. Then there exists a certificate of $O(\log^* n)$ solvability that contains all labels in $S$, and where at least one leaf is labeled $\bar{s}$.
\end{lemma}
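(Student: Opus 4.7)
The plan is to use the algorithm $A$ to extract, for each $s \in S$, a valid labeled complete $\delta$-ary subtree of depth $d = 2T(n_0)+1$ rooted at a node labeled $s$ and carrying only labels from $S$, and then to align their leaf labelings by imposing a canonical local structure below the root. For each $s \in S$, the definition of $S$ provides an instance $G_s$ of size $n_0$ and a node $v_s$ at distance $> T(n_0)$ from the root and from any leaf of $G_s$ such that $A$ outputs $s$ at $v_s$. Exploiting the slack $n_0 > (1+\delta)^{10\,T(n_0)}$, I would arrange that $v_s$ in fact lies at distance $> 3T(n_0)+1$ from root and leaves: any part of $G_s$ that is more than $T(n_0)$ below $v_s$ can be replaced without affecting $A$'s output at $v_s$ (since that output depends only on the radius-$T(n_0)$ view), and the size bound leaves room for a full depth-$(3T(n_0)+1)$ subtree below $v_s$. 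Let $\tau_s$ be the complete $\delta$-ary subtree of depth $d$ rooted at $v_s$ in $G_s$, labeled by the outputs of $A$. The root of $\tau_s$ carries label $s$, every configuration inside $\tau_s$ is valid by correctness of $A$, and every node of $\tau_s$ is at distance $> T(n_0)$ from the boundary of $G_s$, so all labels used in $\tau_s$ lie in $S$.

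The crucial step is to ensure that the trees $\tau_s$ share a common leaf labeling. To this end I would fix a single canonical block $M$---a collection of $\delta^{T(n_0)+1}$ complete $\delta$-ary trees of depth $2T(n_0)$ equipped with a fixed identifier assignment---and modify each $G_s$ so that the subgraph spanning depths $T(n_0)+1$ through $3T(n_0)+1$ below $v_s$ coincides with $M$. This modification lies strictly outside the radius-$T(n_0)$ view of $v_s$, so the output of $A$ at $v_s$ remains $s$, and the size constraint $|G_s| = n_0$ is preserved by the slack in $n_0$. Now any leaf $w$ of $\tau_s$ sits at depth $d = 2T(n_0)+1$ below $v_s$; its radius-$T(n_0)$ view reaches upward only to depth $T(n_0)+1$ below $v_s$ and downward only to depth $3T(n_0)+1$, so it is contained entirely in $M$. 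Hence $w$'s view and identifiers are independent of $s$, $A$ assigns $w$ the same label in every $G_s$, and a common leaf labeling $L^{*}$ for all $\tau_s$ emerges.

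Finally, I would choose $M$ so that one designated leaf position $w^*$ of $\tau_s$ has a radius-$T(n_0)$ view (including identifiers) isomorphic to the view of some node $v_{\bar{s}}$ witnessing $\bar{s} \in S$; then $A$ outputs $\bar{s}$ at $w^*$, placing $\bar{s}$ into $L^*$. The sequence $\{\tau_s\}_{s \in S}$ then satisfies all requirements of a uniform certificate of $O(\log^* n)$ solvability with $\Sigma_\TT = S$ and $\bar{s}$ occurring at leaf position $w^*$. The hardest part of this program is the re-embedding work in the first paragraph: we must simultaneously keep $|G_s| = n_0$, preserve the above-$v_s$ structure that induces label $s$, and push $v_s$ deep enough for the later view analysis to go through; the inequality $n_0 > (1+\delta)^{10\,T(n_0)}$ is precisely what provides room for all of these modifications, while the remaining steps are routine exploitations of the locality of $A$.
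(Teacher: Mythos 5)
Your overall strategy coincides with the paper's: run $A$ on engineered instances where the radius-$T(n_0)$ view of a designated root node forces output $s$, make the identifiers outside that view canonical so that all leaf outputs coincide across the different $s$, plant a copy of the view of $v_{\bar{s}}$ at one leaf position to force output $\bar{s}$ there, and invoke maximality of $S$ to conclude that every label appearing in between lies in $S$. The paper implements this by grafting $B(v_s)$ onto a node $r$ of a single canonical balanced $n_0$-node tree and taking the certificate trees between $r$ and its descendants at distance $5T(n_0)$; you instead keep $G_s$ and graft a canonical block below $v_s$. That difference is cosmetic, but your parameters hide a genuine gap.

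The gap is the combination of the depth $d = 2T(n_0)+1$ with the unaddressed identifier collisions. To force the root of $\tau_s$ to output $s$ you must preserve the identifiers of $B(v_s)$, and to force $w^*$ to output $\bar{s}$ you must plant the identifiers of $B(v_{\bar{s}})$ verbatim around $w^*$. These two identifier sets need not be disjoint --- for $s = \bar{s}$ they are literally identical --- so the modified instance has repeated identifiers and is not a legitimate input on which $A$ is guaranteed to be correct. One must therefore argue node by node via views, and here the depth is too small: a node at depth $j$ below $v_s$ with $1 \le j \le 2T(n_0)$ has a radius-$T(n_0)$ view intersecting both $B(v_s)$ and the ball around $w^*$ (their centers are only $2T(n_0)+1$ apart), so its view may contain two nodes with the same identifier. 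Such a view occurs in no legitimate instance, so nothing constrains $A$'s output there, and you can conclude neither that the configurations along $\tau_s$ are valid nor that the intermediate labels belong to $S$ (membership in $S$ requires exhibiting a \emph{legitimate} $n_0$-node instance realizing the view). The paper's proof supplies exactly the two missing ingredients: a separation of $5T$ between the two grafted neighborhoods, so that no node's view meets both, and an explicit simulation step showing that every radius-$T$ view in the construction has distinct identifiers and embeds into a legitimate $n_0$-node instance, which forces both validity of the configurations and membership of all intermediate labels in $S$. With those two repairs your argument goes through and is essentially the paper's.
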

\begin{proof}
	For all $s \in S$, let $H_s$ be an instance of size $n_0$ in which there exists a node $v_s$ having distance strictly larger than $T=T(n_0)$ from the root and any leaf, where $A$ outputs $s$. Let $B(v)$ be the radius-$T$ neighborhood of a node $v$.

	We now consider a $\dd$-ary tree $G$ of $n_0$ nodes that is ``as balanced as possible''. Note that the height of $G$ is at least $10T$. Let $r$ be an arbitrary node at distance $T+1$ from the root of $G$. Let $L$ be the set of descendants of $r$ that are at distance exactly $5T$ from $r$. Since the tree is balanced and has height at least $10T$, then $r$ and all nodes of $L$ do not see the root of $G$ or any leaf of $G$. Also, $B(r)$ and $B(\ell)$ are disjoint, for all $\ell \in L$. Moreover, note that for all nodes $x$ that are on paths that connect $r$ with nodes of $L$, it holds that either $B(x)$ is disjoint with $r$ or $B(x)$ is disjoint with $B(\ell)$, for all $\ell \in L$.

	We now pick an arbitrary node $\ell \in L$ and fix the identifiers in its neighborhood to make $B(\ell) = B(v_{\bar{s}})$ (that is the subgraph of $H_s$ in which node $v_{\bar{s}}$ outputs $\bar{s}$).  Then, for each label $s \in S$ we make a copy of $G$ (copying the partial ID assignment as well), and we call it $G_s$. In each copy $G_s$, we additionally fix the neighborhood of $r_s$, that is the copy of $r$, to make it equal to $B(v_s)$. Then, we fix the identifiers of all other nodes by using unique identifiers not in $B(v_s) \cup B(v_{\bar{s}})$. Crucially, we assign the same identifier to all the nodes that are copies of the same node of $G$.
	
	We argue that, by running $A$ on the obtained trees, we must obtain a valid solution, even if some identifiers may not be unique. In fact, assume that there is a node in which the output does not satisfy the constraints of $\Pi$. The radius-$T$ neighborhood of this node must contain unique identifiers by construction, and we can hence construct a different instance of $n_0$ nodes where all identifiers are unique and the same bad neighborhood is contained. This would imply that $A$ fails in a valid instance, that is a contradiction.

	Hence, we obtain that each tree $G_s$ is properly labeled, and since $S$ is by definition maximal, then also nodes that are between $r_s$ and nodes in $L_s$, that are the copy of labels in $L$, must be labeled with only labels in $S$.

	Consider now the $|S|$ trees obtained by taking from each tree $G_s$ the subtree induced by node $r_s$, nodes in $L_s$, and all nodes between them. We obtain $|S|$ trees that have the same labeling for the leaves (that is, leaves that are copies of the same node of $G$ have the same labeling), at least one leaf is labeled $\bar{s}$, each tree has a different label of $S$ assigned to the root, and all nodes are only labeled with labels in $S$. Hence we obtained a certificate for $O(\log^* n)$ solvability that contains all labels in $S$, and where at least one leaf is labeled $\bar{s}$.
\end{proof}

In particular, since for any algorithm running in $o(\log n)$ rounds there exists some $n_0$ satisfying $n_0 > (1+\dd)^{10 T(n_0)}$, and since in $\delta$-ary trees of size $n_0$ there exist nodes at distance strictly larger than $T$ from the root and any leaf, implying that $S$ is non-empty, then  \autoref{lem:constructcert} shows that if there is a deterministic $o(\log n)$ algorithm for $\Pi$, then we can construct an $O(\log^* n)$ certificate for it. Hence, we obtain the following corollary.
\begin{corollary}\label{detgap}
	If\/ $\Pi$ has deterministic complexity $o(\log n)$ then there exists a certificate for\/ $O(\log^* n)$ solvability.
\end{corollary}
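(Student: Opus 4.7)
The plan is to derive the corollary as a direct application of \autoref{lem:constructcert}. The only work is to check that the lemma's hypotheses can be met: namely, that some integer $n_0$ satisfies $n_0 > (1+\dd)^{10 T(n_0)}$, and that the set $S$ produced by the lemma is non-empty so that the claimed certificate is non-trivial.

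First, I would verify that a suitable $n_0$ exists. Since $T(n) \in o(\log n)$, we have $(1+\dd)^{10 T(n)} = n^{o(1)}$, so the inequality $n > (1+\dd)^{10 T(n)}$ is satisfied for all sufficiently large $n$. Pick any such $n_0$; this gives a legal input to \autoref{lem:constructcert}.

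Next, I would argue that $S \neq \emptyset$. It suffices to exhibit one full $\dd$-ary tree $H$ on $n_0$ nodes that contains an internal node $v$ at distance strictly greater than $T(n_0)$ from both the root and every leaf, because the algorithm $A$ is correct and must therefore label $v$ with some $s \in \Sigma(\Pi)$, witnessing $s \in S$. For this I would take $H$ to be an ``as balanced as possible'' $\dd$-ary tree of size $n_0$; its depth is at least $\log_{1+\dd} n_0 \ge 10 T(n_0)$ by the choice of $n_0$, so some node at depth roughly $5 T(n_0)$ on the path towards the deepest leaf is simultaneously more than $T(n_0)$ away from the root and from every leaf.

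Having established both hypotheses, I can choose an arbitrary $\bar{s} \in S$ and invoke \autoref{lem:constructcert}, which hands back a certificate of $O(\log^*n)$ solvability that contains all labels of $S$ (and in particular is non-empty). This certificate is precisely what the corollary asserts to exist, completing the proof. No step here looks like a genuine obstacle; the only care needed is in the counting argument ensuring that a balanced $\dd$-ary tree on $n_0$ nodes really has an internal node sufficiently far from both the root and all leaves, which is immediate from $n_0 > (1+\dd)^{10 T(n_0)}$.
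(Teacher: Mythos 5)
Your proposal is correct and follows exactly the paper's argument: the paper likewise observes that $T(n)\in o(\log n)$ guarantees some $n_0$ with $n_0 > (1+\dd)^{10 T(n_0)}$, that a balanced $\dd$-ary tree of that size contains nodes far from both the root and all leaves so that $S\neq\emptyset$, and then invokes \autoref{lem:constructcert}. No differences worth noting.
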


We can now prove that uniform and coprime certificates are in some sense equivalent.
\begin{lemma}\label{lem:certequiv}
	A uniform certificate for\/ $O(\log^* n)$ solvability exists if and only if a coprime certificate of\/ $O(\log^* n)$ solvability exists.
\end{lemma}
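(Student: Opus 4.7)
I would prove the two directions separately, using a direct construction for the forward direction (uniform $\Rightarrow$ coprime) and chaining two previously established results for the backward direction (coprime $\Rightarrow$ uniform). The forward direction requires some care with the bookkeeping of labels and leaf-isomorphism across the trees in the certificate; the backward direction is essentially a one-line corollary.

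\textbf{Forward direction ($\Rightarrow$).} Suppose $\TT = (\TT_1,\dots,\TT_t)$ is a uniform certificate of depth $d$ with label set $\Sigma_\TT = \{\sigma_1,\dots,\sigma_t\}$. I will construct a second uniform certificate $\TT' = (\TT'_1,\dots,\TT'_t)$ of depth $d+1$ using the same $\Sigma_\TT$, so that together with $\TT$ they form a coprime certificate (since $\gcd(d,d+1)=1$). For each $i$, take $\TT_i$ and extend each leaf as follows: if a leaf $\ell$ has label $\sigma_j \in \Sigma_\TT$, attach to $\ell$ the $\delta$ children of the root of $\TT_j$, together with their labels. The key observations are: (i) the node $\ell$, which is now internal with label $\sigma_j$ and its new $\delta$ children, realizes exactly the root configuration of $\TT_j$, hence a valid configuration of $\CC(\Pi)$; (ii) because the uniform property of $\TT$ guarantees that the leaves at depth $d$ carry the same labels across all $\TT_i$ at corresponding positions, the attached level at depth $d+1$ is also the same across all $\TT'_i$, so the trees $\overline{\TT'_i}$ are isomorphic preserving labels; (iii) only labels from $\Sigma_\TT$ are used throughout, and the root of $\TT'_i$ is still $\sigma_i$. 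Thus $\TT'$ is a uniform certificate of depth $d+1$, and $(\TT,\TT')$ is a coprime certificate of depth pair $(d,d+1)$.

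\textbf{Backward direction ($\Leftarrow$).} Suppose a coprime certificate for $O(\log^* n)$ solvability for $\Pi$ exists. By \autoref{log*_algorithm}, $\Pi$ admits a deterministic $O(\log^* n)$-round algorithm. Since $O(\log^* n) \subseteq o(\log n)$, the deterministic complexity of $\Pi$ is $o(\log n)$, and so by \autoref{detgap} a uniform certificate for $O(\log^* n)$ solvability exists.

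\textbf{Main obstacle.} The only non-routine step is verifying that the depth-$(d+1)$ extension in the forward direction is genuinely a uniform certificate; this reduces to checking that the leaf labelings across the extended trees remain isomorphic, which follows directly from the uniform property of the original certificate together with the fact that the extension is driven purely by the label at each leaf. No creativity is needed in the backward direction, as it is a clean composition of \autoref{log*_algorithm} and \autoref{detgap}.
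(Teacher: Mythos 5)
Your proposal is correct and follows essentially the same route as the paper: the forward direction extends each depth-$d$ tree to depth $d+1$ by giving every leaf a continuation below in a consistent (label-driven) manner, and the backward direction chains \autoref{log*_algorithm} with \autoref{detgap}. Your version merely makes the paper's "consistent extension" explicit by specifying that a leaf labeled $\sigma_j$ is extended using the root configuration of $\TT_j$, which is a valid instantiation of the same idea.
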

\begin{proof}
	We first show that, given a uniform certificate $\TT$, we can construct a coprime certificate. Let $d$ be the depth of the uniform certificate, we show how to construct a different certificate of depth $d+1$. Since each leaf is also a root of some tree in $\TT$, then each leaf has a continuation below. We can construct a certificate of depth $d+1$ by starting from the trees in $\TT$ and extending them to depth $d+1$ in a consistent manner by using the continuation below that is guaranteed to exist.

	We now prove that if there exists a coprime certificate then there exists a uniform certificate.
	By \autoref{log*_algorithm} a coprime certificate implies a deterministic $O(\log^* n)$ algorithm, and by \autoref{detgap} this implies the existence of a uniform certificate.
\end{proof}

We are now ready to extend \autoref{detgap} to randomized algorithms.
\begin{lemma}\label{nologstarcert}
	Let\/ $\Pi$ be an $\LCL$ problem for which no certificate for\/ $O(\log^* n)$ solvability exists.
	Then, the randomized and deterministic complexity of\/ $\Pi$ in the\/ $\LOCAL$ model is\/ $\Omega(\log n)$.
\end{lemma}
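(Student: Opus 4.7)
The deterministic lower bound follows immediately from the contrapositive of \autoref{detgap}: if $\Pi$ had deterministic complexity $o(\log n)$, a certificate for $O(\log^* n)$ solvability would exist, contradicting the hypothesis.

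For the randomized lower bound my plan is to adapt \autoref{lem:constructcert} to randomized algorithms. The conceptual key is to regard the random bits of each node as an additional component of its local view, so that a randomized $\LOCAL$ algorithm $A$ becomes a deterministic function of the identifiers and random bits within each node's radius-$T(n)$ neighborhood. Given such an $A$ with complexity $T(n) \in o(\log n)$ and failure probability at most $1/n$, I would pick $n_0$ large enough that $n_0 > (1+\delta)^{10 T(n_0)}$ and define $S$ as the maximal set of labels such that each $s \in S$ is produced by $A$ at some interior node of some size-$n_0$ instance, under some identifier/random-bit assignment on which $A$ succeeds globally. Since $A$ succeeds with probability at least $1 - 1/n_0$ on every instance, $S$ is nonempty. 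I would then fix a witness triple $(H_s, v_s, \text{assignment})$ for each $s \in S$ and mimic the gluing construction of \autoref{lem:constructcert}, building copies $G_s$ of a balanced tree with identifiers and random bits in the radius-$T(n_0)$ neighborhoods of the designated nodes copied consistently from the witnesses.

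The main obstacle is showing that $A$, run on the glued tree $G_s$, still produces a valid labeling using only labels in $S$ at interior nodes---exactly what is needed to extract a certificate as in \autoref{lem:constructcert}. In the deterministic case, any local inconsistency in a glued instance is re-embedded into a valid size-$n_0$ instance to contradict correctness. In the randomized case, the re-embedding yields a valid instance together with a specific random-bit assignment in the bad neighborhood that forces failure; the delicate point is that this particular random-bit assignment might have very low marginal probability. My plan to address this is to average over random-bit extensions outside the fixed bad neighborhood: since $T(n_0) = o(\log n_0)$, the bad neighborhood has sub-polynomial size, and because $A$'s per-node output comes from the finite alphabet $\Sigma(\Pi)$, the effective per-node randomness can be taken to be bounded, so the cumulative failure contribution across compatible extensions forces the total failure probability on the embedded valid instance to exceed $1/n_0$, yielding the contradiction. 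An alternative, potentially cleaner route is to exploit the parent-pointing directionality of rooted trees to derandomize $A$ into a deterministic algorithm of the same asymptotic complexity and then apply \autoref{detgap} directly. Either route produces a certificate, contradicting the hypothesis and establishing the $\Omega(\log n)$ randomized lower bound.
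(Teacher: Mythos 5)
Your deterministic half is exactly what the paper does (contrapositive of \autoref{detgap}). The randomized half, however, has a genuine gap at precisely the point you flag as ``delicate,'' and your proposed fix does not close it. If you fix a witness identifier/random-bit assignment for each label $s \in S$ and glue, then to derive a contradiction from a local violation in the glued tree you must re-embed the violating radius-$T$ neighborhood, \emph{including its fixed random bits}, into a valid $n_0$-node instance. A Monte Carlo algorithm is permitted to fail on that instance whenever the random bits in that neighborhood happen to equal the fixed witness values; this event can have probability far below $1/n_0$ (nothing in the model bounds the marginal probability of a particular bit string at a node --- each node has an unbounded random stream, and even the induced output distribution at a single node can place mass as small as you like on a given label). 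Averaging over extensions \emph{outside} the bad neighborhood does not help, because the failure is triggered by the bits \emph{inside} it, which your construction has pinned down. Your fallback route --- ``derandomize $A$ and apply \autoref{detgap}'' --- is circular: that randomness does not help in this regime is a consequence of this very lemma (and is false for unrooted trees, cf.\ sinkless orientation), so it cannot be assumed.

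The paper's proof sidesteps fixed witnesses entirely. Working first in the ID-less setting (so all interior nodes share one view and hence one output distribution), it defines $S$ as the labels output with probability at least $1/(k\sqrt{n_0})$, shows by a union bound that with probability at least $1/2$ an entire descendant layer $D(v)$ at distance $2T(n_0)+1$ or $2T(n_0)+2$ is labeled from $S$, and then uses \autoref{lem:certequiv}: if no coprime certificate with labels $S$ and these two coprime depths exists, every $S$-labeling of some $D(v)$ is incompatible with some $s \in S$ at $v$. Since the views of $v$ and $D(v)$ are disjoint, hence independent, the algorithm fails with probability at least $\tfrac12 \cdot \tfrac{1}{k\sqrt{n_0}} > 1/n_0$. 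The ID case then reduces to the ID-less case by generating random identifiers. The essential ideas you are missing are the probability-threshold definition of $S$ and the independence of disjoint views; without one of these (or an equivalent device) the witness-gluing strategy cannot be made to work for randomized algorithms.
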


\begin{proof}
	We start by proving the lemma in the case where randomization is allowed and nodes have no identifiers assigned to them, and we prove the lemma by showing the contrapositive.
	To this end, let $\Pi$ be an $\LCL$ problem with randomized complexity $T(n) \in o(\log n)$.
	We will show that there exists a certificate for $O(\log^* n)$ solvability for $\Pi$.

	Let $\A$ denote an optimal randomized algorithm for $\Pi$; in particular the worst-case runtime of $\A$ on $n$-node trees is $T(n)$.
	Let $\Sigma(\Pi)$ be the output label set of $\Pi$ and set $k = |\Sigma(\Pi)|$.
	Since $T(n) \in o(\log n)$, there exists an integer $n_0 \geq 5k^2$ such that $1 \leq T(n_0) \leq 1/10 \cdot \log_\delta(n_0)$.
	Let $G$ be a rooted tree with $n_0$ nodes that is ``as balanced as possible''; in particular the $\lfloor \log_\delta(n_0) \rfloor$-hop neighborhood of the root $r$ is a perfectly balanced tree. Moreover, for any node $v$, denote the set of descendants of $v$ that are at distance precisely $2 T(n_0) + 1$ from $v$ by $D_1(v)$, and those at distance $2 T(n_0) + 2$ from $v$ by $D_2(v)$.
	Since $T(n_0) \leq 1/10 \cdot \log_\delta(n_0)$, there exists a node $v \in V(G)$ such that the distance between $r$ and $v$ is at least $T(n_0)+1$, and the distance between any node $u \in D_1(v) \cup D_2(v)$ and the leaf closest to $u$ is also at least $T(n_0)+1$.
	In particular, for any node $u \in D_1(v) \cup D_2(v)$ the views of $u$ and $v$ during an $T(n_0)$-round algorithm are disjoint, and both $v$ and any node in $D_1(v) \cup D_2(v)$ do not see any leaf or the root during an $T(n_0)$-round algorithm.

	Let $S \subseteq \Sigma(\Pi)$ be the set of all labels that $\A$ outputs with probability at least $1/(k \sqrt{n_0})$ if $\A$ does not see a leaf or the root (note that the view of all nodes that do not see a leaf or the root is the same, since nodes have no ids).
	By the definition of $D_1(v)$ and $D_2(v)$, we know that $D_1(v) \cup D_2(v) \leq \sqrt{n_0}/2$. Note that $\A$ outputs a specific label not in $S$ with probability strictly less than $1/(k \sqrt{n_0})$, and hence any label not in $S$ with probability less than $1/\sqrt{n_0}$ by a union bound, and hence the probability that at least one node in $D_1(v) \cup D_2(v)$ outputs a label not in $S$ is at most $|D_1(v) \cup D_2(v)| / \sqrt{n_0}$, by a union bound.
	Thus, the probability that all nodes in $D_1(v) \cup D_2(v)$ output a label from $S$ is at least
	\[
	1 - \frac{|D_1(v) \cup D_2(v)|}{\sqrt{n_0}} \geq \frac{1}{2} \enspace.
	\]

	Let us assume for a contradiction that there is no certificate for $O(\log^* n)$ solvability for $\Pi$. By \autoref{lem:certequiv} this implies that there is no coprime certificate as well.
	In particular, then there is no such certificate with labels from $S$ and depth pair $2 T(n_0) + 1$ and $2 T(n_0) + 2$.
	This implies that there exists a $D(v) \in \{D_1(v),D_2(v)\}$ such that for any labeling $\ell \colon D(v) \to S$ of the nodes in $D(v)$ with labels from $S$, there exists some label $s \in S$ such that $\ell$ is \emph{incompatible} with $s$, i.e., such that there is no correct solution for $\Pi$ where $v$ is labeled with $s$ and $D(v)$ is labeled according to $\ell$.

	Since we have $|S| \leq k$, and the view of $v$ during $\A$ and the union of the views of the nodes in $D(v)$ during $\A$ are disjoint, it follows that the probability that the labeling $\ell$ of $D(v)$ that $\A$ outputs is incompatible with the label $s$ that $\A$ outputs at $v$ is at least
	\[
	\frac{1}{2} \cdot \frac{1}{k \sqrt{n_0}} > \frac{1}{n_0}
	\]
	since $n_0 \geq 5k^2$.
	Hence, $\A$ fails with probability larger than $1/n_0$, yielding a contradiction, and proving the lemma for the case in which identifiers are not provided to the nodes, but randomization is allowed.

	We now prove that the same result holds even if we provide identifiers to the nodes. Assume that there exists a $o(\log n)$ randomized algorithm for the case in which identifiers are provided. We can run it in the case in which identifiers are not provided in the same asymptotic running time by first generating unique random identifiers in $\{1,\ldots,n^3\}$, that can be done with a randomized algorithm with high probability of success. But this would imply a $o(\log n)$ randomized algorithm for the case in which identifiers are not provided, contradicting the lemma. Finally, since the existence of a deterministic algorithm implies the existence of a randomized algorithm, then the lemma follows.
\end{proof}

\subsection{Decidability}
We now prove that we can decide if a certificate for $O(\log^* n)$ solvability exists.
\autoref{alg:certificate} describes a procedure that returns a \emph{certificate builder} if and only if a certificate exists. A certificate builder is an object that can be used to easily construct a certificate, and in \autoref{lem:cbtocert} we will show that, given a certificate builder, we can indeed construct a certificate. This procedure uses another subroutine, \autoref{alg:unrestricted_certificate}, to try to find a certificate builder that uses a specific subset of labels, for all possible subsets of labels. On a high level, \autoref{alg:unrestricted_certificate} works as follows. We start with singleton sets, one for each label. Then, we repeatedly try to build new sets. Each new set is obtained as follows. We consider all tuples of size $\dd$ of existing sets, and we see which configurations exist where the label of each leaf $\ell_i$ is contained in the $i$th sets of the tuple. The set of roots of such configurations defines a new set. We repeat this process until we obtain a fixed point. The algorithm that we describe is also able to find a certificate builder that contains a leaf with some specific label, if required and if it exists; we will need this in the next section.

\begin{algorithm}[tbp]
\DontPrintSemicolon
\caption{\label{alg:unrestricted_certificate}$\findUnrestrictedCertificate(\Pi,a)$}
\KwIn{$\LCL$ problem $\Pi$, label $a \in \Sigma(\Pi)$ or $a = \epsilon$}
\KwOut{$\epsilon$ if no certificate exists or a certificate builder}
\BlankLine

$i \gets 0$\;

$R_0 \gets \set{(\set{\sigma},\sigma = a)  | \sigma \in \Sigma(\Pi)}$ \comm*{$R_i$ is a set of pairs. The first element of each pair is a set of possible labels of certificate roots. The second element is a Boolean that indicates whether such set of root labels can have label $a$ as one of its leaves. For $R_0$, the Boolean is true if and only if $\sigma$ is equal to $a$.}

$\CB \gets \emptyset$ \comm*{Certificate builder which will describe how we constructed individual elements in sets $R_i$.}

\Repeat{$R_{i} = R_{i-1}$  \comm*{Until we do not enlarge the set $R_i$.}}{
  $i \gets i + 1$\;
  $R_{i} \gets R_{i-1}$\;
  \For{every $\dd$-tuple of sets of root labels and indicators $(r_1,a_1), (r_2,a_2), \dotsc, (r_\dd,a_\delta)$ from $R_{i-1}$} {

     $r_n \gets \set{\sigma | (\sigma : c_1,c_2,\dotsc,c_\dd) \in \CC(\Pi), c_1 \in r_1, c_2 \in r_2, \dotsc, c_\dd \in r_\dd}$\;
     $a_n \gets$  $a_1$ or $a_2$ or \dots{} or $a_\delta$ \comm*{Indicates whether $r_n$ will have a leaf layer containing label $a$.}
     \If{$r_n \neq \emptyset$  and $(r_n,a_n) \not \in R_i$}{
     $\CB \gets \CB \cup ((r_n,a_n),((r_1,a_1),(r_2,a_2),\dotsc,(r_\dd,a_\delta)))$\;
      $R_i \gets R_i \cup (r_n,a_n) $\;
     }
    }
  }

\uIf{$(\Sigma(\Pi),a \neq \epsilon) \in R_i$ and $\CC(\Pi)$ is non-empty}{
  \Return $\CB$
  }
\Else{
  \Return $\epsilon$
}
\end{algorithm}

\begin{algorithm}[tbp]
\DontPrintSemicolon
\caption{\label{alg:certificate}findCertificate($\Pi$)}
\KwIn{$\LCL$ problem $\Pi$}
\KwOut{$\epsilon$ if no certificate exists or a certificate builder}
\BlankLine

\For{all elements $\Sigma'$ of $2^{\Sigma(\Pi)}$} {
  $\Pi' \gets$  restriction of $\Pi$ to labels from $\Sigma'$\;
  $\certificateBuilder \gets \findUnrestrictedCertificate(\Pi',\epsilon)$ \;
  \If{$\certificateBuilder \neq \epsilon$} {
    \Return $\certificateBuilder$\;
  }
}
\Return $\epsilon$\;
\end{algorithm}

We now prove that \autoref{alg:certificate} outputs a certificate builder if and only if a certificate of $O(\log^* n)$ solvability exists.

\begin{theorem}\label{thm:findlogstarcertifexists}
	Given an $\LCL$ problem $\Pi$, \autoref{alg:certificate} outputs a certificate builder if and only $\Pi$ has an $O(\log^* n)$ certificate satisfying the leaf requirement.
\end{theorem}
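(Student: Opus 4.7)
The proof is a biconditional. The plan is to interpret a pair $(r,b) \in R_i$ produced by $\findUnrestrictedCertificate$ as a certificate-like structure of depth $i$ whose topology and leaf labeling are pinned down by the derivation history recorded in $\CB$, and whose set of achievable root labels is $r$. From this interpretation, both directions follow by unwinding arguments.

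For the \emph{if} direction, suppose a uniform certificate $\TT_1, \ldots, \TT_t$ with root-label set $\Sigma_\TT$ and depth $d$ exists. I would focus on the iteration of Algorithm~\ref{alg:certificate} at $\Sigma' = \Sigma_\TT$, in which the inner call is $\findUnrestrictedCertificate(\Pi', \epsilon)$ on the restriction $\Pi'$. I will prove by induction on $k = 0, 1, \ldots, d$ that for every node $v$ at depth $d-k$ in the common shape $\overline{\TT}$, the set $\{\lambda_i(v) : 1 \le i \le t\}$ of labels appearing at position $v$ across the certificate trees is the first coordinate of some pair in $R_k$. The base case holds because at each leaf $v$ all certificate trees carry the same label there by condition~(3) of \autoref{def:uniform-logstar-cert}, giving a singleton that is already placed into $R_0$. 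For the inductive step, the $\dd$ children of an internal node $v$ at depth $d-k$ supply pairs in $R_{k-1}$ by hypothesis; the $\dd$-tuple of these pairs triggers the saturation step of Algorithm~\ref{alg:unrestricted_certificate}, which adds to $R_k$ a pair whose first coordinate contains every label realizable at $v$ through any configuration in $\CC(\Pi')$, and in particular every label actually appearing at $v$ in the certificate. Applied at $k = d$ this yields $(\Sigma_\TT, \cdot) \in R_d$, which is precisely $(\Sigma(\Pi'), \cdot)$, so Algorithm~\ref{alg:unrestricted_certificate} returns a builder and Algorithm~\ref{alg:certificate} does likewise.

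For the \emph{only if} direction, assume Algorithm~\ref{alg:certificate} returns a builder $\CB$, obtained from some restriction $\Pi'$ with $(\Sigma(\Pi'), \cdot) \in R_d$ for some $d$. I construct trees $\TT_\sigma$, one for each $\sigma \in \Sigma(\Pi')$, by recursively unwinding $\CB$: at the terminal pair, $\CB$ records a single $\dd$-tuple $((r_1, a_1), \ldots, (r_\dd, a_\dd))$ in $R_{d-1}$ that produced it; for each $\sigma$ I choose a configuration $(\sigma : c_1, \ldots, c_\dd) \in \CC(\Pi')$ with $c_j \in r_j$, and recurse into the $j$th child by unwinding $(r_j, a_j)$ with $c_j$ as the forced root. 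The recursion bottoms out at singletons in $R_0$, which become leaves with fixed labels. The resulting $\TT_\sigma$ are complete $\dd$-ary trees of depth $d$, locally consistent with $\CC(\Pi')$ by construction.

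The principal obstacle is verifying condition~(3) of \autoref{def:uniform-logstar-cert}: the trees $\overline{\TT}_\sigma$ must all be isomorphic with matching leaf labels. The key observation is that $\CB$ commits to a single child-tuple for every pair it contains, so the recursive unwinding visits the \emph{same} sequence of pairs at every position of the tree, regardless of which root $\sigma$ was chosen on top. Different root choices only select different configurations at each internal node (and hence different internal labels along the way), not different children-pairs to unwind. In particular the bottom-level singletons, and therefore the leaf labels, are identical across all $\TT_\sigma$, establishing both the isomorphism of the $\overline{\TT}_\sigma$ and the agreement of their leaf labels.
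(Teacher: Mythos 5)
Your \emph{if} direction is essentially the paper's own proof of this theorem: an induction from the leaves of the certificate upward showing that the union of labels occurring at each position is contained in the first coordinate of some pair reachable by \autoref{alg:unrestricted_certificate}, applied to the iteration of \autoref{alg:certificate} with $\Sigma' = \Sigma_\TT$. (Minor wobble: you state the inductive claim as an equality but your inductive step only yields containment; the paper correctly phrases it as $S'_{i,j} \supseteq S_{i,j}$, and at the root the containment collapses to equality only because the restriction to $\Sigma_\TT$ makes $\Sigma(\Pi') = \Sigma_\TT$ the whole label set.)

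The \emph{only if} direction (which the paper outsources to \autoref{lem:cbtocert}) has a genuine gap. You assert that recursively unwinding $\CB$ yields ``complete $\dd$-ary trees of depth $d$,'' but the derivation recorded in $\CB$ does not bottom out at a uniform depth: in the tuple $((r_1,a_1),\dots,(r_\dd,a_\dd))$ that produced a given pair, some $r_j$ may be a singleton from $R_0$ (a leaf one level down) while another $r_{j'}$ has a deep derivation of its own. The resulting tree therefore has leaves at different levels and violates condition~(1) of \autoref{def:uniform-logstar-cert}; this already happens for the MIS example, where $\Sigma(\Pi)$ is derived from a tuple containing the singleton $\{b\}$ alongside a non-singleton set. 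The paper repairs this in \autoref{lem:cbtocert} with explicit ``push down'' phases: since the root set of the unwound tree is all of $\Sigma(\Pi')$, every label has a continuation below within $\Sigma(\Pi')$, and shallow leaves can be extended one level at a time (consistently across all trees, since the extension depends only on the leaf label) until all leaves sit at the same depth. Your argument for condition~(3) --- that $\CB$ commits to a unique child-tuple per pair, so the shape and leaf labels are independent of the chosen root --- is correct as far as it goes, but it must be combined with the equalization step to produce an actual certificate.
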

\begin{proof}
  We will later show, in \autoref{lem:cbtocert}, that if \autoref{alg:certificate} outputs a certificate builder then a certificate exists, so we now prove the reverse implication, that is, if the problem has a certificate of $O(\log^* n)$ solvability (satisfying the leaf requirement), then \autoref{alg:certificate} will find a certificate builder.

Let $\TT = (\TT_1,\TT_2,\dotsc,\TT_{t})$ be a certificate with labels $\Sigma_\TT$ that satisfies the leaf requirement (that is, if $a \neq \epsilon$ then at least one leaf is labeled $a$). Let $|\TT| = t$. Let $\lambda$ be the labeling function of the certificate (that is, a function mapping each node of each tree of the certificate to its assigned label), and let $t_{i,j,k}$ be the $j$th node on level $i$ of the $k$th tree.

First, we define $S_{i,j}$ as the set of all labels of the $j$th nodes on level $i$, that is, $S_{i,j} = \bigcup_{k=1}^{|\TT|} \lambda(t_{i,j,k})$. $S_{0,0}$ is by definition equal to $\Sigma_\TT$, and for all nodes on level $d$, $S_{d,0},S_{d,1},\dotsc,S_{d,\dd^d}$ are singletons, by definition of certificate (recall that each $\TT_i$ has depth $d$).

We will prove by induction on the depth of $\TT$ that, for all $i$ and $j$, there exists a pair $(S'_{i,j},x)$ in the set $R$ of \autoref{alg:unrestricted_certificate}, where $S'_{i,j} \supseteq S_{i,j}$, and $x$ is true if and only if $t_{i,j,k}$ is an ancestor of leaves labeled $a$. This would imply that $(\Sigma_\TT,b)$ is also in $R$, where $b$ is true if and only if $a \neq \epsilon$, and that \autoref{alg:unrestricted_certificate} outputs certificate builder which is what we want to prove.

In the base case, sets $S_{d,i}$ are just singletons $\{\sigma\} \subseteq \Sigma_\TT$ and we add $(S_{d,i},\sigma = a)$ in the initialization of the set $R$.

For the induction hypothesis, let us assume that all $(S'_{i+1,j},x)$ for level $i+1$ are in $R$, where $S'_{i+1,j} \supseteq S_{i+1,j}$, and $x$ is true if and only if $t_{i+1,j,k}$ are ancestors of a leaf labeled $a$. We prove the statement for $i$. In the algorithm we loop over all $\dd-$tuples of elements from $R$ to enlarge $R$, and hence also over the tuple $((S'_{i+1,m},a_1), (S'_{i+1,m+1},a_2), \dots, (S'_{i+1,m+\dd-1},a_{\delta}))$, where $m=j\delta$, that is, a tuple containing supersets of the sets assigned to nodes that are children of nodes in position $(i,j)$. Since certificate trees are labeled correctly, this implies that, starting from this tuple, we compute $S_{i,j}$ or a superset of it. Also, the Boolean that we put in the pair that we add to $R$ is also correct, since we compute it as the or of the ones of the children.
\end{proof}

\begin{lemma}
\label{lem:cbtocert}
Let $\CB$ be a non-empty certificate builder obtained from \autoref{alg:unrestricted_certificate} for\/ $\LCL$ problem $\Pi$ and a label $a$. Then there exists a certificate of\/ $O(\log^* n)$ solvability\/ $\TT$ with at least one leaf labeled with $a$ (for $a \neq \epsilon$) and without such restriction for $a = \epsilon$.
\end{lemma}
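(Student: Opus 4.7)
The approach is to unfold $\CB$ into a uniform-depth construction tree and then derive one labeled tree per root label. I trace back in $\CB$ from the top entry $(\Sigma(\Pi),\cdot)$ and recursively substitute each non-singleton entry by the $\delta$-tuple $\CB$ records for it, producing a complete $\delta$-ary tree $\TT^\ast$ whose leaves are singleton entries from $R_0$. The corner case where the top entry already lies in $R_0$ forces $|\Sigma(\Pi)|=1$, and the algorithm's side condition $\CC(\Pi)\ne\emptyset$ then supplies a self-configuration $(\sigma:\sigma,\ldots,\sigma)$ giving a depth-$1$ unfolding directly; in either case the construction tree has some depth $d^\ast\ge 1$.

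The leaves of $\TT^\ast$ can sit at unequal depths, since the children stored in a tuple may have been added at different iterations. To normalize this, I rely on the invariant that every label $\sigma\in\Sigma(\Pi)$ has a configuration $(\sigma:c_1,\ldots,c_\delta)\in\CC(\Pi)$ with all $c_j\in\Sigma(\Pi)$: when $(\Sigma(\Pi),\cdot)\in R_i$ the very construction of $R_i$ forces such a configuration to exist for each $\sigma$, and iterating gives continuations of arbitrary depth inside $\Sigma(\Pi)$. I use these continuations to pad each shallow leaf $(\{\sigma\},\cdot)$ at depth $d'<d^\ast$ by a once-chosen labeled subtree of depth $d^\ast-d'$ whose root is $\sigma$, obtaining a uniform-depth-$d^\ast$ complete $\delta$-ary tree with a fixed leaf labeling by elements of $\Sigma(\Pi)$.

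A straightforward induction on depth then proves the labeling lemma: for every subtree of this padded tree whose root carries an entry $(r,\cdot)$, and every $\sigma\in r$, the subtree admits a valid $\CC(\Pi)$-labeling with root label $\sigma$ and leaves matching the prescribed fixed leaves. On padded subtrees the labeling produced in the previous step already works. On unfolded nodes I select a configuration $(\sigma:c_1,\ldots,c_\delta)$ witnessing $\sigma\in r$ through the recorded tuple $((r_1,\cdot),\ldots,(r_\delta,\cdot))$ and apply the hypothesis to each child with chosen $c_j\in r_j$. Invoking the lemma once per $\sigma\in\Sigma(\Pi)$ yields the family $(\TT_\sigma)_{\sigma\in\Sigma(\Pi)}$: the trees share the same complete $\delta$-ary shape of depth $d^\ast\ge 1$ and the same leaf labels, are correctly configured, and each realizes its intended root label, giving all four properties of Definition~\ref{def:uniform-logstar-cert}. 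The leaf requirement when $a\ne\epsilon$ follows from the rule $a_n=a_1\vee\cdots\vee a_\delta$: descending into a true-Boolean child at every trace-back step plants a $\{a\}$ singleton as a leaf of $\TT^\ast$, and this leaf survives padding. The main subtle point will be the padding step, which hinges on the structural observation that reaching $\Sigma(\Pi)$ in the builder actually forces every label in $\Sigma(\Pi)$ to have a continuation inside $\Sigma(\Pi)$.
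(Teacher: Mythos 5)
Your overall route is the same as the paper's: unfold the certificate builder into a construction tree whose internal nodes carry the recorded tuples, handle the one-label corner case via the non-emptiness of $\CC(\Pi)$, use the fact that $(\Sigma(\Pi),\cdot)$ was produced from sets contained in $\Sigma(\Pi)$ to get a continuation below for every label, equalize leaf depths by padding, and then instantiate one tree per root label by a top-down induction. All of that is sound, and your observation that reaching $\Sigma(\Pi)$ in the builder forces every label to have a continuation inside $\Sigma(\Pi)$ is exactly the invariant the paper also relies on.

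The gap is in the last claim for $a \neq \epsilon$: ``this leaf survives padding'' is not justified and is false in general. The $\{a\}$ singleton reached by tracing the true Booleans sits at some depth $d' \le d^\ast$, and if $d' < d^\ast$ your padding step replaces it by an internal node labeled $a$ whose depth-$(d^\ast - d')$ continuation subtree need not contain any leaf labeled $a$ --- the continuation-below property only guarantees children in $\Sigma(\Pi)$, not a recurrence of $a$ itself. So the finished trees may have no leaf labeled $a$ at all, which breaks exactly the property the lemma is invoked for in the $O(1)$-solvability argument. The paper avoids this by first \emph{pushing the $a$-leaf down} before equalizing depths: since the root entry is $\Sigma(\Pi) \ni a$ and the $\{a\}$ node lies at depth $d_a \ge 1$, the labeling induction yields a correctly labeled hairy path of length $d_a$ with \emph{both} endpoints labeled $a$; splicing copies of this path below the $a$-leaf pushes it down by $d_a$ at a time until it is (weakly) the deepest node, and only then are the remaining leaves padded up to its level. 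You need this step (or an equivalent argument producing an $a$-rooted padding subtree with an $a$-labeled leaf of the exact required depth) to make the $a \neq \epsilon$ case go through; the $a = \epsilon$ case of your proof is complete as written.
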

\begin{proof}
Before dealing with the general case, if $\Sigma(\Pi)$ consists of only one label $\sigma$ and we have a non-empty certificate builder $\CB$, then we also know that $\CC(\Pi)$ is non-empty (see last part of \autoref{alg:unrestricted_certificate}). Hence a certificate will be just a single tree of depth one labeled with $\sigma$. Hence for the rest of the proof, we assume that $\Sigma(\Pi)$ has size at least two.

To convert a certificate builder to a certificate for $O(\log^* n)$ solvability, we proceed in four phases. The first phase consists of creating a \emph{verbose temporary} tree $\TT_{\mathrm{vt}}$ which will be labeled with pairs with first element being sets of labels and second element being an indicator for where to find label $a$ (that is, each node of $\TT_{\mathrm{vt}}$ is labeled with an element of~$R_i$). Tree $\TT_{\mathrm{vt}}$ is created recursively as follows:
\begin{enumerate}
  \item root of $\TT_{\mathrm{vt}}$ is labeled with $(\Sigma(\Pi),a \neq \epsilon)$
  \item each node labeled with $(r,a)$, where $r$ contains at least two labels, will have $\delta$ children labeled with $(r_i,a_i)$ according to the (unique) pair $((r,a),((r_1,a_1),\dots,(r_\delta,a_\delta))$ in $\CB$ which contains pair $(r,a)$ as its first element.
  \item each node labeled with $(r,a)$, where $r$ is a singleton set, is a leaf.
\end{enumerate}

The recursive definition is legal as the labels for children will always be placed in the certificate builder earlier, so we cannot have any loop (see \autoref{alg:unrestricted_certificate}). In the case when $a \neq \epsilon$, by following the indicators (second pair of label) from root of $\TT_{\mathrm{vt}}$ which will be labeled with $(\Sigma(\Pi),true)$, we must be able to reach a leaf that is labeled $(\set{a},true)$. This implies that we have a singleton $a$ as one of the leaves.

As we don't need the second element of each label anymore, let us simplify the further analysis by creating a \emph{simplified temporary} tree $\TT_{\mathrm{st}}$ as a simplification of $\TT_{\mathrm{vt}}$ where each node is labeled only by the first element from the pair. Examples of such trees are depicted in  \autoref{fig:logstar-cert-3col}b and \autoref{fig:constant-cert-mis}b.

The second phase considers the case when label $a$ is not $\epsilon$. In this phase we want to ``push down'' a leaf node labeled with the singleton label $a$ so it is a deepest node of the tree $\TT_{\mathrm{st}}$. We will do it as follows.
Let $n_a$ be a node in $\TT_{\mathrm{st}}$ that is labeled with the singleton label $a$. Let $d_a$ be its depth. Since the root of $\TT_{\mathrm{st}}$ is labeled with $\Sigma(\Pi)$, we know that there exists a hairy path of length $d_a$ labeled with $\Sigma(\Pi)$ that has both of its endpoints labeled $a$. For convenience, let $P_{\mathrm{aa}}$ denote such a hairy path and replace all labels by their singleton labels (label $\sigma$ will become $\set{\sigma}$). We will use $P_{\mathrm{aa}}$ and replace node $n_a$ with the path $P_{\mathrm{aa}}$. We have now essentially ``pushed down'' a leaf node labeled with the singleton label $a$ by $d_a$ steps. We will repeat such ``pushing down'' until we have that $n_a$ is the deepest node of the tree $\TT_{\mathrm{st}}$. To summarize, now we have a tree $\TT_{\mathrm{st}}$ that has its deepest leaf labeled with the singleton $a$.

In the third phase, we want to make all leaves to be on the same level. We do it as follows.
Again, observe that since $\Sigma(\Pi)$ is the root of $\TT_{\mathrm{st}}$, we have a continuation below for every label from $\Sigma(\Pi)$. We can use such continuation to ``push down'' every leaf node of $\TT_{\mathrm{st}}$ that is not the deepest by one step in the same manner as in the previous phase. We replace a leaf node $n_l$ labeled with a singleton $\sigma$ with a $\delta$-ary tree of depth one, labeled with singletons corresponding to a continuation below for $\sigma$. To summarize, now we have a tree $\TT_{\mathrm{st}}$ that has all leaves at the same level. Examples of such trees are again depicted in \autoref{fig:logstar-cert-3col}b and \autoref{fig:constant-cert-mis}b.

Finally, in the last phase, we use $\TT_{\mathrm{st}}$ to build $|\Sigma(\Pi)|$ individual labeled trees $\TT_{\mathrm{st}}$ that would form a certificate. At the beginning, let each $\TT_i$ be labeled exactly as $\TT_{\mathrm{st}}$. Then, for each $\TT_{\mathrm{st}}$, we fix its root label to a distinct label $\sigma_i$. Then, we recursively fix the labels of the children such that the resulting configuration is in $\CC(\Pi)$. Such a configuration will always exist as it is how we have constructed the certificate builder (see \autoref{alg:unrestricted_certificate}). Examples of the obtained trees are depicted in \autoref{fig:logstar-cert-3col}c and \autoref{fig:constant-cert-mis}c.
\end{proof}

We now prove an upper bound on the running time of \autoref{alg:certificate}.
\begin{theorem}
  \label{certificate_search_runtime}
  The running time of \autoref{alg:certificate} is at most exponential in the size of the $\LCL$ problem.
\end{theorem}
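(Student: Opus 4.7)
The plan is a direct accounting argument. Let $s = |\Sigma(\Pi)|$, $c = |\CC(\Pi)|$, and let $|\Pi|$ denote the bit-length of the description of $\Pi$; note that $\delta, s, c \le |\Pi|$ since each configuration has length $\delta + 1$ and all three quantities must appear explicitly in the description.

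First I would bound the outer loop of \autoref{alg:certificate}. It enumerates the $2^s$ subsets of $\Sigma(\Pi)$, and for each constructs the corresponding restricted problem (in polynomial time) and invokes \autoref{alg:unrestricted_certificate}. Thus the total cost is at most $2^s$ times the cost of a single call to the subroutine, and it remains to bound that single call.

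For \autoref{alg:unrestricted_certificate} I would observe that every element of every $R_i$ is a pair $(r,b)$ with $r \subseteq \Sigma(\Pi)$ non-empty and $b \in \{0,1\}$, so $|R_i| \le 2^{s+1}$ throughout the execution. Because the repeat loop terminates as soon as $R_i$ fails to grow, and each non-terminating iteration strictly enlarges $R_i$, the loop body executes at most $2^{s+1}$ times. Each iteration enumerates all $\delta$-tuples of entries of $R_{i-1}$, i.e., at most $(2^{s+1})^{\delta} = 2^{\delta(s+1)}$ tuples, and for each such tuple the set $r_n$ is computed by a single scan through $\CC(\Pi)$ in time $O(\delta \cdot c)$. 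Multiplying these factors gives a runtime of $2^{O(\delta s)} \cdot \poly(|\Pi|)$ per call, and hence $2^s \cdot 2^{O(\delta s)} \cdot \poly(|\Pi|) = 2^{\poly(|\Pi|)}$ for the whole \autoref{alg:certificate}.

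The only real subtlety — the main place I would double-check — is the appearance of $\delta$ in the exponent, which arises because we enumerate $\delta$-tuples of subsets. This is harmless because $\delta$ is part of the input description (every configuration has $\delta+1$ slots), so $\delta \cdot s \le |\Pi|^2$ and the bound stays $2^{\poly(|\Pi|)}$, which is exponential in $|\Pi|$ as claimed. Everything else is a straightforward counting exercise, and no deeper structural property of $\Pi$ is needed beyond the trivial bound $|R_i| \le 2 \cdot 2^s$.
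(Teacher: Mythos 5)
Your proposal is correct and follows essentially the same counting argument as the paper: bound $|R_i|$ by $2^{|\Sigma|+1}$, note that the repeat loop can only run while $R_i$ strictly grows, and charge each iteration the exponential cost of enumerating all $\delta$-tuples, then multiply by the $2^{|\Sigma|}$ outer iterations. Your accounting is in fact slightly more careful than the paper's (which loosely attributes the termination condition to the for loop rather than the repeat loop), but the approach and the resulting bound are the same.
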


\begin{proof}
  Observe that every iteration of the for loop in \autoref{alg:unrestricted_certificate} either adds an element to $R_i$ or finishes the algorithm. Hence, we can upper bound the number of iterations by the maximum size of each set $R_i$, that is $2^{|\Sigma|+1}$. Also, each iteration requires at most exponential time in $\Sigma$ and $\dd$. Hence, the total running time of \autoref{alg:unrestricted_certificate} is exponential in $\Sigma$ and $\dd$. Since \autoref{alg:certificate} just calls \autoref{alg:unrestricted_certificate} for every choice over $\Sigma$, then we get one more exponential slowdown, hence the claim follows.
\end{proof}

Hence we conclude the following theorem.

\begin{theorem}
  \label{certificate_search_runtime-main}
  Whether an $\LCL$ problem $\Pi$ has round complexity $O(\log^* n)$ or\/ $\Omega(\log n)$ can be decided in time at most exponential in the size of the $\LCL$ problem.
\end{theorem}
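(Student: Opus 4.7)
The plan is to use \autoref{alg:certificate} itself as the decision procedure: given $\Pi$, I would run $\findCertificate(\Pi)$ and output ``$O(\log^* n)$'' if a non-$\epsilon$ certificate builder is returned, and ``$\Omega(\log n)$'' otherwise. All the heavy lifting has already been done in this section, so the proof essentially just assembles four previously established results.

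For correctness, I would argue as follows. By \autoref{thm:findlogstarcertifexists}, the output of \autoref{alg:certificate} is a certificate builder exactly when $\Pi$ admits a certificate for $O(\log^* n)$ solvability; note that the outer loop over $\Sigma' \in 2^{\Sigma(\Pi)}$ is precisely what permits the uniform certificate to be supported on an arbitrary subset of labels $\Sigma_\TT \subseteq \Sigma(\Pi)$ rather than on all of $\Sigma(\Pi)$. If such a certificate exists, \autoref{log*_algorithm} converts it into an $O(\log^* n)$-round algorithm solving $\Pi$, even in $\CONGEST$. If no such certificate exists, \autoref{nologstarcert} yields an $\Omega(\log n)$ lower bound on the randomized $\LOCAL$ complexity of $\Pi$, which is \emph{a fortiori} an $\Omega(\log n)$ bound on the deterministic complexity in any of the standard models considered. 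The two cases are complementary, so the algorithm correctly classifies every $\LCL$ problem.

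For the running time, \autoref{certificate_search_runtime} already shows that \autoref{alg:certificate} terminates in time at most exponential in $|\Sigma(\Pi)| + \delta$, and the remaining post-processing (checking whether the return value equals $\epsilon$) is trivial. I do not expect any serious obstacle: the only point that deserves care is making explicit that the outer loop over label subsets in \autoref{alg:certificate} is what lets the procedure find a certificate on the correct support, and that a single exponential factor from this enumeration combined with the exponential bound for \autoref{alg:unrestricted_certificate} still yields an overall exponential bound, matching the claim.
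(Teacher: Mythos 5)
Your proposal is correct and matches the paper's own (largely implicit) argument: the paper likewise concludes this theorem directly by combining \autoref{thm:findlogstarcertifexists} and \autoref{certificate_search_runtime} for the decision procedure and its running time with \autoref{log*_algorithm} and \autoref{nologstarcert} for the two complementary complexity cases. No gaps.
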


\section{Sub-log-star region}\label{sec:sublogstar}

In this section we prove that there is no $\LCL$ problem $\Pi$ with distributed time complexity between $\omega(1)$ and $o(\log^* n)$. Also, we prove that, given a problem $\Pi$, we can \emph{decide} if its complexity is $O(1)$ or $\Omega(\log^* n)$. Moreover, we prove that randomness cannot help: if a problem has \emph{randomized} complexity $O(1)$, then it has the same \emph{deterministic} complexity.

\subsection{High-level idea}

We prove that deciding if a problem $\Pi$ can be solved in constant time is surprisingly simple: a problem is $O(1)$ rounds solvable if and only if it can be solved in $O(\log^* n)$ rounds and $\Pi$ contains an allowed configuration of a specific form (this configuration will be called \emph{special}). This configuration must allow a node to use the same label $\ell$ that one of its children uses, the labels used by this configuration should be contained in the ones used by some certificate for $O(\log^*n)$ solvability, and $\ell$ should be used by at least one leaf of the certificate. If we consider the definition of the MIS problem given in Section~\ref{ssec:mis-example}, we can see that it allows the configuration $(b : b1)$, and informally this configuration is what makes the problem constant-time solvable. Note that, however, the algorithm that we can obtain by using this certificate, while still being constant time, may have a worse complexity compared to the one described in Section~\ref{ssec:mis-example}. On the other hand, we can see that in the definition of the $3$-coloring problem given in Section~\ref{ssec:3col-example} there is no configuration of this form, and this is what makes the problem $\Omega(\log^* n)$.

Informally, the reason is the following. The $n$ appearing in $O(\log^*n)$ complexities does not usually refer to the size of the graph, but to the range of the identifiers assigned to the nodes. In fact, in the proof of \autoref{log*_algorithm},  $O(\log^*n)$ is spent only to compute some ruling set, while the rest only requires constant time, and in order to compute such a ruling set, a distance-$k$ coloring, for some large enough $k$, is sufficient. Unfortunately, it is not possible to compute a distance-$k$ coloring in constant time, but as we will show, some defective coloring (that is, a coloring that allows some neighbors of a node to use the same color of the node) will be sufficient for our purposes. We show that in constant time we can produce some defective distance-$k$ coloring, for some large enough constant $k$, such that:
\begin{itemize}
  \item we can label defective nodes with the special configuration,
  \item unlabeled nodes are properly colored, and
  \item labeled nodes that are in different connected components are far enough from each other.
\end{itemize}
We can then complete the partial labeling in constant time with the help of the certificate, similarly to how we use the certificate of $O(\log^* n)$ solvability to solve problems in $O(\log^* n)$ rounds, but this time we can speed the computation of the ruling set up, and make it run in constant time by exploiting the defective distance-$k$ coloring. In the other direction, we show that if the special configuration does not exist, or if it does not satisfy the required properties, then any algorithm solving the problem can also be used to solve the coloring problem with a constant size palette, that is known to require $\Omega(\log^* n)$ rounds.

\begin{figure}
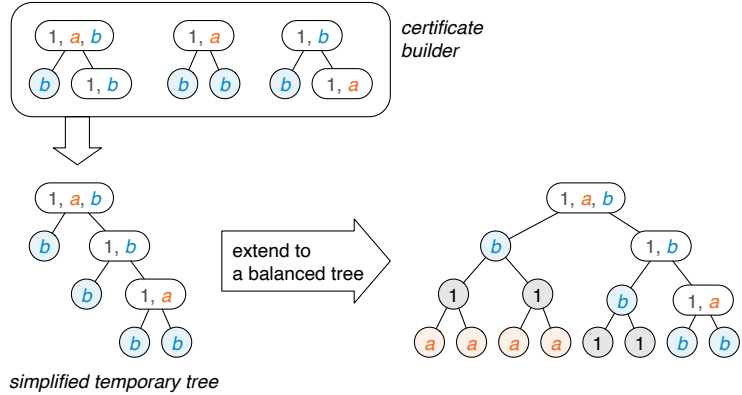
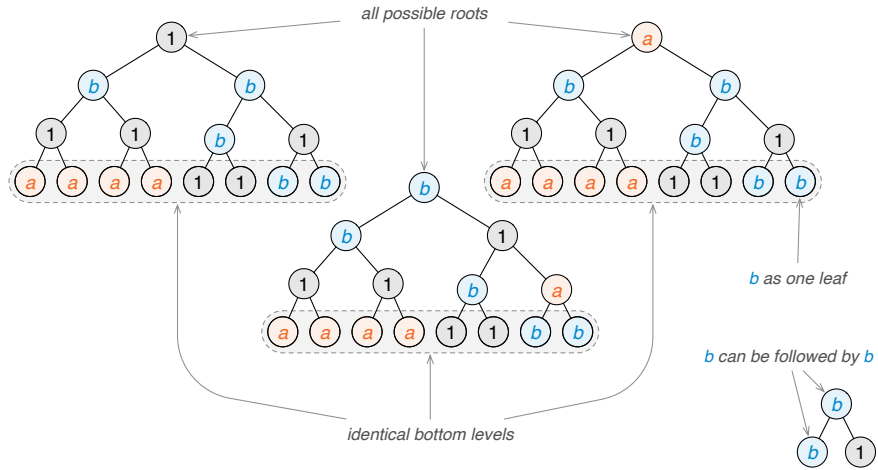

\figlayout

(a) Problem: maximal independent set in binary trees:
\begin{center}
\includegraphics[page=4,scale=\figscale]{figs.pdf}
\end{center}
\bigskip
\bigskip

(b) Finding a certificate where one of the leaf nodes is labeled with $b$:
\vspace{1mm}
\begin{center}
\includegraphics[page=5,scale=\figscale]{figs.pdf}
\end{center}
\bigskip
\bigskip

(c) Certificate for $O(1)$-round solvability:
\begin{center}
\includegraphics[page=6,scale=\figscale]{figs.pdf}
\end{center}

\caption{Finding a certificate for $O(1)$ solvability (Definition~\ref{def:const-cert}) for the maximal independent set problem (Section~\ref{ssec:mis-example}).}\label{fig:constant-cert-mis}
\Description{}
\end{figure}

\subsection{Certificate}

We start by defining what is a certificate for $O(1)$ solvability, that is nothing else but a certificate for $O(\log^*n)$ solvability and a configuration of some specific form. An example of such a certificate for the MIS problem is depicted in \autoref{fig:constant-cert-mis}.
\begin{definition}\label{def:const-cert}
Let $\Pi$ be an $\LCL$ problem. A certificate for $O(1)$ solvability for problem $\Pi$ is a pair $\SX$ consisting of a certificate for $O(\log^*n)$ solvability $\TT$ and a configuration $(a : b_1,\dots,a,\dots,b_\dd) \in \CC(\Pi)$ where  $a,b_i \in \Sigma_\TT$ and at least one leaf of the trees in $\TT$ is labeled~$a$.
\end{definition}

\subsection{Upper bound}

We now prove that we can use a certificate for $O(1)$ solvability to construct an algorithm that solves the problem $\Pi$ in constant time. Informally, we first spend a constant number of rounds to try to construct some distance-$k$ coloring. This coloring cannot always be correct, since the coloring problem requires $\Omega(\log^* n)$ rounds. We will use the special configuration to label nodes in which the coloring procedure failed. The coloring will also satisfy some desirable property, such as having improperly colored regions that are far enough from each other. This will give us a proper distance-$k$ coloring in the unlabeled regions, and we will use this coloring to complete the labeling in constant time.
The proof of this theorem will use some useful lemmas that we will prove later.

\begin{theorem}
\label{certificate_and_aab_implies_O(1)}
Any $\LCL$ problem $\Pi$ that has a certificate of $O(1)$ solvability is constant-time solvable with a deterministic $\CONGEST$ algorithm.
\end{theorem}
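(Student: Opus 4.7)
The plan is to adapt the algorithm of \autoref{log*_algorithm}, replacing its $\Theta(\log^* n)$-round distance-$k$ coloring step by an $O(1)$-round \emph{defective} distance-$k$ coloring, and using the special configuration $\aab$ to repair the defects. Fix the certificate $\SX = (\TT, \aab)$ with $\TT$ using labels $\Sigma_\TT$, and recall that $a, b_1, \ldots, b_\dd \in \Sigma_\TT$ and that some leaf of some tree of $\TT$ is labeled~$a$.

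First I would show that in a constant number of rounds one can compute a labeling of the nodes with a constant-size palette such that: (i) every connected ``defective'' region, where the coloring fails, admits a directed root-to-leaf chain of internal nodes that can be labeled using the configuration $\aab$; (ii) distinct defective regions lie at distance at least $K$ from one another, for a sufficiently large constant $K$ depending only on $\TT$; and (iii) the coloring restricted to non-defective nodes is a proper distance-$k$ coloring for any desired constant $k$. A natural implementation uses the port numbering: each node collects a short fingerprint from the port numbers along the last $O(k)$ edges of its root-path, and is declared defective precisely when this fingerprint collides with that of a nearby node.

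Next I would label the defective regions using only the configuration $\aab$. Along the chosen directed chain through such a region every node is labeled $a$, and at each node the $\dd-1$ non-chain children receive the labels $\{b_1, \ldots, b_\dd\} \setminus \{a\}$, counted with multiplicity, in any order. Because $a$ is a leaf label of some tree of $\TT$ and each $b_i \in \Sigma_\TT$ is a root of some tree of $\TT$ (and hence has a continuation below inside $\Sigma_\TT$), the labels at the boundary of a defective region---the parent of the chain's topmost node and the subtree roots hanging off the chain---are compatible with a certificate-based labeling of the surrounding ``free'' region.

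Finally I would label the free region following the second phase of \autoref{log*_algorithm}: using the proper distance-$k$ coloring available on the free region, one computes a $(k,k)$-ruling set in constant time by iterating through the constantly many color classes, uses the ruling set to split each long free chain into constant-length chunks, and labels those chunks using the trees of $\TT$ exactly as before. The resulting algorithm uses $O(1)$ rounds and only bounded-size messages, hence it runs in the $\CONGEST$ model. The main obstacle will be step one: designing a constant-round defective coloring whose three properties hold simultaneously, with the safety margin $K$ exceeding the depth and flexibility requirements of $\TT$ while matching exactly what $\aab$ can patch. This matching is precisely what forces the leaf-label condition in \autoref{def:const-cert}---without some leaf of $\TT$ being labeled $a$ there is no way to interface an $\aab$-chain with the certificate-labeled rest of the tree.
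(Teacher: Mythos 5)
Your plan follows the paper's proof almost step for step: an $O(1)$-round defective distance-$k$ coloring derived from ancestral port numbers, patching each defective region with the configuration $\aab$, and completing the free regions with the certificate trees via a constant-time ruling set. However, the step you yourself flag as ``the main obstacle'' is the heart of the argument, and the implementation you hint at---declare a node defective when its fingerprint collides with that of a nearby node---does not deliver your property (i). Two colliding nodes at distance less than $k$ can sit in different subtrees of their lowest common ancestor (with different depths below it), in which case the marked set is not a union of directed chains, and $\aab$ only lets a label repeat along a parent--child edge, not between cousins. The paper's resolution is to mark a node exactly when its \emph{own} length-$9k$ sequence of ancestral port numbers is periodic with period at most $k$ (with $k=20d+1$, $d$ the certificate depth). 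Periodicity is what simultaneously forces marked components to be vertical paths (two marked nodes hanging off different children of a common ancestor would force two distinct siblings to have equal port numbers, via \autoref{knowledge_of_marked_regions}), keeps distinct marked components at distance more than $k$, and makes the coloring proper at distance $k$ on the unmarked part: a collision between two unmarked nodes at distance less than $k$ forces their lowest common ancestor to be periodic, hence marked, hence to separate them into different components (\autoref{unmarked_regions_form_distance-k_coloring}).

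A second point you compress too much is the interface between the $\aab$-chains and the certificate-labeled subtrees. It is not enough to observe that $a$ is a leaf label of $\TT$ and that each $b_i$ has a continuation below. After cutting the free region into subtrees of depth between $d$ and $10d$, a subtree may have one of its leaves already forced to $a$ by an adjacent marked chain; one must then (a) guarantee that each subtree has at most one such forced leaf---this is exactly where the separation of more than $20d$ between marked regions is cashed in, since any two nodes of a depth-$10d$ subtree are within distance $20d$; (b) permute the certificate's leaf layer so that the depth-$d$ ancestor of the forced leaf receives label $a$; and (c) fill the hairy path from that ancestor down to the forced leaf with copies of $\aab$. The cutting itself also needs care: the paper solves a one-sided ``ruling set extension'' problem and then locally repositions cut points that land within distance $d$ of an already-labeled node. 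None of this is unreachable from where you stand, but it is precisely the part of the proof where the leaf-label condition of \autoref{def:const-cert} is actually used, and your sketch does not yet contain it.
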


\begin{proof}
We prove the statement by constructing a constant-time algorithm with the help of a special configuration $(a : b_1,\dots,a,\dots,b_\dd)$ and $O(\log^*n)$ certificate $\TT$ of depth $d$ where at least one leaf of the trees in $\TT$ is labeled $a$.
Let $k = 20 d + 1$.

For each node $v$, let $p(v) \in \{1,\ldots,\dd\}$ be the index of $v$ in the sorted sequence containing the identifier of $v$ and all its siblings (that is, $p(\cdot)$ emulates port numbers). We start by assigning a (possibly non-proper) coloring  $c(v)$ to each node $v$, as follows. Consider the sequence $(v_i, i \ge 0)$ of nodes obtained by starting from $v_0 = v$ and following edges going up. The color $c(v)$ is defined as $(p(v_0), p(v_1), \dots, p(v_{10k-1}))$. If $p(v_i)$ is undefined because $v_i$ does not exist, that is, while going up we found the root, then we complete the sequence with $1$s (this is equivalent to imagine the rooted tree to be embedded into a larger rooted tree, where all nodes of the original are far enough from the root of the larger tree). This can be done in constant time in $\CONGEST$, by first computing and sending the $p(\cdot)$ value of all children, and then repeatedly propagating down the $p(\cdot)$ value received from the parent. We say that a prefix of length $x$ of a color $c=c(v)$ has period $r$ if and only if $c[i] = c[i+r]$ for all $i < x - r$.

We define a \emph{vertical path} to be a subpath of a root-to-leaf path. We mark all nodes $v$ satisfying that the prefix of length $9k$ of $c(v)$ has period at most $k$. By \autoref{marked_regions_form_a_path}, the connected components induced by marked nodes form vertical paths. We use the configuration $(a : b_1,\dots,a,\dots,b_\dd)$ to label all marked nodes with $a$, and all their children with the other labels of the configuration in some arbitrary consistent manner.

By \autoref{marked_regions_are_far_apart}, all connected components of the marked nodes are at distance at least $k$, and by \autoref{unmarked_regions_form_distance-k_coloring}, all unmarked regions are distance-$k$ colored. We now show how to complete the partial labeling assigned to the nodes. We will start by splitting the unlabeled parts of the tree into constant size subtrees, by exploiting the distance coloring to perform this step in constant time. Then, we will operate in a constant number of rounds on these subtrees in parallel.

Let us now focus on the first phase, namely splitting the trees. We will compute a splitting of the unlabeled regions satisfying the following:

\begin{itemize}
  \item Subtrees have overlapping \emph{boundaries}, meaning that each leaf of a subtree is also the root of another subtree.
  \item The distance of each leaf from the root is in $\{d,\ldots,10d\}$.
  \item All inner nodes of each subtree are unlabeled.
\end{itemize}

Consider the following problem, which we call \emph{ruling set extension problem} (see Figure~\ref{fig:one-sided-ruling-set}a): We are given a directed path and some set $S$ satisfying that the nodes in $S$ are at distance at least $2d$ from each other.  The goal is to compute a set $S'$ such that each node not in $S \cup S'$ has a node in $S \cup S'$ at distance \emph{at most} $4d$, and for each node in $S'$ it holds that the closest successor in $S \cup S'$ is at distance \emph{at least} $2d$.
That is, $S \cup S'$ is almost a $(2d,4d)$-ruling set: nodes in the given set $S$ can violate the ruling set requirements, since they could have a successor in $S'$ at distance less than $2d$.

\begin{figure}
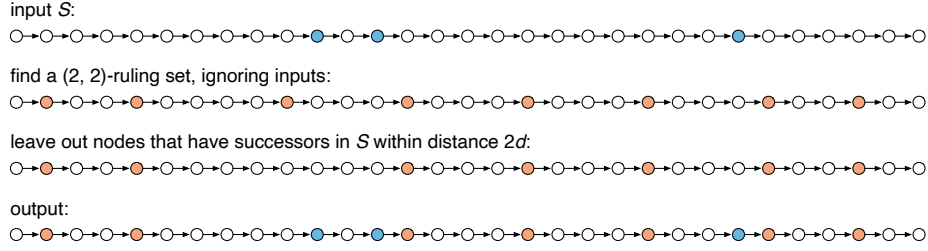

\figlayout

(a) Ruling set extension problem (for $d = 1$):
\begin{center}
\includegraphics[page=19,scale=\figscale]{figs.pdf}
\end{center}
\bigskip
\bigskip
\bigskip

(b) One-sided algorithm that solves the ruling set extension problem (for $d = 1$):
\begin{center}
\includegraphics[page=20,scale=\figscale]{figs.pdf}
\end{center}

\caption{Ruling set extension problem.}\label{fig:one-sided-ruling-set}
\Description{}
\end{figure}

We show that the ruling set extension problem can be solved in $O(1)$ rounds on directed paths by using a \emph{one-sided} algorithm, provided that we are given a distance-$k$ coloring with $O(1)$ colors for a sufficiently large $k$. A one-sided algorithm is an algorithm in which nodes only send messages to their predecessors (or equivalently, nodes only receive information from their successors). One-sided algorithms are convenient, as they are directly applicable in rooted trees \cite{lcls_on_paths_and_cycles}: if we have a one-sided algorithm that finds a ruling set extension in directed paths, we can apply the same algorithm in rooted trees and it will produce an output in which all root-to-leaf paths satisfy the constraints of the ruling set extension problem. Our one-sided algorithm works as follows (see Figure~\ref{fig:one-sided-ruling-set}b):
\begin{enumerate}
  \item[(i)] Compute $S'$ by using a $(2d,2d)$-ruling set algorithm (ignoring $S$ entirely).
  \item[(ii)] Nodes in $S'$ that have a successor in $S$ within distance $2d$ are removed from $S'$. 
\end{enumerate}
We obtain that nodes in $S$ have no predecessors in $S \cup S'$ within distance $2d$, nodes in $S'$ are at distance at least $2d$ from each other, and nodes not in $S \cup S'$ have at least a node in $S \cup S'$ within distance $4d$, and hence a solution for the ruling set extension problem. Here step~(ii) is easy to implement with a one-sided algorithm. There is also a simple two-sided algorithm $A$ for solving step~(i): process the nodes by color classes; whenever we consider a particular node, check if there is already another node within distance $2d$ that we have selected, and if not, select the node. Finally, we use the standard trick of ``shifting the output'' \cite{lcls_on_paths_and_cycles} to turn $A$ into a one-sided algorithm $A'$ that solves the same problem, as follows: Let $T = O(1)$ be the running time of algorithm $A$, and let us label the nodes by $v_1, v_2, \dotsc$ along the path. In algorithm $A'$ node $v_i$ will output whatever node $v_{i+T}$ outputs in algorithm $A$. The output of node $v_{i+T}$ in algorithm $A$ only depends on the input colors of nodes $v_i, v_{i+1}, \dotsc, v_{i+2T}$, and hence one-sided information is sufficient for $A'$ to simulate $A$. We have simply shifted the ruling set by $T$ steps.

Let us now get back to the task of splitting trees. We proceed as follows. We add all nodes whose label is fixed to a set $S$. Then, we run the one-sided algorithm for the ruling set extension problem on the subgraph induced by unlabeled nodes and their neighbors. Let $S'$ be the output of the algorithm. As discussed, in any root-to-leaf path we obtain a solution for the problem described above, and observe that this implies that, in any root-to-leaf path, nodes in $S \cup S'$ have at least one successor and one predecessor in $S \cup S'$ at distance at most $8d+1 < 10d$.
What we obtained almost satisfies the requirements of the splitting, except that some nodes of $S'$ may be too near to the nodes that were already in the set, since the minimum distance is guaranteed only while following successors (that is, by going up), and we now fix this issue.

Let $P \subseteq S'$ be the set of nodes that have an already labeled (that is, nodes in $S$) node as one of its descendants at distance less than $d$.
For each such node $n_\mathrm{p}$ from $P$ we do the following.
Let $d_c$ denote the distance from node $n_\mathrm{p}$ to its closest labeled node $n_\mathrm{c}$ below.
We remove $n_\mathrm{p}$  from $S'$ and add all nodes that are descendants of $n_\mathrm{p}$ at distance exactly $d_\mathrm{c}$ to the set $S'$. Note that, as the connected components of labeled nodes are strictly more than $20d$ steps apart (by definition of $k$), all nodes $n_\mathrm{u}$ just added to $S'$ (except node $n_\mathrm{c}$) will \emph{not} have any node of $S$ below it that is closer than $20d - \mathrm{distance}(n_\mathrm{u},n_\mathrm{c}) \ge 18d$ steps. The nodes added to $S'$ will be closer to the nodes of $S'$ below, but still at least $2d - d = d$ far away, as distance between two nodes of $S'$ on any root-to-leaf path was originally always at least $2d$.
Similarly, they will be further away from a node in $S'$ that is above them, but again at most distance $8d+1+d = 9d+1 \le 10d$ far away.
Now, observe that the nodes in $S \cup S'$ partition the input tree in subtrees with the required properties.

We now describe the second phase. For that, we use the certificate $\TT$, and for each subtree in parallel, we do the following.
First, we check whether it has one of its leaves already fixed.
If not, we directly fix labels at depth $d$ exactly as a leaf layer of any tree from the certificate (they are by definition the same). Otherwise, more care is needed. Observe that each subtree can have at most one fixed leaf (labeled with $a$). In fact, in any tree whose depth is upper bounded by $10d$, any two nodes are at most distance $20d$ apart (the distance is upper bounded by the length of a walk that starts from a node, goes to the root, and then goes to other node); but as fixed nodes are strictly more than $20d$ apart (by the definition of $k$), having one fixed leaf means that all other leaves are unlabeled. Let such fixed leaf be denoted by $n_\mathrm{f}$ and also denote by $n_\mathrm{d}$ the node that is on a the path connecting $n_\mathrm{f}$ to the root at distance $d$ from the root. We fix labels at depth $d$ exactly as a leaf layer of any tree from the certificate such that node $n_\mathrm{d}$ will be labeled with label $a$. This is possible as we can freely choose the ordering of the children. Then, we use the configuration $\aab$ to label the nodes of the hairy path connecting $n_\mathrm{d}$ to $n_\mathrm{f}$.

Now, we have fixed layer $d$ of all subtrees, and also some of the nodes below layer $d$. We proceed from layer $d$ and towards lower layers and label all of its children arbitrarily as every label has a continuation below. This procedure will stop after constant time as our trees have constant depth.

The only remaining part is to fix labels for nodes that are between the roots and the nodes at layer $d$. For that, we use the certificate trees and for every subtree with root fixed to label $\sigma_i$, we use tree $\TT_i$ to label the upper layers.
\end{proof}

We now prove that if marked nodes are at distance at most $k$, then they must lie in the same vertical path. Intuitively, if two nodes $x_1$ and $x_2$ are siblings, then $p(x_1) \neq p(x_2)$, implying that they cannot both have a periodic color, and that the same must hold for the descendants of $x_1$ and $x_2$, up to some distance. Hence, we can find nodes with a periodic color only by following vertical paths.

\begin{lemma}
\label{knowledge_of_marked_regions}
If two marked nodes are at distance at most $k$, then they are in the same connected component of marked nodes, and each connected component forms a vertical path.
\end{lemma}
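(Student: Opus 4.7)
The plan hinges on applying Fine--Wilf's periodicity theorem (a word of length at least $p + q - \gcd(p, q)$ with periods $p$ and $q$ also has period $\gcd(p, q)$) to the port-index sequence $c(v) = (p(v_0), p(v_1), \dots, p(v_{10k-1}))$ that defines the marking. I will first show that two marked nodes at distance at most $k$ must share a common vertical path (one is an ancestor of the other), then that every intermediate node on the tree path between them is also marked, and finally deduce both conclusions of the lemma.

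For the first step, suppose $u$ and $v$ are marked with witnessing periods $r_u, r_v \leq k$ and that neither is an ancestor of the other. Let $w$ be their LCA, and let $u', v'$ be the distinct children of $w$ lying on the paths to $u$ and $v$; set $i = \mathrm{dist}(u, u')$ and $j = \mathrm{dist}(v, v')$, so $i + j + 2 \leq k$ and $p(u') \neq p(v')$, since distinct siblings have distinct port indices. Writing $\sigma[t] = p(w_t)$ for the port-index sequence along the ancestor chain starting at $w$, we have $c(u)[i+1+t] = c(v)[j+1+t] = \sigma[t]$ wherever defined, so the $r_u$- and $r_v$-periodicities of $c(u)$ and $c(v)$ transfer to $\sigma$ on an initial segment whose length (at least $7k - 1$) safely exceeds $r_u + r_v - \gcd(r_u, r_v)$. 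Fine--Wilf then endows this segment of $\sigma$ with period $r^* := \gcd(r_u, r_v)$. The periodicity of $c(u)$ also yields $p(u') = c(u)[i] = c(u)[i + r_u] = \sigma[r_u - 1]$ and, symmetrically, $p(v') = \sigma[r_v - 1]$; since $r^*$ divides both $r_u$ and $r_v$, we get $\sigma[r_u - 1] = \sigma[r^* - 1] = \sigma[r_v - 1]$, contradicting $p(u') \neq p(v')$.

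For the second step, assume without loss of generality that $v$ is an ancestor of $u$ at distance $d' \leq k$ and let $z = u_s$ be an intermediate node, $1 \leq s \leq d' - 1$. The identities $c(z)[t] = c(u)[t+s]$ and $c(z)[t] = c(v)[t + s - d']$ on their respective valid ranges imply that $c(z)$ inherits period $r_u$ on $[0, 9k - 1 - s]$ and period $r_v$ on $[d' - s, 9k - 1]$; these subranges overlap on a window of length $9k - d' \geq 8k$, so Fine--Wilf endows the overlap with period $r^* = \gcd(r_u, r_v) \leq k$. To propagate $r^*$ to the full prefix $[0, 9k - 1]$, for any position $a$ in the left non-overlap region I will use period $r_u$ to shift $a$ (and $a + r^*$, if it too lies outside the overlap) forward by the smallest multiple of $r_u$ that lands inside the overlap, invoke period $r^*$ there, and then pull the equality back via period $r_u$; a direct bound shows the shifted positions remain within $[0, 9k - 1 - s]$, so the operations are valid. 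The right non-overlap region is handled symmetrically using $r_v$. Hence $c(z)$ has period $r^* \leq k$ on its length-$9k$ prefix, and $z$ is marked.

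Combining the two steps, any two marked nodes at distance at most $k$ are joined by a path of marked nodes and therefore lie in the same connected component. For the structure of components, if some node $w$ in a marked component had two marked children $x_1, x_2$, these would be marked nodes at distance $2 \leq k$ failing to share a vertical path, directly contradicting the first step (applied with $i = j = 0$). Consequently every node of a component has at most one child in the component, so each component forms a single downward path, i.e., a vertical path. I expect the main technical obstacle to be the index-level bookkeeping in the second step, namely extending the gcd-period from the overlap to the entire length-$9k$ prefix; once that is executed, both conclusions of the lemma follow cleanly from Fine--Wilf applied to the port-index strings.
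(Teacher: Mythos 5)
Your proof is correct and follows essentially the same route as the paper's: both obtain the contradiction from the fact that the two distinct children of the LCA have distinct port indices, while the periodicity of the colour strings forces both of those indices to equal the port index of a common ancestor. The only difference is one of rigor --- where the paper tersely asserts that the two witnessing periods coincide ($k_1=k_2=k_3$) and dismisses the claim that intermediate nodes are marked in a single sentence, you justify both points via Fine--Wilf (taking $\gcd(r_u,r_v)$ as a common period and propagating it from the overlap to the full $9k$-prefix), which is precisely the bookkeeping needed to make the paper's sketch airtight.
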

\begin{proof}
We will prove the statement by contradiction. Suppose that $v_1$ and $v_2$ are two marked nodes from different connected components that are at distance $d\le k$ and the prefix of length $9k$ of their colors has period at most $k$. Let $k_1$ and $k_2$ be, respectively, the period of the colors of $v_1$ and $v_2$. Since they are at distance $d \le k$ from each other, then their lowest common ancestor $v_3$ is at distance at most $k$ from both. Let $d_1$ and $d_2$ be, respectively, the distance of $v_1$ and $v_2$ from $v_3$. Since a prefix of length at least $8k$ of $c(v_3)$ is equal to suffixes of length at least $8k$ of $c(v_1)$ and $c(v_2)$, then the prefix of length $8k$ of $c(v_3)$ has also period $k_3$ that is at most $k$, and that satisfies $k_3 = k_1 = k_2 = k'$. We now prove that either  $v_1 = v_3$ or $v_2 = v_3$. Assume it is not the case, then there must be two children $x_1$ and $x_2$ of $v_3$ that lie in the two paths connecting $v_3$ to $v_1$ and $v_2$. Since $v_3$ is the lowest common ancestor, $x_1 \neq x_2$, and since $x_1$ and $x_2$ are siblings, then $p(x_1) \neq p(x_2)$. Since $p(x_1) = c(v_1)[d_1-1]$, and $p(x_2) = c(v_2)[d_2-1]$, then by the period assumption $c(v_1)[d_1-1+k'] \neq  c(v_2)[d_2-1+k']$, which is a contradiction since by going up from $v_1$ for $d_1-1+k'$ steps we reach the same node that we reach by going up from $v_2$ for $d_2-1+k'$ steps. Hence, $v_1$ and $v_2$ lie in the same vertical path. Also, note that all nodes in the vertical path between $v_1$ and $v_2$ will be marked as well, since the prefix of length $9k$ of their color is contained in the union of the prefixes of $v_1$ and $v_2$, which have period at most $k$ in their $9k$-length prefixes.
\end{proof}

This lemma implies the following corollaries.
\begin{corollary}[Marked regions are far apart]
\label{marked_regions_are_far_apart}
For all marked nodes $v_1$ and $v_2$ from different regions, their distance is strictly larger than $k$.
\end{corollary}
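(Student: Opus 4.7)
The plan is to derive this corollary as an immediate contrapositive of \autoref{knowledge_of_marked_regions}. That lemma asserts that any two marked nodes at distance at most $k$ must lie in the same connected component of the marked subgraph. So if we instead have two marked nodes $v_1$ and $v_2$ known to belong to \emph{different} connected components of the marked node set, then the hypothesis of the lemma cannot be satisfied, and their distance must exceed $k$.

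More concretely, I would argue by contradiction: suppose $v_1$ and $v_2$ are marked, belong to different components, yet satisfy $\mathrm{dist}(v_1, v_2) \le k$. Then \autoref{knowledge_of_marked_regions} directly applies and forces $v_1$ and $v_2$ into the same connected component of marked nodes, contradicting the assumption. Hence $\mathrm{dist}(v_1, v_2) > k$, as claimed.

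There is no real obstacle here---the lemma was stated in the precise form needed to make this step a one-liner, and the corollary simply reindexes the conclusion as a distance bound between distinct components rather than as a forced coincidence of components. The only subtlety worth flagging is that ``strictly larger than $k$'' (rather than ``$\ge k+1$'', which is the same thing in a graph with integer distances) is written to emphasize that even equality to $k$ is ruled out, because the lemma's hypothesis is phrased as ``at most $k$''.
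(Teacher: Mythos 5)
Your proposal is correct and matches the paper exactly: the paper presents this corollary with no written proof, simply noting that \autoref{knowledge_of_marked_regions} implies it, and the intended argument is precisely the contrapositive you spell out. Nothing further is needed.
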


\begin{corollary}[Marked region is a vertical path]
\label{marked_regions_form_a_path}
Every marked region forms a vertical path.
\end{corollary}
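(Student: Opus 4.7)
The plan is to observe that this corollary is an immediate consequence of Lemma~\ref{knowledge_of_marked_regions} (labeled as \texttt{knowledge\_of\_marked\_regions}), which already contains the statement ``each connected component forms a vertical path'' as part of its conclusion. So the proof is essentially a short appeal to the preceding lemma, with a small bit of glue to connect ``connected component of marked nodes'' to the hypothesis of the lemma (which concerns pairs of marked nodes at distance at most $k$).

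More concretely, I would argue as follows. Fix a connected component $M$ of the marked-node subgraph and take any two marked nodes $u, v \in M$. By definition of connectedness in the induced subgraph, there is a sequence of marked nodes $u = w_0, w_1, \dots, w_m = v$ in $M$ such that each consecutive pair $w_i, w_{i+1}$ is adjacent in the tree, hence at distance $1 \le k$. Applying Lemma~\ref{knowledge_of_marked_regions} to each consecutive pair $w_i, w_{i+1}$ places them on a common vertical path (a subpath of some root-to-leaf path), and moreover guarantees that every node on the tree-path between them is also marked, hence also lies in $M$. Chaining these overlapping vertical subpaths together along the sequence $w_0, w_1, \dots, w_m$ shows that $u$ and $v$ themselves lie on a single vertical path contained in $M$; since this holds for every pair in $M$, the component $M$ itself is a vertical path.

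The main obstacle is essentially nonexistent, as the structural work has already been done in the proof of Lemma~\ref{knowledge_of_marked_regions}: the periodicity argument there shows both that nearby marked nodes share a vertical ancestor line and that all intermediate nodes inherit the periodic prefix and are therefore marked. The only subtlety is to make sure that ``component forms a vertical path'' is read in the right way, namely that the set of nodes of the component can be linearly ordered as $x_1, x_2, \dots, x_\ell$ where each $x_{i+1}$ is the parent of $x_i$ (or vice versa); this is precisely what the chaining argument above produces.
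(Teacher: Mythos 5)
Your proposal is correct and matches the paper, which offers no separate proof: the conclusion ``each connected component forms a vertical path'' is already part of the statement of Lemma~\ref{knowledge_of_marked_regions}, so the corollary is read off directly. The additional chaining argument is unnecessary (and, taken on its own, would still need the lemma's ancestor/descendant conclusion to rule out branching at a marked node with two marked children), but since your primary justification is the direct appeal to the lemma's conclusion, the proof stands.
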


We now prove that, if we consider the subgraph induced by unmarked nodes, the computed coloring $c$ forms a proper distance-$k$ coloring.
\begin{lemma}[Unmarked regions form a distance-k coloring]
\label{unmarked_regions_form_distance-k_coloring}
Let $C$ be a connected component of unmarked nodes. Then labels of nodes in $C$ form a distance-$k$ coloring.
\end{lemma}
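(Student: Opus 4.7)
The plan is to argue by contradiction: suppose two distinct unmarked nodes $u, v$ in the same connected component $C$ satisfy $c(u) = c(v)$ and are at tree distance at most $k$. I will exhibit some node on the $u$-$v$ path whose length-$9k$ color prefix has period at most $k$, which forces that node to be marked, contradicting the fact that $C$ consists entirely of unmarked nodes.

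For each node $x$, write $x_0, x_1, \ldots$ for the sequence obtained by starting at $x_0 = x$ and repeatedly going to the parent, so that $c(x)[i] = p(x_i)$ for $i \in \{0, \ldots, 10k-1\}$. Split the argument into cases. If $u$ is a strict ancestor of $v$ (or symmetrically), say $u = v_d$ with $1 \le d \le k$, then $c(u)[i] = p(v_{d+i})$, so the hypothesis $c(u) = c(v)$ rewrites as $p(v_i) = p(v_{i+d})$ for every $i \in \{0, \ldots, 10k-1\}$. This is exactly the statement that the length-$10k$ sequence defining $c(v)$ has period $d \le k$, so its length-$9k$ prefix does too, and hence $v$ is marked---contradiction.

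Otherwise let $w$ be the lowest common ancestor of $u$ and $v$ and set $d_1 = d(u,w)$ and $d_2 = d(v,w)$ with $d_1, d_2 \ge 1$ and $d_1 + d_2 \le k$. If $d_1 = d_2$, then the two children of $w$ on the respective paths are distinct siblings and hence have different port numbers, so $c(u)[d_1 - 1] \ne c(v)[d_1 - 1]$, contradicting $c(u) = c(v)$. If $d_1 \ne d_2$, WLOG $d_1 < d_2$ and set $e = d_2 - d_1 \in \{1, \ldots, k\}$. The key geometric identity is $u_i = v_{i+e}$ for every $i \ge d_1$ (both denote the same ancestor of $w$), which combined with the hypothesis $p(u_i) = p(v_i)$ yields $p(v_i) = p(v_{i+e})$ for all $i \in \{d_1, \ldots, 10k-1\}$. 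Translating to the color of the intermediate node $v_{d_1}$, whose length-$9k$ prefix reads $p(v_{d_1}), \ldots, p(v_{d_1+9k-1})$, the fact that $d_1 \le k$ implies the window $\{d_1, \ldots, 10k-1\}$ is long enough to cover the prefix, so $c(v_{d_1})$ has period $e \le k$ in its length-$9k$ prefix. Hence $v_{d_1}$ is marked; but $v_{d_1}$ lies on the tree path from $v$ through $w$ to $u$ and therefore belongs to $C$, contradicting that $C$ consists only of unmarked nodes.

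The main obstacle is the non-symmetric sub-case $d_1 \ne d_2$: neither $c(u)$ nor $c(v)$ itself acquires the required prefix periodicity, and one is forced to locate the marked node somewhere strictly between them---specifically at $v_{d_1}$---and then to verify that this intermediate node also lies in $C$. The index bookkeeping, namely that the period identity holds over a window long enough to cover the length-$9k$ prefix of $c(v_{d_1})$ after shifting by $d_1 \le k$, is exactly what forces the gap between the full color length $10k$ and the marking threshold $9k$.
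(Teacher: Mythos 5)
Your proof is correct and follows essentially the same strategy as the paper's: assume two equally colored unmarked nodes at distance at most $k$ in the same component and derive a contradiction by exhibiting a node on the unique tree path between them whose length-$9k$ color prefix is periodic with period at most $k$, hence marked. The only cosmetic difference is that you place the marked witness at $v_{d_1}$ (just below the lowest common ancestor on the $v$-side), whereas the paper places it at the lowest common ancestor itself; both choices work for the same index-bookkeeping reason, namely the slack between the color length $10k$ and the marking threshold $9k$.
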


\begin{proof}
By contradiction.
Suppose that $v_1$ and $v_2$ are two different unmarked nodes having the same color $c$ and distance $d<k$.
We will show that there is a marked node on the path between $v_1$ and $v_2$, contradicting that $v_1$ and $v_2$ are from the same connected component.
The distance to the lowest common ancestor (denoted by $v_3$) is at most $d$ for both nodes (the distance will be denoted by $d_1$ and $d_2$ for nodes $v_1$ and $v_2$ respectively).
W.l.o.g.\ we assume that $d_1 < d_2$ (because of symmetry and the fact that $d_1 = d_2$ would contradict that nodes $v_1$ and $v_2$ have the same color and are different).
As the path upwards from node $v_3$ is the same for both nodes, we obtain that $c[d_1 + i] = c[d_2 + i]$ for all $0\leq i<10k-d_2$. If we look at color $c'$ of the lowest common ancestor, we obtain from the previous equalities that $c'[i] = c'[d_2-d_1 + i]$ for all $0\leq i<10k-d_2$, hence the prefix of length $9k$ of $c'$ has period at most $d_2-d_1 \le k$. Hence node $v_3$ would be marked and on a path from $v_1$ to $v_2$ contradicting that they are from the same connected component.
\end{proof}

\subsection{Lower bound}

We now prove that, if a certificate for $O(1)$ solvability does not exist, then the problem requires $\Omega(\log^* n)$, even for randomized algorithms. On a high level, we can prove that if there is no $O(\log^* n)$ algorithm that can use the special configuration, then it means that we can convert any solution for $\Pi$ into a proper coloring, implying that $\Pi$ requires $\Omega(\log^* n)$.

\begin{theorem}\label{no1cert}
  Let $\Pi$ be an $\LCL$ problem for which no certificate for\/ $O(1)$ solvability exists.
  Then, the randomized and deterministic complexity of\/ $\Pi$ in the $\LOCAL$ model is $\Omega(\log^* n)$.
\end{theorem}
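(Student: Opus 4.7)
The plan is to prove the contrapositive: if $\Pi$ can be solved in $o(\log^* n)$ rounds (deterministically or with randomness) in the $\LOCAL$ model, then $\Pi$ admits a certificate for $O(1)$ solvability, contradicting the hypothesis of the theorem.

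For the deterministic case, let $A$ be a deterministic algorithm solving $\Pi$ in $T(n)\in o(\log^* n)$ rounds, and pick $n_0$ large enough that the premise of \autoref{lem:constructcert} holds. Applying that lemma with $a=\epsilon$ yields a non-empty set $S\subseteq \Sigma(\Pi)$ of labels that $A$ may output at nodes at distance more than $T(n_0)$ from the root and from any leaf; by maximality of $S$, on every sufficiently large instance every such interior node receives a label from $S$. The lemma also guarantees that for every $\bar{s}\in S$ there is a certificate for $O(\log^* n)$ solvability whose label set contains $S$ and with at least one leaf labeled $\bar{s}$.

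The crux is the claim that $\CC(\Pi)$ must contain a configuration of the special form $(a : b_1, \dots, a, \dots, b_\delta)$ with $a, b_1, \dots, b_\delta$ all in $S$. Once this is shown, applying \autoref{lem:constructcert} a second time with $\bar{s}=a$ produces a certificate $\TT$ with $\Sigma_\TT\supseteq S$ and with some leaf labeled $a$; then by Definition~\ref{def:const-cert} the pair $(\TT,\,(a : b_1, \dots, a, \dots, b_\delta))$ is a certificate for $O(1)$ solvability, contradicting our hypothesis.

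To establish the claim, suppose towards a contradiction that no such configuration exists. If $|S|=1$ then the unique label $a\in S$ must admit a continuation below using only labels from $S$, forcing $(a:a,\dots,a)\in \CC(\Pi)$, which is already a special configuration---contradiction. Otherwise $|S|\geq 2$, and every configuration in $\CC(\Pi)$ that uses only labels from $S$ satisfies $a\neq b_i$ for every $i$. Running $A$ on a sufficiently tall balanced $\delta$-ary tree therefore produces, on every large interior induced subtree, a labeling in which every parent-child pair receives distinct labels from $S$---i.e.\ a proper vertex $|S|$-coloring with $|S|=O(1)$---in $T(n)\in o(\log^* n)$ rounds. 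By the standard embedding trick (embed the tree to be colored as the interior of a larger tree and simulate $A$), and because $\log^*(n^c)=\log^* n+O(1)$ for any constant $c$, this gives an $o(\log^* n')$ algorithm that $O(1)$-colors $n'$-node $\delta$-ary trees, contradicting the $\Omega(\log^* n)$ lower bound of \citet{Linial92}.

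For the randomized case I would follow the same outline, but with the probabilistic bookkeeping that appears in the proof of \autoref{nologstarcert}: define $S$ as the set of labels output with probability at least $1/(|\Sigma(\Pi)|\sqrt{n_0})$ at a node that sees neither the root nor any leaf, prove (by union bounds, as in that proof) a randomized analogue of \autoref{lem:constructcert} yielding certificates with label set containing $S$ and with a prescribed leaf label, and close the argument using the $\Omega(\log^* n)$ randomized lower bound for $O(1)$-coloring of \citet{Naor1991} in place of Linial's deterministic bound. The main obstacle here is precisely that randomized analogue of \autoref{lem:constructcert}: we have to stitch together copies of a single execution on a shared identifier skeleton so that local views near the would-be leaves of the certificate are indistinguishable from views occurring in legitimate $n_0$-node instances, which forces $A$ to produce a consistent labeling with sufficient probability---exactly the kind of gluing argument carried out in the proof of \autoref{nologstarcert}.
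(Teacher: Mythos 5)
Your deterministic argument is correct and is essentially the paper's proof in contrapositive form: the paper likewise splits on whether a special configuration $\aab$ can actually be used by the algorithm deep in the interior, invokes \autoref{lem:constructcert} (with $\bar{s}=a$) to manufacture the $O(1)$ certificate when it can, and otherwise observes that the algorithm properly colors all interior nodes and reduces to the $\Omega(\log^* n)$ lower bound for $O(1)$-coloring via the path-embedding trick. Your explicit treatment of the $|S|=1$ edge case is a small addition the paper handles implicitly, and your observation that the certificate from \autoref{lem:constructcert} uses exactly the label set $S$ (so that $a,b_i\in\Sigma_\TT$ and some leaf is labeled $a$) correctly discharges Definition~\ref{def:const-cert}.

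The randomized half is where your proposal stops short: you correctly identify that the argument needs a randomized analogue of \autoref{lem:constructcert} (or some other way to rule out a randomized algorithm that outputs a special configuration in the interior with non-negligible probability), but you do not supply it, so as written this half is a sketch rather than a proof. For comparison, the paper's own proof runs the identical reduction and closes the coloring step by citing the randomized $\Omega(\log^* n)$ lower bound of Naor alongside Linial's; it phrases the ``no special configuration appears in the interior, hence the interior is properly colored'' step as applying to any algorithm, and is itself terse about how the certificate-construction lemma extends to randomized algorithms. So your diagnosis of where the difficulty lies is accurate, but to match the theorem statement you would still need to either carry out the probabilistic gluing argument you allude to (defining $S$ as the labels output with probability at least $1/(k\sqrt{n_0})$ at interior nodes, as in the proof of \autoref{nologstarcert}, and showing that a special configuration occurring with such probability forces a certificate) or find another route to eliminate randomness, e.g.\ by reducing directly to the randomized coloring lower bound as the paper does.
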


\begin{proof}
  We consider two possible cases: either there is a configuration of the form  $(a : b_1,\dots,a,\dots,b_\dd)$ or not. In the latter case, each solution for $\Pi$ is such that all nodes have a label that is different from the labels of the neighbors, meaning that we can interpret such a labeling as a coloring from a constant size palette. Since an algorithm for $O(1)$-coloring $\dd$-ary rooted trees could be simulated in directed paths, by imagining $\dd-1$ additional nodes connected to each node of the path, and since $O(1)$-coloring in paths is known to require $O(\log^*n)$ rounds, even for randomized algorithms~\cite{Linial92,Naor1991}, then the claim follows.

  Hence, assume that there is a configuration of the form $(a : b_1,\dots,a,\dots,b_\dd)$, but there does not exist a certificate for $O(\log^* n)$ solvability that contains all the labels of the special configuration, and that $a$ is the label of at least one leaf. We prove that for any algorithm $A$ that solves $\Pi$, there exists some $n_0$, such that for any $n>n_0$, algorithm $A$ running on any instance of size $n$ must label all nodes that are at $\omega(\log^* n)$ distance from the root and from any leaf such that they have a different label from all their neighbors. In other words, algorithm $A$ computes a proper coloring in
  the intermediate layers. 
  
  Assume it is not the case, then there must exist an algorithm $A$ such that, for any $n_0$, there exists some $n > n_0$ such that on some instances of size $n$ it labels at least one node that is at $\omega(\log^* n)$ distance from the root and from any leaf by some configuration of the form $(a : b_1,\dots,a,\dots,b_\dd)$. This implies that also each label in $\{b_1,\dots,a,\dots,b_\dd\}$ is used by at least one node that is at $\omega(\log^* n)$ distance from the root and from any leaf, and hence, the requirements of Lemma~\ref{lem:constructcert} apply. By applying Lemma~\ref{lem:constructcert} with some $n_0$ large enough guaranteed to exist by the running time of the algorithm, we get that there exists a certificate for $O(\log^*n)$ solvability that contains all labels in $\{b_1,\dots,a,\dots,b_\dd\}$ and that uses $a$ in at least one leaf, contradicting the fact that there is no certificate for $O(1)$ solvability.
  
  Hence, in any solution for $\Pi$  constructed by an algorithm running in $O(\log^* n)$ rounds all nodes that are at $\omega(\log^* n)$ distance from the root and from any leaf are labeled such that they have a different label from all their neighbors, and hence that all these nodes are properly colored, in any instance that is large enough. Hence we can use any $O(\log^* n)$ algorithm for $\Pi$ to solve $O(1)$-coloring in paths, by creating a virtual graph in which we connect large enough trees to each node of the path and extend the path on the endpoints, such that no node sees any root or leaf, and then running the algorithm. This implies that $\Pi$ requires $\Omega(\log^* n)$ rounds, and hence the claim follows.
\end{proof}

\subsection{Decidability}

The only additional requirement for a problem that is $O(\log^*n)$ solvable to be constant-time solvable is the existence of configuration $\aab$ where $a,b_i$ are from $\Sigma_\TT$, that are the labels used by the certificate $\TT$, and at least one leaf in $\TT$ is labeled $a$.

\autoref{alg:unrestricted_certificate} allows us to search for a certificate builder containing a specific leaf, and by \autoref{lem:cbtocert} a certificate builder of this form implies a certificate of the same form. Hence, we can just augment \autoref{alg:certificate} to additionally search only for a certificate builder that would satisfy having a configuration of the form $\aab$ consisting of certificate labels, such that $a$ appears in at least one leaf. This is done in \autoref{alg:constant_certificate}.

\begin{algorithm}[tbp]
\DontPrintSemicolon
\caption{\label{alg:constant_certificate}$\constantCertificate(\Pi)$}
\KwIn{$\LCL$ problem $\Pi$}
\KwOut{$\epsilon$ if no certificate exists, or a certificate builder}
\BlankLine

\For{all subsets $\Sigma'$ of ${\Sigma(\Pi)}$} {
  $\Pi' \gets$ restriction of $\Pi$ to labels from $\Sigma'$\;
  \For{all configurations of the form $\aab$ in $\Pi'$}{
    $\certificateBuilder \gets \findUnrestrictedCertificate(\Pi',a)$ \;
    \If{$\certificateBuilder \neq \epsilon$} {
      \Return $\certificateBuilder$\;
    }
    }
}
\Return $\epsilon$\;
\end{algorithm}

\begin{theorem}
\label{constant_certificate_search_runtime}
The running time of \autoref{alg:constant_certificate} is exponential in the size of the $\LCL$ problem.
\end{theorem}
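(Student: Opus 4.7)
The plan is to bound the running time by multiplying the cost contributions of the three nested loops of \autoref{alg:constant_certificate}: the outer loop over subsets of $\Sigma(\Pi)$, the inner loop over configurations of the form $\aab$ in the restricted problem $\Pi'$, and the single call to $\findUnrestrictedCertificate$ inside. Since an ``at most exponential'' bound is closed under multiplication by an exponential or by a polynomial, it will suffice to bound each factor separately.

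First I would recall that by \autoref{certificate_search_runtime} (whose proof analyzes \autoref{alg:unrestricted_certificate} directly), each invocation of $\findUnrestrictedCertificate(\Pi',a)$ runs in time at most exponential in $|\Sigma(\Pi)|$ and $\dd$, regardless of the particular subset $\Sigma' \subseteq \Sigma(\Pi)$ chosen to form $\Pi'$ and of the choice of the distinguished label $a$. The restriction step that builds $\Pi'$ from $\Pi$ takes only polynomial time (one scans $\CC(\Pi)$ and keeps those configurations whose labels all lie in $\Sigma'$), and is therefore absorbed into the exponential bound.

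Next I would count the iterations of the outer two loops. The outer loop ranges over the $2^{|\Sigma(\Pi)|}$ subsets of $\Sigma(\Pi)$, contributing a single exponential factor. For each fixed $\Sigma'$, the set of configurations of the form $\aab$ in $\Pi'$ is a subset of $\CC(\Pi)$, and hence has cardinality at most $|\CC(\Pi)| \le |\Sigma(\Pi)|^{\delta+1}$, which is polynomial in the description length of $\Pi$; identifying the configurations of the special form takes polynomial time as well. Combining these with the previous paragraph gives a total running time bounded by
\[
2^{|\Sigma(\Pi)|} \cdot |\Sigma(\Pi)|^{\delta+1} \cdot \exp\bigl(\poly(|\Sigma(\Pi)|,\dd)\bigr),
\]
which is exponential in the size of the description of $\Pi$.

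There is no real obstacle here: the argument is essentially the same as the one used to prove \autoref{certificate_search_runtime} for \autoref{alg:certificate}, with one additional polynomial-size loop over configurations of the special form $\aab$. The only mild point to flag is that $\findUnrestrictedCertificate$ is now called with a nontrivial second argument $a$, but the analysis in the proof of \autoref{certificate_search_runtime} is oblivious to this choice, since that parameter only affects the initialization of $R_0$ and the Boolean bookkeeping, not the bound on $|R_i|$ or the cost per iteration.
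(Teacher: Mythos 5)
Your proposal is correct and takes essentially the same route as the paper, whose proof of this theorem is a one-line reference to the argument for \autoref{certificate_search_runtime}: bound the call to $\findUnrestrictedCertificate$ by an exponential, multiply by the $2^{|\Sigma(\Pi)|}$ subsets and the polynomially many configurations of the form $\aab$, and observe that the second argument $a$ does not affect the analysis. Your version just spells out the details the paper leaves implicit.
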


\begin{proof}
Follows by using the same arguments as for \autoref{alg:certificate} (\autoref{certificate_search_runtime}).
\end{proof}

\begin{theorem}\label{thm:constantcertdecidability}
\autoref{alg:constant_certificate} outputs a certificate builder if and only if an $\LCL$ problem has a certificate for\/ $O(1)$ solvability.
\end{theorem}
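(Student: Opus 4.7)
The plan is to prove both directions by leveraging the two main results about the $\log^* n$ certificate machinery, namely Lemma~\ref{lem:cbtocert} (builder yields a certificate with prescribed leaf label) and Theorem~\ref{thm:findlogstarcertifexists} (correctness of the search). The main work is bookkeeping: tracking how the subset $\Sigma'$ chosen by the outer loop corresponds to the label set $\Sigma_\TT$ of a certificate, and making sure the required configuration $\aab$ ends up inside $\Pi'$ in the right iteration.

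For the forward direction, suppose \autoref{alg:constant_certificate} returns a certificate builder $\CB$. Then by inspection of the pseudocode, this happens only in an iteration where some subset $\Sigma' \subseteq \Sigma(\Pi)$ and some configuration $\aab \in \CC(\Pi')$ have been fixed (with $\Pi'$ the restriction of $\Pi$ to $\Sigma'$), and where $\findUnrestrictedCertificate(\Pi', a)$ returned the non-$\epsilon$ value $\CB$. I would then apply \autoref{lem:cbtocert} to $\Pi'$ and $a$ to extract a uniform certificate $\TT$ for $\Pi'$ of $O(\log^* n)$ solvability such that at least one leaf of $\TT$ is labeled with $a$. Since $\TT$ uses only labels of $\Sigma'$ and only configurations in $\CC(\Pi') \subseteq \CC(\Pi)$, it is also an $O(\log^* n)$ certificate for $\Pi$ with $\Sigma_\TT \subseteq \Sigma'$, and the configuration $\aab$ sits entirely inside $\Sigma' \supseteq \Sigma_\TT$. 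Here one small detail to check is that $\{a, b_1, \dotsc, b_\dd\} \subseteq \Sigma_\TT$: the configuration labels are in $\Sigma'$, but a priori $\Sigma_\TT$ could be a proper subset of $\Sigma'$. I would resolve this by noting that in the builder produced by \autoref{alg:unrestricted_certificate}, the final returned structure must have $(\Sigma(\Pi'), \mathrm{true}) \in R_i$, so \autoref{lem:cbtocert} in fact produces a certificate whose root labels span all of $\Sigma'$, giving $\Sigma_\TT = \Sigma'$. Thus $(\TT, \aab)$ is a certificate for $O(1)$ solvability of $\Pi$, as required by Definition~\ref{def:const-cert}.

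For the backward direction, suppose $\Pi$ admits a certificate $\SX = (\TT, \aab)$ for $O(1)$ solvability, and let $\Sigma_\TT$ be the label set of $\TT$. I will show that in the iteration of the outer \textbf{for} loop with $\Sigma' = \Sigma_\TT$, the inner loop will find a winning configuration. By Definition~\ref{def:const-cert} we have $\{a, b_1, \dotsc, b_\dd\} \subseteq \Sigma_\TT = \Sigma'$, so the configuration $\aab$ lies in $\CC(\Pi')$, where $\Pi'$ is the restriction of $\Pi$ to $\Sigma'$. Hence the inner \textbf{for} loop over configurations will at some point consider this particular $\aab$, triggering a call to $\findUnrestrictedCertificate(\Pi', a)$. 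Now $\TT$ itself is an $O(\log^* n)$ certificate for $\Pi'$ (all its labels are in $\Sigma' = \Sigma_\TT$ and all its node configurations are in $\CC(\Pi') \supseteq \CC(\Pi) \cap \Sigma'^{\dd+1}$) and it has at least one leaf labeled $a$; applying the ``certificate $\Rightarrow$ builder'' direction established in the proof of \autoref{thm:findlogstarcertifexists} (which is proved precisely for \autoref{alg:unrestricted_certificate} with the leaf label passed as the $a$-argument, so it applies verbatim here), we obtain that the call returns a non-$\epsilon$ certificate builder. Consequently \autoref{alg:constant_certificate} returns a certificate builder in this iteration, if it has not already done so in an earlier one.

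The easy subtlety I will need to handle carefully is the interaction between the outer loop's choice of $\Sigma'$ and the $\Sigma_\TT$ of the supposed certificate: I have to verify that restricting $\Pi$ to $\Sigma_\TT$ preserves both the configuration $\aab$ and the certificate $\TT$ itself, which follows directly from Definition~\ref{def:restriction} because all labels involved are in $\Sigma_\TT$. No new combinatorial argument is required beyond reusing \autoref{lem:cbtocert} and \autoref{thm:findlogstarcertifexists}; the theorem is essentially a corollary of those two results together with the observation that the extra outer search over configurations $\aab$ exactly captures the supplementary requirement imposed by Definition~\ref{def:const-cert} on top of an ordinary $O(\log^* n)$ certificate.
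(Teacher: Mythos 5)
Your proposal is correct and follows essentially the same route as the paper, which proves this theorem by observing that \autoref{alg:constant_certificate} simply runs the search of \autoref{thm:findlogstarcertifexists} over all subsets $\Sigma'$ and all special configurations $\aab$, and then reuses the arguments of \autoref{thm:findlogstarcertifexists} (and, implicitly, \autoref{lem:cbtocert}) verbatim. Your write-up merely makes explicit the bookkeeping the paper leaves implicit (that $\Sigma_\TT = \Sigma'$ in the extracted certificate, and that choosing $\Sigma' = \Sigma_\TT$ in the converse direction makes the final membership test $(\Sigma(\Pi'), \mathrm{true}) \in R_i$ succeed), and both of these details check out.
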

\begin{proof}
  Since \autoref{alg:constant_certificate} tries to find a certificate builder for all subsets of labels for which a configuration of the form $\aab$ exists, then the statement follows by using the same arguments as in \autoref{thm:findlogstarcertifexists}.
\end{proof}

Hence we conclude the following theorem.

\begin{theorem}
\label{constant_certificate_search_runtime-main}
  Whether an $\LCL$ problem $\Pi$ has round complexity $O(1)$ or\/ $\Omega(\log^* n)$ can be decided in time at most exponential in the size of the $\LCL$ problem.
\end{theorem}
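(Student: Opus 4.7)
The plan is to prove this theorem by assembling the pieces already developed in Section~\ref{sec:sublogstar}: combine the algorithmic decidability result for certificates with the matching upper and lower bound theorems, so that running \autoref{alg:constant_certificate} becomes a decision procedure whose output we can interpret as a complexity class.

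Concretely, I would proceed as follows. Given an $\LCL$ problem $\Pi$, first run \autoref{alg:constant_certificate} on $\Pi$; by \autoref{constant_certificate_search_runtime} this terminates in time at most exponential in the size of $\Pi$. By \autoref{thm:constantcertdecidability}, the algorithm outputs a certificate builder if and only if $\Pi$ has a certificate for $O(1)$ solvability (in the sense of Definition~\ref{def:const-cert}). Thus we can use the output to branch: if a certificate builder is returned, declare the complexity to be $O(1)$; otherwise declare it $\Omega(\log^* n)$.

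It remains to justify that this decision is correct. In the positive case, \autoref{certificate_and_aab_implies_O(1)} guarantees that the existence of a certificate for $O(1)$ solvability implies a constant-time deterministic $\CONGEST$ algorithm for $\Pi$, so the complexity is indeed $O(1)$. In the negative case, \autoref{no1cert} guarantees that the non-existence of such a certificate forces $\Omega(\log^* n)$ rounds, even for randomized algorithms in $\LOCAL$. Together with \autoref{constant_certificate_search_runtime} these two directions yield the stated at-most-exponential decidability.

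Because every substantive piece is already in place, I do not anticipate a genuine obstacle here; the proof is essentially a one-paragraph chaining argument. The only mild subtlety to be careful about is to state the contract of \autoref{alg:constant_certificate} correctly, namely that it searches over all label subsets $\Sigma' \subseteq \Sigma(\Pi)$ and all special configurations of the form $\aab$, so that the quantification in Definition~\ref{def:const-cert} is faithfully captured; this was already handled in \autoref{thm:constantcertdecidability}, so I would simply cite it rather than re-prove it.
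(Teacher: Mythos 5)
Your proposal is correct and matches the paper's approach exactly: the paper derives this theorem as an immediate consequence of \autoref{constant_certificate_search_runtime}, \autoref{thm:constantcertdecidability}, \autoref{certificate_and_aab_implies_O(1)}, and \autoref{no1cert}, which is precisely the chaining argument you describe. Nothing further is needed.
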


\section{Polynomial region}\label{sec:poly-region}

In this section, we describe an infinite sequence of $\LCL$ problems $\Pi_1, \Pi_2, \ldots$ with $\delta = 2$ such that the complexity of $\Pi_k=(2, \Sigma_k, \CC_k)$  is  $\Theta(n^{1/k})$  in both of the $\LOCAL$ and $\CONGEST$ models, as the lower bound applies to $\LOCAL$ and the upper bound applies to $\CONGEST$.

The alphabet $\Sigma_k$ for $\Pi_k$ is
\[\Sigma_k = \{a_1, b_1, x_1, a_2, b_2, x_2, \ldots, a_k, b_k\},\]
and the set of permitted configurations $\CC_k$ for $\Pi_k$ is defined as follows.
\begin{itemize}
    \item For $1 \leq i \leq k$, add  $(a_i : \sigma, \sigma')$ to $\CC_k$ for all $\sigma, \sigma' \in \{a_1, b_1, x_1, a_2, b_2, x_2, \ldots, a_{i-1}, b_{i-1}, x_{i-1}\} \cup \{b_i\}$.
    \item For $1 \leq i \leq k$, add  $(b_i : \sigma, \sigma')$ to $\CC_k$ for all $\sigma, \sigma' \in \{a_1, b_1, x_1, a_2, b_2, x_2, \ldots, a_{i-1}, b_{i-1}, x_{i-1}\} \cup \{a_i\}$.
    \item For $1 \leq i \leq k-1$, add  $(x_i : \sigma, \sigma')$ to $\CC_k$ for all $\sigma \in \Sigma_k$ and $\sigma' \in \{a_1, b_1, x_1, a_2, b_2, x_2, \ldots, a_{i}, b_{i}\}$.
\end{itemize}

When $k = 1$, $\Pi_1$ is exactly the proper $2$-coloring problem with the two colors $\Sigma_1 = \{a_1, b_1\}$. When $k=2$, $\Pi_2$ is a combination of two proper $2$-coloring problems with the color sets $\{a_1, b_1\}$ and $\{a_2, b_2\}$ via the special label $x_1$. Whenever a node $v$ is labeled $x_1$, it must have at least one child $u$ such that the entire subtree rooted at $u$ is properly 2-colored by $\{a_1, b_1\}$. For general $k$, $\Pi_k$ can be seen as a combination of $k$ proper 2-coloring problems. See \autoref{fig:poly} for an illustration.

\begin{figure}
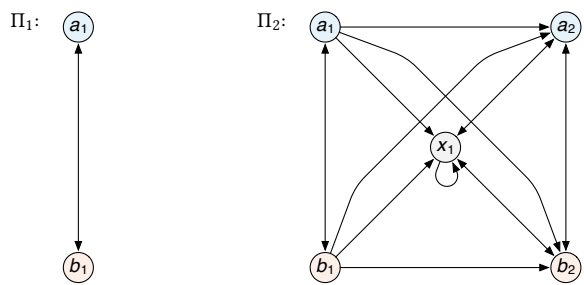

  \figlayout
  \centering
  $\Pi_1$:\quad
  \includegraphics[align=t,vshift=1mm,page=21,scale=\figscale]{figs.pdf}
  \hspace{2cm}
  $\Pi_2$:\quad
  \includegraphics[align=t,vshift=1mm,page=22,scale=\figscale]{figs.pdf}
	\caption{The automata associated with the path-forms of $\Pi_1$ and $\Pi_2$.}
	\label{fig:poly}
  \Description{}
\end{figure}

\begin{lemma}\label{lem:poly-upper-bound}
For each positive integer $k$, the round complexity of\/ $\Pi_k$ is $O(n^{1/k})$ in the $\CONGEST$ model.
\end{lemma}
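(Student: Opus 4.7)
The plan is to establish the upper bound by induction on $k$. The base case $k=1$ is immediate: $\Pi_1$ is ordinary $2$-colouring with palette $\{a_1, b_1\}$, solvable in $O(n) = O(n^{1/1})$ rounds by propagating the depth parity from the root.

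For the inductive step, assume $\Pi_{k-1}$ is solvable in $O(m^{1/(k-1)})$ rounds on any $m$-node rooted tree. Two structural properties drive the construction. First, $\Sigma_{k-1} \subsetneq \Sigma_k$ and $\CC_{k-1} \subsetneq \CC_k$, so any valid $\Pi_{k-1}$-labelling of a subtree is automatically a valid $\Pi_k$ partial labelling; recursion into $\Pi_{k-1}$ on subtrees is therefore safe. Second, the configuration family $(x_{k-1} : \sigma, \sigma')$ with $\sigma \in \Sigma_k$ and $\sigma' \in \Sigma_{k-1}$ makes $x_{k-1}$ a \emph{bridge} that hands off one child's subtree to $\Pi_{k-1}$ while leaving the other child free to carry arbitrary $\Sigma_k$-labels.

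The algorithm I would construct proceeds as follows. Set the threshold $\tau = \lceil n^{(k-1)/k}\rceil$. First, in $O(n^{1/k})$ rounds, perform a truncated convergecast so that every node $v$ learns either (i) the exact subtree size $|T_v|$, provided $T_v$ has at most $\tau$ nodes and lies within $n^{1/k}$ levels below $v$, or (ii) the fact that $T_v$ is too big or too deep; call $v$ \emph{short} in the first case and \emph{tall} in the second. Next, label short subtrees by recursively invoking $\Pi_{k-1}$; by the inductive hypothesis each such call finishes in $O(\tau^{1/(k-1)}) = O(n^{1/k})$ rounds, and all short subtrees can run in parallel. Then, for each tall node $v$ that has at least one short child, label $v$ with $x_{k-1}$, with the short child supplying the $\sigma'$-slot (its root label lies in $\Sigma_{k-1}$) and the other child supplying the $\sigma$-slot. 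The remaining tall nodes have two tall children; these \emph{branching} nodes form a subtree $B$ of the original tree, and a standard counting argument (each branching node is the root of two disjoint subtrees of size at least $\tau$) shows that $|B| = O(n/\tau) = O(n^{1/k})$, so $B$ has depth $O(n^{1/k})$. Label the nodes of $B$ by alternating $a_k, b_k$ according to depth parity within $B$, which takes $O(|B|) = O(n^{1/k})$ rounds by propagation from the root of $B$. Finally, I would verify that every boundary configuration produced lies in $\CC_k$: the transitions $(a_k : b_k, b_k)$, $(a_k : b_k, x_{k-1})$, $(a_k : \sigma,\sigma')$ with $\sigma,\sigma'\in\Sigma_{k-1}$, their $b_k$-analogues, and $(x_{k-1} : \sigma, \sigma')$ with $\sigma \in \Sigma_k,\ \sigma' \in \Sigma_{k-1}$ are all permitted by definition of $\CC_k$.

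The main obstacle I anticipate is making the short/tall dichotomy precise in $O(n^{1/k})$ CONGEST rounds: a deep-but-thin subtree may have few nodes while extending far below $v$, so $v$ cannot locally decide in $n^{1/k}$ rounds whether $|T_v|\le\tau$. The workaround is to classify $v$ as tall whenever its subtree extends more than $n^{1/k}$ levels below it, regardless of visible size; such a $v$ is still safe because either it has a short child (which certifies the $x_{k-1}$ option) or it is branching and its two tall children each guarantee $\Omega(n^{1/k})$ nodes of depth, which suffices to keep the branching-node count at $O(n^{1/k})$. A careful accounting argument, generalising the $k=2$ case where ``tall'' means height at least $\sqrt n$ and branching nodes form a $\sqrt n$-sized binary skeleton, will close the induction and give the overall bound $O(n^{1/k})$.
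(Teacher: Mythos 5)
Your labelling scheme (recurse into ``short'' subtrees with $\Pi_{k-1}$, put $x_{k-1}$ on tall nodes with a short child, and properly $2$-colour the branching nodes with $\{a_k,b_k\}$) is combinatorially sound, and it would work if every node could learn whether $|T_v|\le\tau$. The gap is exactly in the workaround you propose for the deep-but-thin case, and it is fatal for $k\ge 3$. Once a subtree can be declared ``tall'' merely because it extends more than $n^{1/k}$ levels down, a tall subtree is only guaranteed to contain $\Omega(n^{1/k})$ nodes, not $\Omega(\tau)=\Omega(n^{(k-1)/k})$ nodes. The disjointness argument then bounds the number of branching nodes only by $O(n/n^{1/k})=O(n^{1-1/k})$, which exceeds $n^{1/k}$ whenever $k\ge 3$; your sentence ``which suffices to keep the branching-node count at $O(n^{1/k})$'' is precisely the false step, and it is no accident that it checks out for $k=2$ where $n^{1-1/k}=n^{1/k}$. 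Concretely, take a spine of length $n^{1-1/k}$ and hang off each spine node a hairy path of length $2n^{1/k}$: every pendant subtree is ``tall'' by depth but has only $\Theta(n^{1/k})$ nodes, so every spine node has two tall children and is branching, none of them can take $x_{k-1}$ (neither child carries a $\Sigma_{k-1}$ label), and you are left having to properly $2$-colour a chain of $n^{1-1/k}$ consecutive parent--child branching nodes with $\{a_k,b_k\}$ --- a global problem on that chain, requiring $\Omega(n^{1-1/k})$ rounds.

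The paper sidesteps this by inverting your recursion: instead of one large threshold $\tau=n^{(k-1)/k}$ applied top-down, it applies the \emph{small} threshold $n^{1/k}$ at every one of $k$ bottom-up peeling iterations. In iteration $i$ it sets $B_i$ to be the nodes whose subtree \emph{within the surviving graph} has at most $n^{1/k}$ nodes (decidable in $O(n^{1/k})$ rounds, since such a subtree also has depth at most $n^{1/k}$), and $X_i$ to be the surviving nodes that either have a child in $B_i$ \emph{or have exactly one surviving child}. That second clause is what absorbs your counterexample: the spine survives the first iteration, but in the second iteration each spine node has exactly one surviving child and joins $X_2$. A counting argument shows the surviving set shrinks by a factor $n^{1/k}$ per iteration, so after $k$ iterations everything is in some $B_i$ or $X_i$, each $B_i$-component has $O(n^{1/k})$ nodes and can be $2$-coloured with $\{a_i,b_i\}$ locally, and each $X_i$-node has a legal $x_i$-configuration. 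If you want to rescue your induction, you would essentially have to rediscover this: the short/tall test must never look further than $O(n^{1/k})$ hops, which forces the threshold $n^{1/k}$ and hence $k$ rounds of peeling rather than one split plus recursion.
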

\begin{proof}
Let $V$ be the node set for a given $n$-node rooted tree $T$.
We show that in $O(n^{1/k})$ rounds, we can partition the set of nodes $V$ into $2k-1$ parts \[V = B_1 \cup X_1 \cup B_2 \cup X_2 \cup \cdots \cup X_{k-1} \cup B_k\] satisfying the following properties.
\begin{description}
    \item[(P1)] For each $1 \leq i \leq k$, each connected component of $B_i$ has at most $O(n^{1/k})$ nodes.
    \item[(P2)] For each $1 \leq i \leq k-1$,  at least one child of each $v \in X_i$  is in $B_1 \cup X_1  \cup B_2 \cup X_2 \cup \cdots \cup B_{i-1} \cup X_{i-1} \cup B_i$.
    \item[(P3)] For each $1 \leq i \leq k$,  the children of each $v \in B_i$ are in $B_1 \cup X_1  \cup B_2 \cup X_2 \cup \cdots \cup X_{i-1} \cup B_i$.
\end{description}

Once we have this partition, $\Pi_k$ can be solved in $O(n^{1/k})$ rounds by assigning $x_i$ to each $v \in X_i$ and labeling each connected component of $B_i$ by an arbitrary proper 2-coloring with $\{a_i, b_i\}$.

The algorithm for computing the partition $V = B_1 \cup X_1 \cup B_2 \cup X_2 \cup \cdots \cup X_{k-1} \cup B_k$ has $k$ iterations. We assume that at the beginning of iteration $i$, we have already computed the parts $B_1, X_1, B_2, X_2, \ldots, B_{i-1}, X_{i-1}$ in such a way that the set of remaining nodes $U_i = V \setminus ( B_1 \cup X_1 \cup B_2 \cup X_2 \cup \cdots \cup B_{i-1} \cup X_{i-1})$ satisfies the following induction hypothesis.
\begin{description}
    \item[(IH)] For each $1 \leq i \leq k$, we have $|U_i| \leq n^{1 - (i-1)/k}$.
\end{description}
Note that (IH) holds initially for $i = 1$, as $U_1 = V$ and $|V| = n$.

Suppose we are at the beginning of iteration $i$.
Consider the subtree $T_i$ induced by the remaining nodes $U_i$. For each $v \in U_i$, we write $N_v$ to denote the number of nodes in the subtree of $T_i$ rooted at $v$.
We compute the two parts $B_i$ and $X_i$ as follows. It is clear that the computation takes $O(n^{1/k})$ rounds.
\begin{itemize}
    \item If $i = k$, then $B_i = U_i$.
    \item If $1 \leq i < k$, then $B_i$ is the set of nodes $v \in U_i$ with $N_v \leq n^{1/k}$.
    \item If $1 \leq i < k$, then $X_i$  is the set of nodes $v \in U_i$ with $N_v > n^{1/k}$ that satisfy at least one of the following.
    \begin{itemize}
        \item $N_u \leq n^{1/k}$ for at least one child $u$ of $v$.
        \item $v$ has exactly one child in $T_i$.
    \end{itemize}
\end{itemize}

It is straightforward to verify that  the properties (P1), (P2), and (P3) are satisfied for $B_i$ and $X_i$.
\begin{itemize}
    \item Consider the first property (P1). If $i=k$, then clearly $|B_k| = |U_k| \leq n^{1/k}$ by (IH). If $1 \leq i < k$, then each $v \in B_i$ can have at most $n^{1/k}$ descendants in $T_i$, including $v$ itself, so each connected component of $B_i$ has at most $n^{1/k}$ nodes.
    \item The second property (P2) follows from the definition of $X_i$. There are two cases for each $v \in X_i$.
    The first case is that there is a child $u$ of $v$ with $N_u \leq n^{1/k}$, so $u \in B_i$.
    The second case is that $v$ has exactly one child in $T_i$, so the other child $u$ of $v$ is in $V \setminus U_i = B_1 \cup X_1 \cup B_2 \cup X_2 \cup \cdots \cup B_{i-1} \cup X_{i-1}$.  In both cases, (P2) is satisfied.
    \item For the third property (P3), consider any child $u$ of a node $v \in B_i$. If $u \in U_i$, then the definition of $B_i$ ensures that $u \in B_i$. Otherwise $u \in B_1 \cup X_1 \cup B_2 \cup X_2 \cup \cdots \cup B_{i-1} \cup X_{i-1}$. In both cases, (P3) is satisfied.
\end{itemize}

For the rest of the proof, we consider the case $1 \leq i < k$ and we will show that $|U_{i} \setminus (B_i \cup X_i)| = |U_{i+1}| \leq n^{1 - i/k}$, so the induction hypothesis (IH) holds for $U_{i+1}$. Observe that each $v \in U_{i+1}$ must have exactly two children in $U_i$, so the set $W$ of nodes $v \in U_i \setminus U_{i+1}$ whose parent belongs to $U_{i+1}$ has size $|W| \ge |U_{i+1}|$. It is clear that $W \subseteq X_i$. Since each $v \in X_i$ has $N_v > n^{1/k}$, we can lower bound the size of $U_i$ by $|U_i| \geq \sum_{v \in W} N_v > |W|n^{1/k} \ge |U_{i+1}| n^{1/k}$, so  $|U_{i+1}| <  |U_i| n^{-1/k} \leq  n^{1 - i/k}$, as $|U_i| \leq n^{1 - (i-1)/k}$ by (IH) for $U_{i}$.
\end{proof}

\begin{lemma}\label{lem:poly-lower-bound}
For each positive integer $k$, the round complexity of\/ $\Pi_k$ is\/ $\Omega(n^{1/k})$ in the $\LOCAL$ model.
\end{lemma}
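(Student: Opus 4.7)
The plan is to derive the lower bound entirely from the framework already established in \autoref{sec:superlog}, in particular \autoref{no-path-flexible-impl-n-lb}: we show that \autoref{alg:path_flexible_form_finder} applied to $\Pi_k$ terminates by returning $\epsilon$ after exactly $k$ iterations, yielding the desired $\Omega(n^{1/k})$ bound. This avoids having to redo the intricate hierarchical tree construction of the preceding section from scratch.

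First, I will set up an inductive claim describing the sequence of restrictions $\Pi_k = \Pi_k^{(0)}, \Pi_k^{(1)}, \dots, \Pi_k^{(k)} = \emptyset$ produced by the main loop of \autoref{alg:path_flexible_form_finder}. Specifically, I claim that after $i$ iterations the surviving label set is
\[
\Sigma^{(i)} = \{x_i, a_{i+1}, b_{i+1}, x_{i+1}, a_{i+2}, b_{i+2}, \ldots, a_k, b_k\},
\]
with the convention $x_0$ empty (i.e., $\Sigma^{(0)}=\Sigma_k$), and that the labels removed during iteration $i+1$ are exactly $\{x_i, a_{i+1}, b_{i+1}\}$ (with $x_0$ absent, so iteration $1$ removes $\{a_1,b_1\}$).

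Next, I verify the inductive step by inspecting the path-form automaton of the current restriction. For the label $x_i$, every configuration $(x_i:\sigma,\sigma')$ requires $\sigma' \in \{a_1,b_1,x_1,\dots,a_i,b_i\}$, all of which were removed in previous iterations, so $x_i$ has no outgoing transitions and is trivially path-inflexible. For $a_{i+1}$, the only surviving neighbors are $x_i$ and $b_{i+1}$, but since $x_i$ is a dead-end, every returning walk must alternate $a_{i+1} \leftrightarrow b_{i+1}$, forcing only even lengths and hence path-inflexibility; the argument for $b_{i+1}$ is symmetric. Conversely, every remaining label $x_j$ for $j \ge i+1$ is path-flexible because $x_j \to x_j$ is a valid path-form transition (the $\sigma$-child of an $x_j$-configuration may be any label still present), and every $a_j, b_j$ with $j \ge i+2$ is flexible because walks $a_j \to x_{j-1} \to a_j$ and $a_j \to x_{j-1} \to x_{j-1} \to a_j$ (using the self-loop at $x_{j-1}$) give returning walks of consecutive lengths $2$ and $3$. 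Thus the $\removePathInflexibleConfigurations$ step removes exactly $\{x_i, a_{i+1}, b_{i+1}\}$ and the induction carries through.

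Finally, after $k$ iterations the label set becomes empty, so \autoref{alg:path_flexible_form_finder} returns $\epsilon$ and \autoref{no-path-flexible-impl-n-lb} yields a deterministic and randomized $\LOCAL$ lower bound of $\Omega(n^{1/k})$ for $\Pi_k$. The main obstacle is purely bookkeeping: one must carefully track how restricting to $\Sigma^{(i)}$ affects the $\sigma'$-slot of each $x_j$-configuration and the children of $a_{i+1},b_{i+1}$, and confirm that the flexibility certificates for the surviving $x_j$ and $a_j, b_j$ ($j>i+1$) are preserved under the restriction. Once this is checked, combining with \autoref{lem:poly-upper-bound} gives the tight $\Theta(n^{1/k})$ characterization.
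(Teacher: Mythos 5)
Your proposal is correct and follows essentially the same route as the paper: the paper's proof likewise observes that \autoref{alg:path_flexible_form_finder} takes exactly $k$ iterations to return $\epsilon$ on $\Pi_k$, removing $\{a_1,b_1\}$ first and $\{x_{i-1},a_i,b_i\}$ at iteration $i>1$, and then invokes the $\Omega(n^{1/k})$ lower bound from the superlogarithmic section. Your verification of the inductive step (the dead-end at $x_i$ forcing even-length returning walks for $a_{i+1},b_{i+1}$, and the self-loops at surviving $x_j$ giving flexibility) is a correct fleshing-out of details the paper only asserts.
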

\begin{proof}
Observe that, given the $\LCL$ problem $\Pi_k$,
\autoref{alg:path_flexible_form_finder} takes exactly $k$ iterations to output $\epsilon$. For the first iteration, the labels $\{a_1, b_1\}$ are path-inflexible in $\Pi_k$. For iteration $1 < i \leq k$, $\{x_{i-1}, a_i, b_i\}$ are  path-inflexible in $\Pi_k$ restricted to the labels $\Sigma_k \setminus \{a_1, b_1, x_1, a_2, b_2, x_2, \ldots, a_{i-1}, b_{i-1}\}$. Therefore, the round complexity of\/ $\Pi_k$ is\/ $\Omega(n^{1/k})$ by \autoref{lem-no-path-flexible-lb}.
\end{proof}

Combining \autoref{lem:poly-upper-bound} and \autoref{lem:poly-lower-bound}, we conclude the  following theorem.

\begin{theorem}\label{lem:poly-main}
For each positive integer $k$, the round complexity of\/ $\Pi_k$ is\/ $\Theta(n^{1/k})$ in both $\LOCAL$ and $\CONGEST$.
\end{theorem}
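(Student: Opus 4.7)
The plan is to observe that this theorem is an immediate consequence of the two preceding lemmas, together with the standard relationship between the $\LOCAL$ and $\CONGEST$ models. Specifically, \autoref{lem:poly-upper-bound} provides an $O(n^{1/k})$ upper bound in $\CONGEST$, and \autoref{lem:poly-lower-bound} provides a matching $\Omega(n^{1/k})$ lower bound in $\LOCAL$.

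First, I would recall that any $\CONGEST$ algorithm is in particular a $\LOCAL$ algorithm (messages of size $O(\log n)$ are a special case of unbounded messages). Consequently, the $O(n^{1/k})$ upper bound established in $\CONGEST$ by \autoref{lem:poly-upper-bound} transfers immediately to $\LOCAL$, giving an $O(n^{1/k})$ upper bound in $\LOCAL$ as well.

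Dually, the $\Omega(n^{1/k})$ lower bound in $\LOCAL$ given by \autoref{lem:poly-lower-bound} transfers immediately to $\CONGEST$: any $\CONGEST$ algorithm for $\Pi_k$ is also a $\LOCAL$ algorithm, so if there were a faster $\CONGEST$ algorithm it would contradict the $\LOCAL$ lower bound. Hence $\Pi_k$ requires $\Omega(n^{1/k})$ rounds in $\CONGEST$ as well.

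Combining the matching upper and lower bounds in both models yields that the round complexity of $\Pi_k$ is $\Theta(n^{1/k})$ in both $\LOCAL$ and $\CONGEST$, which is exactly the statement of the theorem. There is no real obstacle here since all the work has already been done in the two preceding lemmas; this final theorem is essentially a bookkeeping step that packages the upper bound (proved in the weaker model, $\CONGEST$) and the lower bound (proved in the stronger model, $\LOCAL$) into a single tight characterization.
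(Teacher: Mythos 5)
Your proposal is correct and matches the paper's proof, which likewise just combines \autoref{lem:poly-upper-bound} and \autoref{lem:poly-lower-bound}, using the standard fact that the $\CONGEST$ upper bound carries over to $\LOCAL$ and the $\LOCAL$ lower bound carries over to $\CONGEST$.
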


\section{Future work}
While we completely characterize all complexities for $\LCL$s in rooted trees in both $\LOCAL$ and $\CONGEST$, for both deterministic and randomized algorithms, and we show that we can decide what is the complexity of a given problem, there are many questions that are left open.

The first question regards the running time of the algorithm that tries to find a certificate for $O(\log^* n)$ solvability. The current running time is exponential, and an open question is whether we can find such a certificate in polynomial time, or if we can prove that e.g.\ deciding the existence of a certificate is an NP-hard problem.

The second question regards the complexity class of $n^{\Theta(1)}$. While we present a practical algorithm that determines if the complexity is $n^{\Theta(1)}$, our algorithm does not determine the precise complexity. Whether there is an efficient algorithm for finding the precise value of $k$ such that the complexity is $\Theta(n^{1/k})$  remains open.

Another natural question regards extending our results to unrooted trees. While decidability is known in the $\Omega(\log n)$ region, it is known to require exponential time \cite{Chang20}. In our setting, we can decide if a problem is $O(\log n)$ or $n^{\Omega(1)}$ in polynomial time via a characterization based on the existence of a minimal absorbing subgraph. Coincidentally, Brandt~et~al.~\cite[Section 6]{brandt2021local} also recently proved that essentially the same characterization characterizes whether a problem is $O(\log n)$ or $n^{\Omega(1)}$ for regular unrooted trees. With some minor modification, our polynomial-time algorithm that decides if a problem is $O(\log n)$ or $n^{\Omega(1)}$ can also be adapted to the setting of regular unrooted trees. However, deciding if a problem on regular trees requires $O(1)$, $\Theta(\log^* n)$, or $\Omega(\log n)$ rounds remains a major open question.

\begin{acks}
We would like to thank Juho Hirvonen, Henrik Lievonen, Yannic Maus, and Mika\"el Rabie for discussions and comments, and anonymous reviewers for their helpful feedback on previous versions of this work. We also wish to acknowledge CSC -- IT Center for Science, Finland, for computational resources.
\end{acks}

\bibliographystyle{ACM-Reference-Format}
\bibliography{references}

\end{document}